\newtheorem{thm}{Theorem}
\newtheorem{lem}{Lemma}
\newtheorem{prop}{Proposition}
\newtheorem{cor}{Corollary}
\newtheorem*{rep@theorem}{\rep@title}
\newcommand{\newreptheorem}[2]{%
\newenvironment{rep#1}[1]{%
 \def\rep@title{#2 \ref{##1}}%
 \begin{rep@theorem}}%
 {\end{rep@theorem}}}
\theoremstyle{definition}
\newtheorem{defn}{Definition}
\newtheorem{remark}{Remark}
\newcommand{\mathbbX}{\mathbb{X}}
\newcommand{\mathbbD}{\mathbb{D}}
\newcommand{\calR}{\mathcal{R}}
\newcommand{\calA}{\mathcal{A}}
\newcommand{\calB}{\mathcal{B}}
\newcommand{\calM}{\mathcal{M}}
\newcommand{\calC}{\mathcal{C}}
\newcommand{\calF}{\mathcal{F}}
\newcommand{\calP}{\mathcal{P}}
\newcommand{\conv}{\mathsf{conv}}
\newcommand{\sE}{\mathsf{E}}
\renewcommand{\tilde}{\widetilde}
\newcommand{\Reals}{\mathbb{R}}
\newcommand{\eps}{\varepsilon}
\newcommand{\calN}{\mathcal{N}}
\newcommand{\sEs}{\sE_{e^\eps}}
\newcommand{\eqn}[2]{\begin{equation}
\label{#1}
#2
\end{equation}}
\definecolor{light-gray}{gray}{.90}
\newcommand*{\addFileDependency}[1]{
  \typeout{(#1)}
  \@addtofilelist{#1}
  \IfFileExists{#1}{}{\typeout{No file #1.}}
}
\begin{document}

	\title{\vspace{5.5mm}   Three Variants of Differential Privacy: Lossless Conversion and Applications\thanks{This work was supported in part by NSF under grants CIF 1922971, 1815361, 1742836, 1900750, and CIF CAREER 1845852. Part of the results in this paper was presented at the International Symposium on Information Theory 2020 \cite{HundredRounds}.}}

\author{%
Shahab Asoodeh, Jiachun Liao, Flavio P. Calmon, Oliver Kosut, and Lalitha Sankar\thanks{S. Asoodeh and F. P. Calmon are with School of Engineering and Applied Science, Harvard University (e-mails: \{shahab, flavio\}@seas.harvard.edu). J. Liao, O. Kosut, and L. Sankar are with School of Electrical, Computer, and Energy Engineering at Arizona State University (e-mails: \{jiachun.liao, okosut, lalithasankar\}@asu.edu) }
}	
	\date{}
	\maketitle


\begin{abstract}
    We consider three different variants of differential privacy (DP), namely approximate DP,  R\'enyi DP (RDP), and hypothesis test DP. In the first part, we develop a machinery for optimally relating approximate DP to RDP based on the joint range of two $f$-divergences that underlie the approximate DP and RDP. In particular, this enables us to derive  the  optimal approximate DP parameters  of  a  mechanism  that  satisfies  a  given  level  of  RDP. As an application, we apply our result to the moments accountant framework for characterizing privacy guarantees of noisy stochastic gradient descent (SGD). When compared to the state-of-the-art, our bounds may lead to about 100 more stochastic gradient descent iterations for training deep learning models for the same privacy budget. In the second part, we establish a relationship between RDP and hypothesis test DP which allows us to translate the RDP constraint into a tradeoff between type I and type II error probabilities of a certain binary hypothesis test.  We then demonstrate that for noisy SGD our result leads to tighter privacy guarantees compared to the recently proposed $f$-DP framework for some range of parameters.
\end{abstract}

\section{Introduction}

Differential privacy (DP) \cite{Dwork_Calibration} has become the \emph{de facto} standard for privacy-preserving data analytics. Intuitively, a randomized algorithm is said to be \emph{differentially private} if its output does not vary significantly with small perturbations of the input.  DP guarantees are usually cast in terms of properties of the difference of the \emph{information density} \cite{Pinsker1964InformationAI} of the algorithm's output and two different inputs---referred to as the \textit{privacy loss} random variable in the DP literature.  
In fact, several variants of DP has been proposed based on different properties of privacy loss random variable. Informally speaking, a mechanism is said to satisfy $(\eps, \delta)$-DP \cite{Dwork_Calibration} if the privacy loss random variable is bounded by $\eps$ with probability $1-\delta$. A mechanism is said to be $(\alpha, \gamma)$-\emph{R\'enyi differential privacy} (RDP) \cite{RenyiDP} if the $\alpha$th moment of the privacy loss random variable is upper bounded by $\gamma$; see Sec.~\ref{Sec:Preliminary} for more details.

Several methods have recently been proposed to ensure differentially private training of machine learning (ML) models \cite{Abadi_MomentAccountant, Shokri:DeepLearning, chaudhuri2011differentially, Bassily1, Balle:Subsampling, SGD_Exponential_VS_Gaussian}. Here, the parameters of the model determined by a learning algorithm (e.g.,  weights of a neural network or coefficients of a regression) are sought to be differentially private with respect to the data used for fitting the model (i.e. the  \emph{training} data).  When the model parameters are computed by applying stochastic gradient descent (SGD) to minimize  a given  loss function, DP can be ensured by directly adding noise to the gradient. 
The empirical and theoretical flexibility of this  noise-adding procedure for ensuring DP  was demonstrated, for example, in \cite{Shokri:DeepLearning, Abadi_MomentAccountant}. This method is currently being used for privacy-preserving training of large-scale ML models in industry, see e.g., the implementation of \cite{McMahan_Privacy} in the Google's open-source TensorFlow Privacy
framework \cite{TensorFlowPrivacy}.

Not surprisingly,  for a fixed training dataset, privacy deteriorates with each SGD iteration. In practice, the DP  constraints (i.e., $\eps$ and $\delta$) are set \emph{a priori}, and then mapped to a permissible number of  SGD  iterations for fitting the model parameters. Thus, a key question is: \emph{given a DP constraint, how many  iterations are allowed before the SGD algorithm is no longer private?} 
The main challenge in determining the DP guarantees provided by  noisy  SGD is keeping track of the evolution of the privacy loss random variable during subsequent gradient descent iterations.  %
This can be done, for example, by invoking  advanced composition theorems for DP, such as  \cite{Vadhan_Composition, Kairouz_Composition}. Such composition results, while theoretically significant,  may be loose due to their generality (e.g., they do not take into account the noise distribution used by the privacy mechanism). 

Recently, Abadi et al.\ \cite{Abadi_MomentAccountant} circumvented the use of DP composition results by developing a method called \textit{moments accountant} (MA). Instead of dealing with DP directly, the MA approach provides privacy guarantees in terms of RDP for which  composition has a simple linear form \cite{RenyiDP}. Once the privacy guarantees of the SGD execution are determined in terms of RDP, they are mapped back to DP guarantees in terms of $\eps$ and $\delta$ via a relationship between DP and RDP \cite[Theorem 2]{Abadi_MomentAccountant} allowing for converting from one to another. This approach renders tighter DP guarantees than those obtained from  advanced composition theorems (see \cite[Figure 2]{Abadi_MomentAccountant}).
Nevertheless, the existing conversion rules between RDP and DP are loose. In this work, we provide a framework which settles the \textit{optimal} conversion between RDP and DP, and thus further enhances the privacy guarantee obtained by the MA approach.   
Our technique relies on the information-theoretic study of joint range of $f$-divergences: we first describe both DP and RDP using two certain types of the $f$-divergences, namely $\sE_\lambda$ and $\chi^\alpha$ divergences (see Section~\ref{Sec:Preliminary}). We then apply \cite[Theorem 8]{Harremoes_JOintRange} to characterize the joint range of these two $f$-divergences which, in turn, leads to the ``optimal'' conversion between RDP and DP (see Section~\ref{Sec:Conversion}).  
Specifically, this optimal conversion allows us to derive bounds on the number of noisy SGD iterations for a given DP parameters $\eps$ and $\delta$. Our result improves upon the state-of-the-art \cite{Abadi_MomentAccountant} by allowing more training iterations (often hundreds more) for the same privacy budget, and thus providing higher utility for free (see Section~\ref{Sec:Gaussian}).

In the second part of this work, we revisit another variant of DP based on binary hypothesis testing. Consider an attacker who, given a mechanism's output, aims to determine if a certain individual (say Alice) has participated in the input dataset. This goal can be thought of as a hypothesis testing problem: rejecting the null hypothesis corresponds to the absence of Alice in the input dataset.  
It is well-known that $(\eps, \delta)$-DP is equivalent to enforcing that the type II error probability of \textit{any} (possibly randomized) such test at significance level (or type I error probability) $\tau$ is lower bounded by $1-\delta - e^\eps \tau$ \cite{Wasserman,Kairouz_Composition}. Thus, for small $\eps$ and $\delta$, any test is 
essentially powerless, i.e., it is impossible to have both small type I and type II error probabilities.  
This view of privacy (which we henceforth call \textit{hypothesis test DP}) brings an operational interpretation for DP. This notion of privacy has recently been parameterized by a convex and decreasing function $f:[0,1]\to [0,1]$ that specifies the tradeoff between type I and type II error probabilities. A mechanism is said to be $f$-DP \cite{GaussianDP} if, given a mechanism's output, the type II error probability of any test for a given significance level $\tau$ is lower bounded by $f(\tau)$. Thus, if $f(\tau)$ is approximately $1-\tau$, then any tests will be essentially powerless.  This new definition is shown to provide easily interpretable privacy guarantees. This is in sharp contrast with RDP whose privacy guarantee does not enjoy a clear interpretation (see \cite{Balle2019HypothesisTI} for more details). 

Our goal is to address the interpretability issue of RDP by relating RDP to $f$-DP. We first prove an explicit expression for the RDP guarantee of mechanism in terms of the  type I and type II probabilities corresponding to the ``optimal'' test (given by Neyman-Pearson lemma). We remark that our expression is similar to an unproved formula that appeared first in \cite[Eq. (2.79)]{Polyanskiy_thesis}. Conversely, we develop a machinery to implicitly relate RDP constraint to $f$-DP by constructing an achievable region of type I and type II error probabilities among all tests. This relationship is in particular interesting for the privacy analysis of iterative ML algorithm in that it converts the simple linear composition property of RDP to an interpretable privacy guarantee in terms of $f$-DP. Another approach for deriving an interpretable and tight privacy guarantee for ML algorithms is to resort to the general composition result of $f$-DP \cite[Theore 3.2]{GaussianDP}. This approach is advocated in \cite{GaussianDP_deep} for the privacy analysis of noisy SGD in training neural networks. We compare our results with \cite{GaussianDP, GaussianDP_deep} in two different directions: 

\begin{itemize}
    \item  The $f$-DP guarantee can be easily related to $(\eps, \delta)$-DP (see \cite[Proposition 2.12]{GaussianDP}). It is argued in \cite[Theorems 1 and 2]{GaussianDP_deep} that $f$-DP guarantee of SGD \textit{always} yields a strictly stronger $(\eps, \delta)$-DP guarantee than what would be obtained by moments accountant.  We empirically show that this does not hold if one incorporates our optimal RDP-to-DP conversion rule into the moments accountant framework; i.e., the \textit{improved} moments accountant might outperform $f$-DP, see Fig.~\ref{fig:MA_GDP}. 
    \item Rather then using the general composition results of $f$-DP, we propose to apply the linear composability of RDP and then convert the resulting guarantee to $f$-DP. Focusing on SGD with Gaussian noise, we demonstrate that there exists a threshold for variance below which our approach strictly outperforms  $f$-DP, see Fig.~\ref{fig:RDP_Tradeoff_SGD_sigmaT} and Fig.~\ref{fig:RDP_Tradeoff_SGD}. 
\end{itemize}


\subsection{Related Work}
Since the introduction of the approximate DP in \cite{Dwork_Calibration}, it has been extensively studied especially for iterative ML algorithms, see \cite{SGD_Exponential_VS_Gaussian, chaudhuri2011differentially, Chaudhuri_Subsampling, BAssily2, BAssily_NIPS19, Chaudhuri_logistic, DP_OnlineLearning,Thakurta_LASSO,Sarwate_SGD_Update,duchi2013local,Jain_risk_bound, Smith_interaction,Talwar_Privacte_LASSO,Private_ERM_General} to name a few. 
Perhaps one of the most fundamental
primitive in statistical privacy is the study of composition; how privacy degrades under as the algorithm iterates. 
There are still continued efforts to better understand the composition of DP. The advanced composition result for DP was derived \cite{Vadhan_Composition}. In a pioneering work, \cite{Kairouz_Composition} obtained an optimal homogeneous composition theorem for $(\eps, \delta)$-DP. It is, however, shown to be  $\#$P hard to compute the DP parameters under heterogeneous composition \cite{Vadhan_Murtagh}. A substantial recent effort has been devoted to relaxing the DP constraints using divergences between probability distributions to address the weakness of $(\eps, \delta)$-DP in handling
composition \cite{Abadi_MomentAccountant, Concentrated_Dwork,ZeroDP, RenyiDP,Bun:tCDP,Balle_AnalyticMomentAccountant}. 
For instance, \cite{Abadi_MomentAccountant, Concentrated_Dwork,ZeroDP, RenyiDP} considered R\'enyi divergence and showed that the optimal privacy parameters under composition have simple linear forms. Once composition is handled, the resulting privacy parameters are converted to $(\eps, \delta)$-DP via some conversion rule, e.g.,  \cite[Theorem 2]{Abadi_MomentAccountant}, \cite[Proposition 1.3]{ZeroDP}, and \cite[Proposition 3]{RenyiDP}. This technique significantly improves on earlier privacy analysis of SGD.
This technique has been extended by follow-up work \cite{McMahan_Privacy}. 
More recently, a new relaxed version of DP (not divergence-based), termed $f$-DP was proposed in \cite{GaussianDP} and shown to to enjoy a rather simple composition property. This new definition of DP was used in \cite{GaussianDP_deep} for the privacy analysis in training deep neural networks.

\subsection{Paper Organization} 
In Section~\ref{Sec:Preliminary}, we provide several preliminary definitions and results and also mathematically formulate our main goals. In Section~\ref{Sec:Conversion}, we characterize the optimal relationship between RDP and DP and apply it to the moments accountant framework in Section~\ref{Sec:Gaussian}. The content of Sections~\ref{Sec:Conversion} and \ref{Sec:Gaussian} appeared in the conference version \cite{HundredRounds} without proofs.  Section~\ref{Section:BHT} concerns the second main goal of the paper, that is, deriving a relationship between RDP and hypothesis test DP.         
 
\subsection{Notation}
We denote by $\mathbbD$ the universe of all possible datasets and by $(\mathbbX, \calF)$ a measurable space with Borel $\sigma$-algebra $\calF$. We also use $\mathcal P(\mathbb X)$ to denote the set of all probability measures on $\mathbb X$. 
We use capital letters, e.g., $X$ to denote random variables. We write $X\sim P$ to describe the fact that $X$ is distributed according to $P$. We also use the notation $\sim$ to indicate the neighborhood relationship between datasets, i.e., given two datasets $d$ and $d'$, we write $d\sim d'$ if their Hamming distance is equal to one.
For a pair of distributions $P$ and $Q$ and constant $\alpha\geq 1$, we let \begin{equation}
    D_\alpha(P\|Q)\coloneqq \frac{1}{\alpha-1}\log\mathbb E_Q\Big[\big(\frac{\text{d}P}{\text{d}Q}\big)^\alpha\Big]
\end{equation} denote the R\'enyi divergence of order $\alpha$. Also, given a real-valued convex function $f$ satisfying $f(1)=0$, the $f$-divergence \cite{Csiszar67, Ali1966AGC} between $P$ and $Q$ is defined as 
\begin{align}\label{eq:fdivergence}
		D_f(P\|Q) \coloneqq \mathbb \mathbb \mathbb E_Q\Big[f\big(\frac{\text{d}P}{\text{d}Q}\big)\Big].
	\end{align}
For any real number $a$, we write $(a)_+$ for $\max\{a, 0\}$ and for $a\in [0,1]$, we write $\bar a$ for $1-a$.

\section{Preliminaries and Problem Setup}\label{Sec:Preliminary}
In this section, we revisit several definitions and basic results that will be key for the discussion in the subsequent sections.  A mechanism $\calM: \mathbbD\to \calP(\mathbbX)$ assigns a probability distribution $\calM_d$ to each dataset $d\in \mathbbD$. Given a pair of neighboring datasets $d\sim d'$, the privacy loss random variable is defined as $L_{d,d'}\coloneqq \log\frac{\text{d}\calM_d}{\text{d}\calM_{d'}}(X)$ where $X\sim \calM_d$ and $\frac{\text{d}\calM_d}{\text{d}\calM_{d'}}$ represents the Radon-Nikodym derivative. 
Given an output of the mechanism $\calM$ and a pair $d\sim d'$,  consider the following problem of testing $\calM_d$ against $\calM_{d'}$: 
\begin{align}\label{Hypothesis}
    H_0:~ X \sim \calM_d~~~\text{vs.}~~~
    H_1:~ X \sim \calM_{d'}.
\end{align}
Let $\beta^{dd'}_\calM:[0,1]\to [0,1]$ denote the \textit{optimal} tradeoff between type I error  (i.e., the probability of declaring $H_1$ when the truth is $H_0$) and type II error  (i.e., the probability of declaring $H_0$ when the truth is $H_1$). More specifically,  $\beta^{dd'}_\calM(\tau)$ is the smallest type II error when type I error  equals $\tau$. The mapping $\tau\mapsto \beta^{dd'}_\calM(\tau)$ is sometimes called the \textit{tradeoff function}.

\begin{defn}\label{Def_RDP}
A mechanism $\calM:\mathbbD\to \calP(\mathbbX)$ is said to be
\begin{itemize}
     \item $(\eps, \delta)$-DP \cite{Dwork_Calibration} for $\eps\geq 0$ and  $\delta\in [0,1)$ if 
    \eqn{DP}{\sup_{A\in \calF, d\sim d'}\calM_{d}(A)-e^\eps\calM_{d'}(A)\leq \delta.}
    \item $(\alpha, \gamma)$-RDP \cite{RenyiDP} for $\alpha>1$ and $\gamma\geq 0$ if 
    \eqn{RDP}{\sup_{d\sim d'}D_\alpha(\calM_{d}\|\calM_{d'})\leq \gamma.}
    \item $f$-DP \cite{GaussianDP} for a convex and non-increasing function\footnote{Both $f$-DP and $f$-divergence are defined in terms of convex functions. In order to be consistent with their original notation, we use $f$ to denote the function in both definitions. It will be clear from the context and as a result will not lead to confusion.} $f:[0,1]\to [0,1]$ if  for all $\tau\in [0,1]$
     \begin{equation}\label{Def:f_DP}
         \inf_{d\sim d'}\beta_\calM^{dd'}(\tau)\geq f(\tau).
     \end{equation}
\end{itemize}
\end{defn}

\begin{figure}[t]
    \centering
    \includegraphics[scale = 0.4]{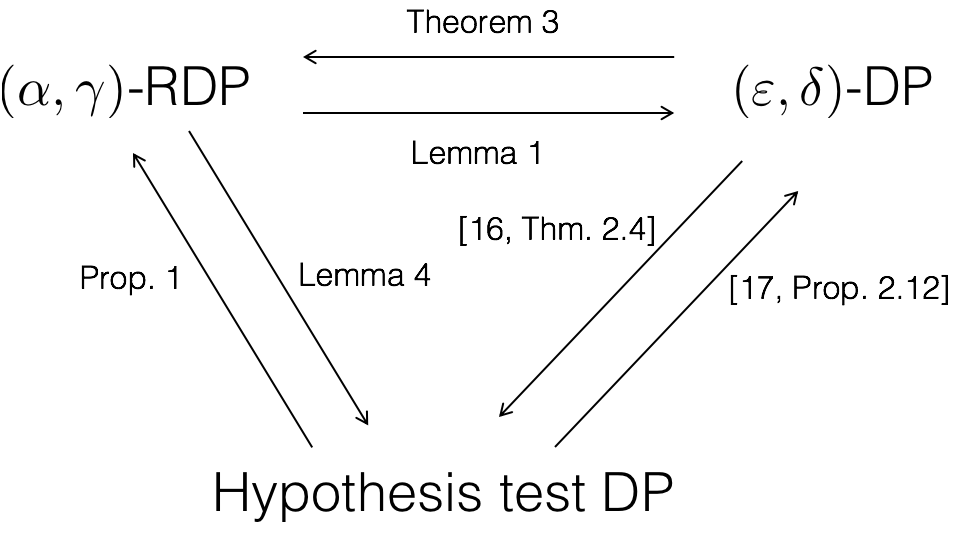}
    \caption{The diagrammatic summary of the key relationships between three variants of DP studied in the paper. }
    \label{fig:diagram}
\end{figure}
 It can be shown that \eqref{DP} is implied if the tail event $\{L_{d,d'}>\eps\}$ occurs with probability at most $\delta$ for all $d\sim d'$,  and \eqref{RDP} is implied if (and only if) the $\alpha$th moment of $L_{d,d'}$ is upper bounded by $\gamma$. 
 It is worth noting that the definition of RDP is closely related to \textit{zero-concentrated} DP \cite{ZeroDP, Concentrated_Dwork}. 
Different properties of these two variants of DP have been extensively studied. One well-studied property of these two definitions is the composition (to be discussed in details in Section~\ref{Sec:Gaussian}). As mentioned earlier, RDP tightly handles composition as opposed to the existing composition theorems for $(\eps, \delta)$-DP \cite{Vadhan_Composition, Kairouz_Composition} known to be either loose for many practical mechanisms or intractable to compute \cite{Vadhan_Murtagh}.  With this clear advantage comes a shortcoming: RDP suffers from the lack of operational interpretation, see e.g., \cite{Balle2019HypothesisTI}. 
To address this issue, the RDP guarantee is often translated into a DP guarantee via the following result. 
\begin{thm}[{{\cite[Thm 2]{Abadi_MomentAccountant}}}]
\label{Theorem:RDP_Delta}
If the mechanism $\calM$ is $(\alpha, \gamma)$-RDP, then it satisfies $(\eps, \delta)$-DP for any $\eps>\gamma$ and 
\begin{equation}\label{Eq:Delta1}
    \delta = e^{-(\alpha-1)(\eps- \gamma)}.
\end{equation}

\end{thm}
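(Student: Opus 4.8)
The plan is to convert the moment bound that defines $(\alpha,\gamma)$-RDP into a uniform tail bound on the privacy loss random variable by a Chernoff-type argument, and then invoke the elementary implication recorded just after Definition~\ref{Def_RDP} (that a uniform bound $\sup_{d\sim d'}\Pr[L_{d,d'}>\eps]\le\delta$ yields $(\eps,\delta)$-DP).

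First I would fix a neighboring pair $d\sim d'$ and abbreviate $L\coloneqq L_{d,d'}=\log\frac{\mathrm{d}\calM_d}{\mathrm{d}\calM_{d'}}(X)$ with $X\sim\calM_d$; finiteness of $\gamma$ forces $\calM_d\ll\calM_{d'}$, so $L$ is well defined. The hypothesis $(\alpha,\gamma)$-RDP means $D_\alpha(\calM_d\|\calM_{d'})\le\gamma$, i.e.\ $\mathbb E_{\calM_{d'}}\big[(\tfrac{\mathrm{d}\calM_d}{\mathrm{d}\calM_{d'}})^\alpha\big]\le e^{(\alpha-1)\gamma}$. Performing the change of measure from $\calM_{d'}$ to $\calM_d$ gives the identity $\mathbb E_{\calM_{d'}}\big[(\tfrac{\mathrm{d}\calM_d}{\mathrm{d}\calM_{d'}})^\alpha\big]=\mathbb E_{\calM_d}\big[(\tfrac{\mathrm{d}\calM_d}{\mathrm{d}\calM_{d'}})^{\alpha-1}\big]=\mathbb E_{X\sim\calM_d}\big[e^{(\alpha-1)L}\big]$, so the RDP constraint is precisely a bound on the moment generating function of $L$ evaluated at the point $\alpha-1>0$.

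Next, since $\alpha-1>0$, Markov's inequality applied to the nonnegative random variable $e^{(\alpha-1)L}$ yields, for any $\eps>0$,
\[
\Pr_{X\sim\calM_d}\big[L>\eps\big]=\Pr_{X\sim\calM_d}\big[e^{(\alpha-1)L}>e^{(\alpha-1)\eps}\big]\le e^{-(\alpha-1)\eps}\,\mathbb E_{X\sim\calM_d}\big[e^{(\alpha-1)L}\big]\le e^{-(\alpha-1)(\eps-\gamma)}.
\]
Because the right-hand side is the same for every pair $d\sim d'$, we get $\sup_{d\sim d'}\Pr[L_{d,d'}>\eps]\le\delta$ with $\delta\coloneqq e^{-(\alpha-1)(\eps-\gamma)}$, and this $\delta$ lies in $[0,1)$ exactly when $\eps>\gamma$. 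Applying the implication noted after Definition~\ref{Def_RDP} then gives $(\eps,\delta)$-DP. For completeness I would include the one-line justification of that implication: on the event $\{L_{d,d'}\le\eps\}$ the signed measure $\mathrm{d}\calM_d-e^\eps\,\mathrm{d}\calM_{d'}$ is nonpositive, so for every $A\in\calF$, $\calM_d(A)-e^\eps\calM_{d'}(A)\le\calM_d\big(A\cap\{L_{d,d'}>\eps\}\big)\le\Pr_{X\sim\calM_d}[L_{d,d'}>\eps]\le\delta$.

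There is no real obstacle in this argument; the only points requiring care are the change-of-measure identity (legitimate because $\calM_d\ll\calM_{d'}$) and keeping the direction of Markov's inequality correct, which is where the assumption $\alpha>1$ is used. Since every estimate above is uniform in the neighboring pair, passing to the supremum over $d\sim d'$ is immediate and requires no extra work.
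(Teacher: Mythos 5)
Your argument is correct: the change of measure that turns the RDP hypothesis into the bound $\mathbb E_{X\sim\calM_d}\big[e^{(\alpha-1)L_{d,d'}}\big]\leq e^{(\alpha-1)\gamma}$, the Markov inequality at the point $\alpha-1>0$, and the splitting of an arbitrary event $A$ along $\{L_{d,d'}\leq \eps\}$ are all sound, and the estimates are uniform over neighboring pairs, so the supremum in \eqref{DP} is handled. The paper does not reprove this theorem—it is imported from \cite{Abadi_MomentAccountant}—and your proof is essentially the standard moments-accountant tail-bound argument used there, so nothing further is needed.
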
 
This theorem establishes a relationship between RDP and DP that is extensively used in several recent differentially private ML applications, e.g., \cite{Balle2019mixing, Balle:Subsampling, RDP_SemiSupervised, RDP2_PosteriorSampling, Feldman2018PrivacyAB, Balle_AnalyticMomentAccountant, bhowmick2018protection} to name a few. A prime use case for this relationship is the \textit{moments accountant} (MA) \cite{Abadi_MomentAccountant} which is the current state-of-the-art privacy analysis technique for ML algorithms.
However, despite its extensive use, Theorem~\ref{Theorem:RDP_Delta} is loose in general and does not hold for all range of $\eps\geq 0$. For instance, as we see later, for Gaussian mechanisms this relationship holds for $\eps\to 0$ only when the variance of noise goes to infinity. Given its widespread applications, it seems very natural to  to aim at determining the \textit{optimal} relationship between $(\eps, \delta)$-DP and $(\alpha, \gamma)$-RDP.   More precisely, we seek to answer the following question.

 \noindent\textbf{Question One:} \textit{Given an $(\alpha, \gamma)$-RDP mechanism $\calM$, what are the smallest $\eps$ and $\delta$ such that $\calM$ is $(\eps, \delta)$-DP?}

We settle this question in  Sec.~\ref{Sec:Conversion} by expressing the optimal relationship via a simple one-variable convex optimization program. Incorporating this  relationship into MA, we introduce the \textit{improved} MA and quantify the resulting improvement in terms of privacy and utility in two different settings: $T$-fold homogeneous composition of Gaussian mechanism and  noisy SGD algorithm.

As we shall see later, RDP is remarkably efficient in handling composition, making it an appealing notion of privacy for iterative algorithms such as SGD. Nevertheless, it lacks interpretability, see, e.g., \cite{Balle2019HypothesisTI} for more details. Following the success of hypothesis test \eqref{Hypothesis} in providing interpretation of approximate DP, we seek to relate RDP constraint to the tradeoff function $\inf_{d\sim d'}\beta^{dd'}_\calM(\tau)$. 
Such relationship enables us to provide interpretable privacy guarantees for several iterative machine learning algorithms. It can be verified that $1-\tau - \inf_{d\sim d'}\beta^{dd'}_\calM(\tau)$ quantifies the fundamental indistinguishability of  neighboring datasets based on the mechanism's output. Therefore, one effective way to describe the above relationship is to construct an outer bound for the region encompassed between the curves $\tau\mapsto 1-\tau$ and $\tau\mapsto \inf_{d\sim d'}\beta^{dd'}_\calM(\tau)$; the so-called \textit{privacy region} of $\calM$. Now we can describe the above relationship as follows. 

\noindent\textbf{Question Two:} \textit{Given an $(\alpha, \gamma)$-RDP mechanism, what is the characterization of its privacy region?}

We provide an outer bound for the solution of this question in Sec.~\ref{Section:BHT} and then demonstrate it for both $T$-fold homogeneous composition of Gaussian mechanism and  noisy SGD algorithm. Interestingly, for the latter scenario, this outer bound is tighter than what would be obtained from applying results in \cite{GaussianDP_deep}.

We summarize our results on the relationship between $(\eps, \delta)$-DP, $(\alpha, \gamma)$-RDP, and $f$-DP in Fig.\ref{fig:diagram}.

\section{Optimal Relationship between RDP and DP}
\label{Sec:Conversion}
In this section, we aim at computing the fundamental worst-case DP privacy parameter guaranteed by an $(\alpha, \gamma)$-RDP mechanism, thereby answering Question One. To this goal,
we first express constraints in both $(\eps, \delta)$-DP and $(\alpha, \gamma)$-RDP in terms of two $f$-divergences.  Given $\lambda\geq 1$, the $f$-divergence associated with $f(t) = (t-\lambda)_+ =\max\{t-\lambda, 0\}$, is called $\sE_{\lambda}$-divergence \cite{polyanskiy2010channel} (aka  \textit{hockey-stick} divergence \cite{hockey_stick}) and given by  
	\begin{equation}
	\label{Defi_HS_Divergence}
		\sE_{\lambda}(P\|Q) = \int (\text{d}P-\lambda \text{d}Q)_+=\sup_{A\in \calF} \left[P(A) - \lambda Q(A)\right].
	\end{equation}
Also, for any $\alpha>1$, the $f$-divergence associated with $f(t)=\frac{1}{\alpha-1}(t^\alpha-1)$ is denoted by\footnote{$\chi^\alpha$-divergence is also referred to as $\alpha$-Hellinger divergence, see, e.g., \cite{fdivergence_Sason16}.} $\chi^\alpha(P\|Q)$. Note that $D_\alpha(P\|Q) = \frac{1}{\alpha-1}\log\left(1 + (\alpha-1)\chi^\alpha(P\|Q)\right)$ for a pair of probability distributions $P$ and $Q$. 

It is shown in \cite{Barthe:2013_Beyond_DP} that 
\begin{equation}\label{DP_HS}
    \calM~\text{is}~(\eps, \delta)\text{-DP}~~ \Longleftrightarrow ~~\sup_{d\sim d'}\sE_{e^\eps}(\calM_d\|\calM_{d'})\leq \delta.
\end{equation}
Similarly, it can be verified that:
\begin{equation}\label{RDP-chi-alpha}
    \calM~\text{is}~(\alpha, \gamma)\text{-RDP} \Longleftrightarrow\sup_{d\sim d'}\chi^\alpha(\calM_d\|\calM_{d'})\leq \chi(\gamma),
\end{equation}
where 
\begin{equation}\label{tilde_Gamma}
   \chi(\gamma) \coloneqq \frac{e^{(\alpha-1)\gamma}-1}{\alpha-1}.
\end{equation}
Let the set of all $(\alpha, \gamma)$-mechanisms be denoted by $\mathbb M_\alpha(\gamma)$, i.e., 
$$\mathbb M_\alpha(\gamma)\coloneqq \{\calM:\mathbbD\to \calP(\mathbbX):~\calM~\text{is}~(\alpha, \gamma)\text{-RDP}\}.$$
This definition, together with \eqref{DP_HS}, enables us to precisely formulate Question One. In fact, Question One amounts to computing $\delta_\alpha^\eps(\gamma)$
\begin{align}
    \delta_\alpha^\eps(\gamma) & \coloneqq  \inf\left\{\delta\in (0,1)
    :\forall \calM\in \mathbb M_\alpha(\gamma)~\text{is}~ (\eps, \delta)\text{-DP}\right\}\\
    &= \sup_{\calM\in \mathbb M_\alpha(\gamma)}~ \sup_{d\sim d'}~\sE_{e^\eps}(\calM_d\|\calM_{d'}),
    \label{Fundamen_Smallest_Delta}
\end{align}
where the equality comes from \eqref{DP_HS} and \eqref{RDP-chi-alpha}. 
The map $\gamma\mapsto \delta_\alpha^\eps(\gamma)$ in fact specifies the ``optimal'' conversion rule from RDP to DP for a given $\eps\geq 0$. An equivalent way of describing such conversion is through the following quantity fixing $\delta\in (0,1)$
\begin{equation}\label{FundamentalEps}
    \eps_\alpha^\delta(\gamma)\coloneqq \inf\left\{\eps\geq 0:\forall \calM\in \mathbb M_\alpha(\gamma)~\text{is}~ (\eps, \delta)\text{-DP}\right\}.
\end{equation}
Similarly, the optimal conversion from DP to RDP is formulated by 
\begin{align}
        \gamma_\alpha^\eps(\delta)&\coloneqq \sup\left\{\gamma\geq 0 :\forall \calM\in \mathbb M_\alpha(\gamma)~\text{is}~ (\eps, \delta)\text{-DP}\right\}\label{FundamentalGamma}\\
        &= \inf_{\calM:\mathbbD\to \calP(\mathbbX)}\inf_{d\sim d'} \chi^{-1}(\chi^\alpha(\calM_{d}\|\calM_{d'}))\label{eq:Optimization_Gamma_DP}\\
	&\qquad \text{s.t.~} \sEs(\calM_{d}\|\calM_{d'})\geq \delta,~ \forall d\sim d',
\end{align}
where the $\chi^{-1}(\cdot)$ denotes the functional inverse of $\chi(\cdot)$ in \eqref{tilde_Gamma}, i.e., and is given by 
$\chi^{-1}(t) =  \frac{1}{\alpha-1}\log(1+(\alpha-1)t)$. We seek to compute $\delta_\alpha^\eps(\gamma)$ (or equivalently, $\eps_\alpha^\delta(\gamma)$); however, it turns out that $\gamma_\alpha^\eps(\delta)$ is simpler to compute. As a result, in the following we focus on the latter first.     

Notice that, according to \eqref{RDP-chi-alpha},  the set $\mathbb M_\alpha(\gamma)$ can be equivalently characterized by the constraint $\chi^\alpha(\calM_d\|\calM_{d'})\leq \chi(\gamma)$, where $\chi(\gamma)$ is defined in \eqref{tilde_Gamma}. Hence, $\gamma\mapsto \delta_\alpha^\eps(\gamma)$ in fact constitutes the upper boundary of the convex set 
\eqn{ConvexSet1}{\calR_\alpha\coloneqq \left\{\left(\chi^\alpha(\calM_{d}\|\calM_{d'}), \sEs(\calM_{d}\|\calM_{d'})\right)\Big|\forall\calM, d\sim d'\right\}.}This simple observation has some key implications.
First, $\delta_\alpha^\eps(\cdot)$ is non-decreasing and concave. Second, the upper boundary can be equivalently given by the map $\delta\mapsto \gamma_\alpha^\eps(\delta)$. Furthermore, to compute  $\gamma_\alpha^\eps(\cdot)$ or $\delta_\alpha^\eps(\cdot)$, it suffices to characterize $\calR_\alpha$. This  allows us to cast the problem of computing $\gamma_\alpha^\eps(\cdot)$ as characterizing the joint range of $\sE_\lambda$ and $\chi^\alpha$ divergences. To tackle the latter problem, we refer to \cite{Harremoes_JOintRange} whose main result is as follows. 
 \begin{thm}(\cite[Theorem 8]{Harremoes_JOintRange})\label{thm:f-divergence_joint-region}
	We have  
	\begin{equation}
		\Big\{\big(D_{f}(P\|Q), D_{g}(P\|Q)\big)\Big | P, Q\in\calP(\mathbbX) \Big\}		= {\sf conv}(\calB)
	\end{equation}
where $\conv(\cdot)$ denotes the convex hull operator and  
\begin{equation*}
    \calB\coloneqq \Big\{\big(D_{f}(P_{\sf b}\|Q_{\sf b}), D_{g}(P_{\sf b}\|Q_{\sf b})\big)\Big | P_{\sf b}, Q_{\sf b}\in\calP(\{0, 1\}) \Big\}.
\end{equation*}
 \end{thm}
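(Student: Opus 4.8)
This is \cite[Theorem~8]{Harremoes_JOintRange}, so one may simply invoke it; but here is the argument I would use. The plan is to translate the joint range into a statement about probability measures on the likelihood‑ratio line, identify the extreme points of the relevant set, and then close the gap with Carath\'eodory's theorem. Concretely, for $P,Q\in\calP(\mathbbX)$ (assume first $P\ll Q$; the general case only adds an $f'(\infty)$‑term accounting for the singular part, where $f'(\infty):=\lim_{t\to\infty}f(t)/t$), let $\nu$ be the law of $\frac{\text{d}P}{\text{d}Q}$ under $Q$. Then $\nu$ is a probability measure on $[0,\infty)$ with $\int t\,\text{d}\nu(t)=1$ and $D_f(P\|Q)=\int f\,\text{d}\nu$, $D_g(P\|Q)=\int g\,\text{d}\nu$. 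Conversely, provided $\mathbbX$ is rich enough (e.g.\ $\mathbbX=\Reals$), every such $\nu$ is realized by some pair via the quantile transform, so the joint range equals $\Psi(\mathcal S)$, where $\mathcal S$ is the convex set of probability measures on $[0,\infty)$ of mean $1$ and $\Psi:\nu\mapsto(\int f\,\text{d}\nu,\int g\,\text{d}\nu)$ is affine. A direct computation then identifies $\calB$ with $\Psi(\mathrm{ext}\,\mathcal S)$: a single linear (mean) constraint forces the extreme points of $\mathcal S$ to be supported on at most two atoms, and a two‑point measure $q\,\delta_{p/q}+\bar q\,\delta_{\bar p/\bar q}$ of mean $1$ is exactly the likelihood‑ratio law of the binary pair $P_{\sf b}=(p,\bar p)$, $Q_{\sf b}=(q,\bar q)$.

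The substantive inclusion is $\Psi(\mathcal S)\subseteq\conv(\calB)$. Fix $\nu\in\mathcal S$; if either coordinate of $\Psi(\nu)$ equals $+\infty$ there is nothing to prove, so take both finite. Then $\bigl(\int t\,\text{d}\nu,\int f\,\text{d}\nu,\int g\,\text{d}\nu\bigr)$ is the barycentre, under $\nu$, of the space curve $\{(t,f(t),g(t)):t\ge 0\}\subseteq\Reals^3$, hence lies in its convex hull; by Carath\'eodory's theorem it equals $\sum_{i=1}^{4}\lambda_i\,(t_i,f(t_i),g(t_i))$ for suitable $\lambda_i\ge 0$ with $\sum_i\lambda_i=1$. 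The finitely supported measure $\mu:=\sum_i\lambda_i\delta_{t_i}$ therefore lies in $\mathcal S$ and satisfies $\Psi(\mu)=\Psi(\nu)$. Finally, $\mu$ is a point of the polytope $\{x\ge 0:\sum_i x_i=1,\ \sum_i t_i x_i=1\}$ of probability vectors on $\{t_1,\dots,t_4\}$ of mean $1$, whose extreme points have support size at most $2$ and hence belong to $\mathrm{ext}\,\mathcal S$; writing $\mu$ as a convex combination of those extreme points gives $\Psi(\nu)=\Psi(\mu)\in\conv\bigl(\Psi(\mathrm{ext}\,\mathcal S)\bigr)=\conv(\calB)$. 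For the reverse inclusion it is enough that $\calB$ lies in the joint range (binary distributions embed into $\calP(\mathbbX)$) and that the joint range is convex: given two pairs realizing two points, place rescaled copies of them on disjoint measurable subsets of $\mathbbX$ and mix — the likelihood ratio is unchanged on each part, so both $D_f$ and $D_g$ of the mixture are the corresponding convex combinations.

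The part I expect to be fussiest is the passage from $P\ll Q$ to general $P,Q$ and, relatedly, the case $f'(\infty)=\infty$ (which occurs for $\chi^\alpha$ with $\alpha>1$): there one must carry the singular mass $P_{\mathrm s}(\mathbbX)$ through the whole argument — enlarging $\mathcal S$ to probability measures on $[0,\infty)$ of mean \emph{at most} $1$ and replacing $\int f\,\text{d}\nu$ by $\int f\,\text{d}\nu+f'(\infty)\bigl(1-\int t\,\text{d}\nu\bigr)$ — and allow divergences to take the value $+\infty$. This does not change the skeleton above: the extreme points of the enlarged $\mathcal S$ are still the one‑ and two‑point measures (a one‑point mass $\delta_{t_0}$ with $t_0\le 1$ now corresponds to a binary pair with $Q_{\sf b}$ degenerate), and the Carath\'eodory‑on‑the‑lifted‑curve step still reduces everything to a finitely supported $\mu$ and thence to binary pairs, which is precisely what is needed to land in $\conv(\calB)$ rather than merely its closure.
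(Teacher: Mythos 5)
Your argument is essentially correct, but it takes a genuinely different route from what the paper does for this statement. The paper does not reprove the general joint-range theorem at all: it invokes \cite[Theorem 8]{Harremoes_JOintRange} and, in Appendix~\ref{Appexdix:binarycase}, gives a short specialized argument only for the pair $(\chi^\alpha,\sE_\lambda)$ and only in the form actually needed later, namely that $\inf D_\alpha(P\|Q)$ subject to $\sE_\lambda(P\|Q)\geq\delta$ is attained by binary distributions. That argument binarizes through the single event $\{\mathrm{d}P/\mathrm{d}Q\geq\lambda\}$, observes that this quantization preserves $\sE_\lambda$ \emph{exactly} while the data processing inequality can only decrease $D_\alpha$, and is done --- no convex-hull machinery, but also no statement about the full joint range. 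You instead prove the cited theorem in its stated generality: push forward to the law of $\mathrm{d}P/\mathrm{d}Q$ under $Q$ (with the $f'(\infty)$ correction for the singular part), view the joint range as the image of the mean-constrained set of measures under an affine map, identify the extreme points (one- and two-atom measures) with binary pairs, and use Carath\'eodory on the lifted curve $(t,f(t),g(t))$ plus the mixing construction for convexity of the range. This buys the full equality with ${\sf conv}(\calB)$ for arbitrary $f,g$ (matching the source), at the cost of more machinery and a few points you state but do not justify: that the barycentre of a finite-mean measure lies in the (non-closed) convex hull of its support in $\Reals^3$ (true, by a supporting-hyperplane/induction-on-dimension argument, but worth a line), the dismissal of the case where a coordinate is $+\infty$ (one should still check such points lie in ${\sf conv}(\calB)$, easy for $\chi^\alpha$ and $\sE_\lambda$ but not automatic in general), and the standing richness assumption on $\mathbbX$ (at least four ``slots'' are needed both for the mixing step and to realize arbitrary mean-one laws), which the paper's statement leaves implicit. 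None of these is a substantive gap; your proof stands, and the paper's appendix argument remains the leaner option for the specific boundary computation it feeds into Theorem~\ref{thm:optimization_formulation_gamma}.
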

This theorem states that characterizing the joint range of any pair of $f$-divergences can be reduced without loss of generality to the binary case. For completeness, we give a more direct proof for the case of $\chi^\alpha$ and $\sE_\lambda$ divergences in Appendix~\ref{Appexdix:binarycase}. 
We formalize this insight in Theorem~\ref{thm:optimization_formulation_gamma} and establish a simple variational formula for $\gamma_\alpha^\eps(\cdot)$ involving a one-parameter log-convex minimization program.   
Hence,  the optimization \eqref{eq:Optimization_Gamma_DP}, which can potentially be of significant complexity, turns into a simple tractable problem. 
\begin{thm}\label{thm:optimization_formulation_gamma}
For any $\alpha>1$, $\eps\geq 0$, and $\delta\in (0,1)$,  
\begin{align}
    \gamma^\eps_\alpha&(\delta)= \eps + \frac{1}{\alpha-1} \log M(\alpha, \eps, \delta), \label{eq:Optimization_gamma}
\end{align}
	where $\bar p \coloneqq 1-p$ and 
	\begin{equation*}
	    M(\alpha, \eps, \delta)\coloneqq \min_{p\in(\delta, 1)}\left[ p^\alpha (p-\delta)^{1-\alpha}+\bar{p}^\alpha(e^\eps-p+\delta)^{1-\alpha}\right].	
	\end{equation*}
\end{thm}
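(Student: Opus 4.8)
The plan is to reduce \eqref{eq:Optimization_Gamma_DP} to a minimization over binary distributions, solve that problem as a nested one-variable optimization, and recognize the inner optimum so that the outer one becomes $M(\alpha,\eps,\delta)$. The starting point is that \eqref{eq:Optimization_Gamma_DP} amounts to minimizing $\chi^\alpha(P\|Q)$ over pairs $(P,Q)$ with $\sEs(P\|Q)\geq\delta$ (a mechanism with a single neighboring pair realizes any such $(P,Q)$), and since $t\mapsto\chi^{-1}(t)=\frac{1}{\alpha-1}\log\!\big(1+(\alpha-1)t\big)$ is continuous and strictly increasing on $[0,\infty)$ it commutes with the infimum. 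So the claim is equivalent to
\[
\chi\!\big(\gamma^\eps_\alpha(\delta)\big)\;=\;\inf\Big\{\chi^\alpha(P\|Q)\,:\,P,Q\in\calP(\mathbbX),\ \sEs(P\|Q)\geq\delta\Big\}\;=\;\frac{1}{\alpha-1}\Big(e^{(\alpha-1)\eps}M(\alpha,\eps,\delta)-1\Big),
\]
with $\chi(\cdot)$ as in \eqref{tilde_Gamma}; applying $\chi^{-1}$ to the two outer expressions then gives $\gamma^\eps_\alpha(\delta)=\eps+\frac{1}{\alpha-1}\log M(\alpha,\eps,\delta)$, because $1+(\alpha-1)\cdot\tfrac{1}{\alpha-1}(e^{(\alpha-1)\eps}M-1)=e^{(\alpha-1)\eps}M$.

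The first substantive step is to show the infimum above is attained over binary pairs. I would argue this directly by a likelihood-ratio binarization: for a feasible $(P,Q)$, let $A^\star=\{x:\tfrac{\text{d}P}{\text{d}Q}(x)>e^\eps\}$ be the extremal set in \eqref{Defi_HS_Divergence} and $W$ the deterministic channel $x\mapsto\indicator[x\in A^\star]$. Then $\sEs(PW\|QW)\geq P(A^\star)-e^\eps Q(A^\star)=\sEs(P\|Q)$, while the data-processing inequality gives the reverse inequality, so $\sEs(PW\|QW)=\sEs(P\|Q)\geq\delta$; and $\chi^\alpha(PW\|QW)\leq\chi^\alpha(P\|Q)$, again by data processing. (Equivalently, one may invoke Theorem~\ref{thm:f-divergence_joint-region} to restrict to $\calB$.) Hence it suffices to minimize over binary $P=(p,\bar p)$, $Q=(q,\bar q)$ on $\{0,1\}$, and since the binarized pair satisfies $p-e^\eps q=\sEs(P\|Q)\geq\delta>0$ I may take $p>e^\eps q$, in which case $\sEs(P\|Q)=p-e^\eps q$ (the complementary term $\bar p-e^\eps\bar q$ is $\leq 1-e^\eps\leq 0$) and $\chi^\alpha(P\|Q)=\frac{1}{\alpha-1}\big(\tfrac{p^\alpha}{q^{\alpha-1}}+\tfrac{\bar p^\alpha}{\bar q^{\alpha-1}}-1\big)$.

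I would then solve the binary problem by optimizing over $q$ with $p$ fixed and then over $p$. For fixed $p\in(\delta,1)$, feasibility reads $q\leq(p-\delta)/e^\eps$, and differentiating $g(q)\coloneqq\tfrac{p^\alpha}{q^{\alpha-1}}+\tfrac{\bar p^\alpha}{(1-q)^{\alpha-1}}$ shows $g'(q)$ has the same sign as $q-p$, so $g$ is strictly decreasing on $(0,p)$ (its only stationary point on $(0,1)$ is $q=p$, where $g=1$, corresponding to $P=Q$). Since $(p-\delta)/e^\eps<p$, the constrained minimum is at the right endpoint $q^\star=(p-\delta)/e^\eps$ — i.e. the $(\eps,\delta)$-DP constraint is active — and substituting $q^\star$ and $\bar q^\star=(e^\eps-p+\delta)/e^\eps$ yields
\[
\chi^\alpha(P\|Q^\star)=\frac{1}{\alpha-1}\Big(e^{(\alpha-1)\eps}\big[p^\alpha(p-\delta)^{1-\alpha}+\bar p^\alpha(e^\eps-p+\delta)^{1-\alpha}\big]-1\Big).
\]
Minimizing the bracket over $p\in(\delta,1)$ — the range that keeps $p,\bar p,q^\star,\bar q^\star$ all in $(0,1)$ — is by definition $M(\alpha,\eps,\delta)$, so that $\inf_{p\in(\delta,1)}\chi^\alpha(P\|Q^\star)=\frac{1}{\alpha-1}(e^{(\alpha-1)\eps}M(\alpha,\eps,\delta)-1)$, establishing the first display.

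I expect the reduction to the binary alphabet to be the genuine obstacle: one must rule out that the convex-hull operation in Theorem~\ref{thm:f-divergence_joint-region} (equivalently, allowing a rich alphabet $\mathbbX$, or mixtures of binary pairs) pushes the infimum of $\chi^\alpha$ strictly below its value on a single binary pair. The likelihood-ratio binarization makes this clean precisely because the threshold set $A^\star$ is simultaneously extremal for $\sEs$, so binarizing neither relaxes the constraint nor (by data processing) raises the objective. Everything after that — activeness of the constraint, the closed form of the inner optimum, and the $(\cdot)_+$ bookkeeping in the hockey-stick divergence — is routine single-variable calculus; the only caveat is that the outer optimum lies in the open interval $(\delta,1)$, so "$\min$" in the statement is to be read as an attained infimum.
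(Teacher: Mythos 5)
Your proof is correct, and its computational core coincides with the paper's: reduce to Bernoulli pairs, observe that for fixed $p$ the objective $p^\alpha q^{1-\alpha}+\bar p^\alpha \bar q^{1-\alpha}$ is decreasing in $q$ on $(0,p)$ so the hockey-stick constraint is active at $q^\star=(p-\delta)/e^\eps$, substitute, and pull the factor $e^{(\alpha-1)\eps}$ out of the logarithm. Where you differ is in how the binary reduction is justified. The paper's proof of this theorem (Appendix~\ref{Appendix_gamma_Optimization}) invokes Theorem~\ref{thm:f-divergence_joint-region} (the Harremo\"es--Vajda joint-range result), and therefore has to do extra work: since that theorem only says the joint range is the \emph{convex hull} of the binary points, the paper must additionally show that the lower boundary $\delta\mapsto\tilde\gamma(\delta)$ of the binary set is convex (by mixing the Bernoulli parameters and using convexity of $\chi^\alpha$), so that taking the convex hull does not push the optimum below its binary value. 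Your likelihood-ratio thresholding argument—binarize through the indicator of the extremal set $A^\star$, note that $\sE_{e^\eps}$ is preserved exactly while $\chi^\alpha$ can only decrease by data processing—sidesteps the convex-hull issue entirely; it is in fact the same argument the paper records separately in Appendix~\ref{Appexdix:binarycase} as a ``more direct proof'' of the binary sufficiency, so you have effectively assembled the direct route the authors gesture at but do not use as their official proof. Your side computations check out (in particular $\bar p-e^\eps\bar q=(1-e^\eps)-(p-e^\eps q)\le 0$ once $p-e^\eps q\ge\delta>0$, and the sign analysis of $g'(q)$ matches the paper's), and your closing caveat that the ``$\min$'' over $p\in(\delta,1)$ should be read as an infimum (it is attained only in the limit $p\to1$ when $\alpha\delta\ge1$) is a fair observation that applies equally to the paper's statement.
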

The proof of this theorem relies on Theorem~\ref{thm:f-divergence_joint-region} and is given in Appendix~\ref{Appendix_gamma_Optimization}.
It can be shown that the term inside the logarithm is convex in $p$ and hence this optimization problem can be numerically solved with an arbitrary accuracy. It seems, however, not simple to analytically derive $\gamma^\eps_\alpha(\delta)$. Nevertheless, we obtain a lower bound in the following theorem that closely approximates $\gamma_\alpha^\eps(\delta)$. We provide its proof in Appendix~\ref{Appendix:Thm_Gamma_LB}. 
\begin{thm}\label{Thm:Lower_BOund_Gamma}
	For any $\eps\geq 0$ and $\alpha>1$, we have 
	\begin{align}
	    \gamma^\eps_\alpha(0)&=0,\\
	    \gamma^\eps_\alpha(\delta) &= \eps-\log(1-\delta), \qquad \qquad~~~~~  \text{if}~~\alpha\delta\geq 1, \label{Eq:Gamma_LB1}\\
	    \gamma^\eps_\alpha(\delta)&\geq \max\{ g(\alpha, \eps, \delta), f(\alpha, \eps, \delta)\}, ~~\textrm{if}~~~ 0<\alpha\delta< 1, \label{Eq:Gamma_LB2}
	\end{align}
where 	
	\begin{equation*}
		\label{eq:Gamma_LB_gfunction}
		g(\alpha, \eps, \delta)\coloneqq \eps - \frac{1}{\alpha-1}\log\frac{\zeta_\alpha}{\delta},
	\end{equation*}
	with $\zeta_\alpha\coloneqq \frac{1}{\alpha}\left(1-\frac{1}{\alpha}\right)^{\alpha-1}$ and 
	\begin{equation*}
		f(\alpha, \eps, \delta)\coloneqq \eps + \frac{1}{\alpha-1}\log\left(\left(e^{\eps }-\alpha  \delta \right) \left(\frac{\delta -1}{\delta -e^{\eps }}\right)^{\alpha }+\alpha  \delta\right).
	\end{equation*}
\end{thm}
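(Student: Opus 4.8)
The plan is to reduce everything to the one-variable program already isolated in Theorem~\ref{thm:optimization_formulation_gamma}, which states $\gamma^\eps_\alpha(\delta)=\eps+\frac{1}{\alpha-1}\log M(\alpha,\eps,\delta)$ with $M(\alpha,\eps,\delta)=\min_{p\in(\delta,1)}F_{\eps,\delta}(p)$, where
\[
F_{\eps,\delta}(p)\coloneqq \phi(p)+\psi(p),\qquad \phi(p)\coloneqq p^\alpha(p-\delta)^{1-\alpha},\qquad \psi(p)\coloneqq \bar p^\alpha(e^\eps-p+\delta)^{1-\alpha}.
\]
Since $(x,y)\mapsto x^\alpha y^{1-\alpha}$ is the perspective of the convex map $t\mapsto t^\alpha$ (as $\alpha>1$), both $\phi$ and $\psi$ are nonnegative and convex, and a direct differentiation gives $\phi'(p)=p^{\alpha-1}(p-\delta)^{-\alpha}(p-\alpha\delta)$, so $\phi$ is strictly decreasing on $(\delta,\alpha\delta)$ and strictly increasing on $(\alpha\delta,\infty)$. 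These three facts drive all the cases. The identity $\gamma^\eps_\alpha(0)=0$ needs no work: in \eqref{eq:Optimization_Gamma_DP} the constraint $\sEs(\calM_d\|\calM_{d'})\ge0$ is vacuous, so the constant mechanism $\calM_d\equiv\calM_{d'}$ (for which $\chi^\alpha=0$) shows the infimum is $\chi^{-1}(0)=0$.

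When $\alpha\delta\ge1$, every $p\in(\delta,1)$ satisfies $p<1\le\alpha\delta$, so $\phi$ is strictly decreasing on $(\delta,1)$ with $\lim_{p\to1^-}\phi(p)=(1-\delta)^{1-\alpha}$; since $\psi\ge0$ and $\psi(p)\to0$ as $p\to1^-$ (because $\bar p^\alpha\to0$ while $e^\eps-p+\delta\to e^\eps-\bar\delta>0$), we get $F_{\eps,\delta}(p)>(1-\delta)^{1-\alpha}=\lim_{p\to1^-}F_{\eps,\delta}(p)$ for every $p$, hence $M(\alpha,\eps,\delta)=(1-\delta)^{1-\alpha}$ and $\gamma^\eps_\alpha(\delta)=\eps+\frac1{\alpha-1}\log(1-\delta)^{1-\alpha}=\eps-\log(1-\delta)$, which is \eqref{Eq:Gamma_LB1}. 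When $0<\alpha\delta<1$, the global minimizer $\alpha\delta$ of $\phi$ lies in $(\delta,1)$, so discarding the nonnegative summand $\psi$ gives $M(\alpha,\eps,\delta)\ge\phi(\alpha\delta)=(\alpha\delta)^\alpha\big((\alpha-1)\delta\big)^{1-\alpha}=\delta/\zeta_\alpha$, whence $\gamma^\eps_\alpha(\delta)\ge\eps+\frac1{\alpha-1}\log(\delta/\zeta_\alpha)=g(\alpha,\eps,\delta)$, the first bound in \eqref{Eq:Gamma_LB2}.

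The harder half of \eqref{Eq:Gamma_LB2} is $\gamma^\eps_\alpha(\delta)\ge f(\alpha,\eps,\delta)$, equivalently $M(\alpha,\eps,\delta)\ge(e^\eps-\alpha\delta)\big(\tfrac{1-\delta}{e^\eps-\delta}\big)^\alpha+\alpha\delta$; here $\psi$ cannot be dropped, since for small $\eps$ the $g$-bound is vacuous ($\delta/\zeta_\alpha$ may be $<1$) while the right-hand side tends to $1$ as $\eps\to0$. Because $F_{\eps,\delta}$ is convex, its minimum over $(\delta,1)$ is the unique interior stationary point, determined by a transcendental equation; I would therefore replace $F_{\eps,\delta}$ by a tractable convex minorant, obtained by substituting for one of its two summands the supporting line at a judiciously chosen base point $p_0$, so that the relaxed objective — still of ``perspective'' type in $p$ — has a minimizer available in closed form, and then verify that this value simplifies exactly to the stated right-hand side and that the relaxed minimizer lies in $(\delta,1)$. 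Combining with the $g$-bound yields $\gamma^\eps_\alpha(\delta)\ge\max\{g,f\}$.

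The main obstacle is precisely the choice of $p_0$: the obvious candidates — the supporting line at $p=1$, or at $p=\alpha\delta$ (the latter merely reproduces the $g$-bound) — give strictly weaker constants, so $p_0$ must be tuned to the data $(\alpha,\eps,\delta)$ and the ensuing algebra carried out carefully; everything else is the monotonicity/convexity bookkeeping for $\phi$ and $\psi$ above. An alternative that may be cleaner in the appendix is to bypass Theorem~\ref{thm:optimization_formulation_gamma} and attack \eqref{eq:Optimization_Gamma_DP} directly: by the binary reduction of Theorem~\ref{thm:f-divergence_joint-region} one may take $\calM_d,\calM_{d'}$ Bernoulli, push $\sEs(\calM_d\|\calM_{d'})\ge\delta$ to equality, eliminate one parameter, and again minimize a one-dimensional ``perspective''-type function, to which the same supporting-line relaxation applies and from which the two boundary regimes recover $\gamma^\eps_\alpha(0)=0$ and the $\alpha\delta\ge1$ formula.
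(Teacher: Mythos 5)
Your treatment of the first three claims is correct and essentially the paper's own argument: working from the one-variable program of Theorem~\ref{thm:optimization_formulation_gamma}, noting that $\phi(p)=p^\alpha(p-\delta)^{1-\alpha}$ has derivative $p^{\alpha-1}(p-\delta)^{-\alpha}(p-\alpha\delta)$, you get the exact value $\eps-\log(1-\delta)$ when $\alpha\delta\geq 1$ (the second summand vanishes as $p\to 1^-$, so the infimum is attained in that limit), and dropping the second summand and evaluating $\phi$ at $p=\alpha\delta$ gives exactly $\delta/\zeta_\alpha$ and hence the $g$-bound; your direct argument for $\gamma^\eps_\alpha(0)=0$ via the trivial mechanism is also fine.

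However, the $f$-bound — the substantive half of \eqref{Eq:Gamma_LB2} — is not proved. You correctly observe that $\psi$ cannot be discarded and propose to minorize the objective by replacing one summand with its supporting line \emph{in $p$} at some base point $p_0$, but you leave $p_0$ undetermined and acknowledge that the obvious choices give weaker constants; as it stands this is a plan, not a proof, and there is no verification that any choice of $p_0$ reproduces the specific constant $(e^\eps-\alpha\delta)\bigl(\tfrac{1-\delta}{e^\eps-\delta}\bigr)^\alpha+\alpha\delta$. The paper's key idea is different and is what makes the bound come out in closed form: the objective $h_1(p;\alpha,\delta,\eps)=p^\alpha(p-\delta)^{1-\alpha}+\bar p^\alpha(e^\eps-p+\delta)^{1-\alpha}$ is convex in $\delta$ (for fixed $p$), so it is minorized by its tangent \emph{in $\delta$ at $\delta=0$}, namely $p+(\alpha-1)\delta+\bigl(\tfrac{1-p}{e^\eps-p}\bigr)^\alpha\bigl(e^\eps-p-(\alpha-1)\delta\bigr)$; one then shows (this is the technical monotonicity chain in the appendix, via the auxiliary functions $h_3,h_4,h_5$) that this minorant is increasing in $p$ on $(\delta,1)$, so its infimum is attained at the endpoint $p=\delta$, where it evaluates exactly to $\alpha\delta+\bigl(\tfrac{1-\delta}{e^\eps-\delta}\bigr)^\alpha(e^\eps-\alpha\delta)$, i.e.\ to $e^{(\alpha-1)(f(\alpha,\eps,\delta)-\eps)}$. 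Unless you either supply a concrete, verified choice of $p_0$ in your $p$-linearization or switch to this $\delta$-linearization (together with the monotonicity-in-$p$ argument that justifies evaluating at $p=\delta$), the inequality $\gamma^\eps_\alpha(\delta)\geq f(\alpha,\eps,\delta)$ remains a genuine gap in your proposal.
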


\begin{figure}
  \centering
   \begin{subfigure}{.45\textwidth}
  \centering
  \includegraphics[scale=0.5]{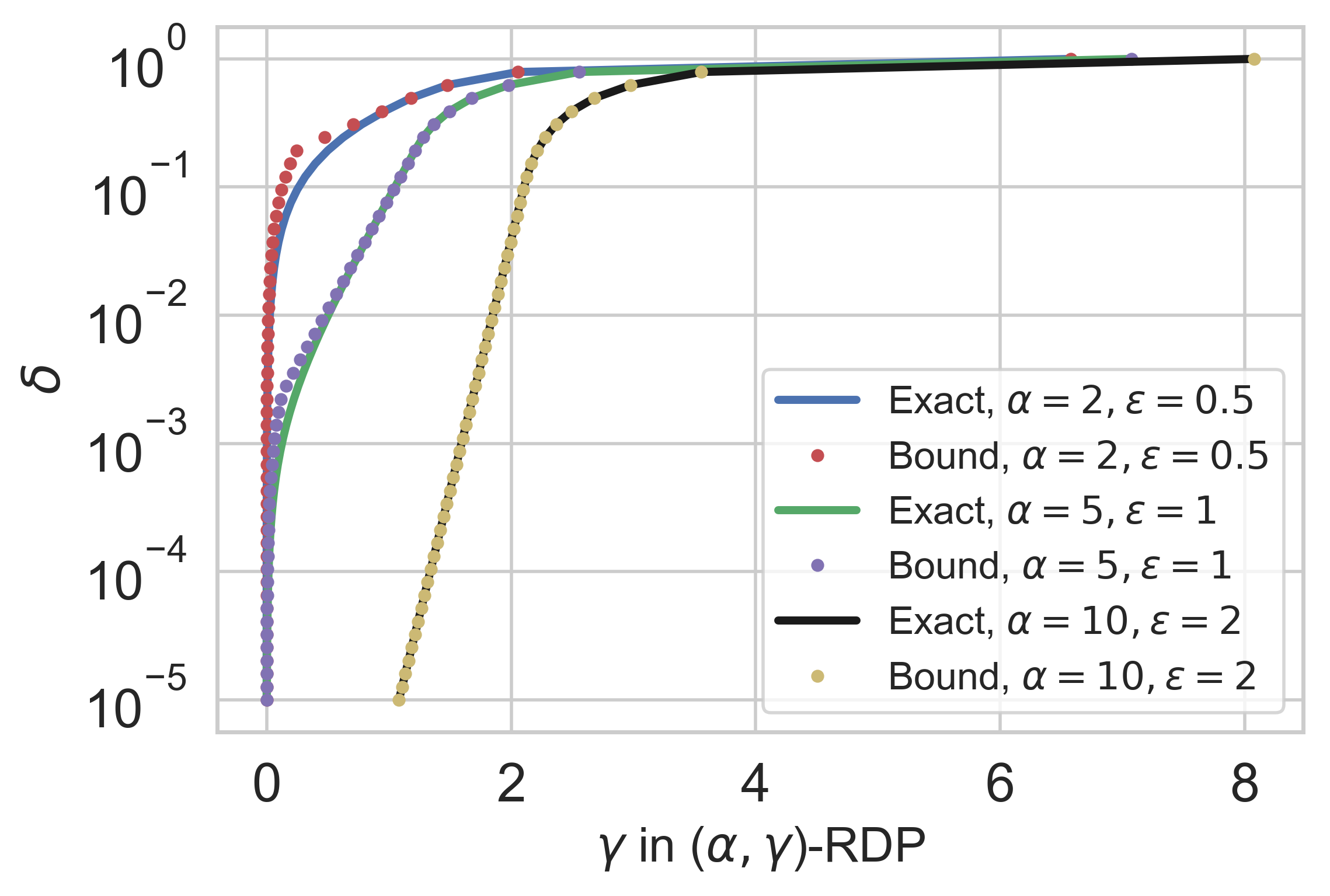}
  \label{fig:Compare_Alpha}
  \end{subfigure}
  ~\qquad \begin{subfigure}{.45\textwidth}
  \centering
  \includegraphics[scale=0.5]{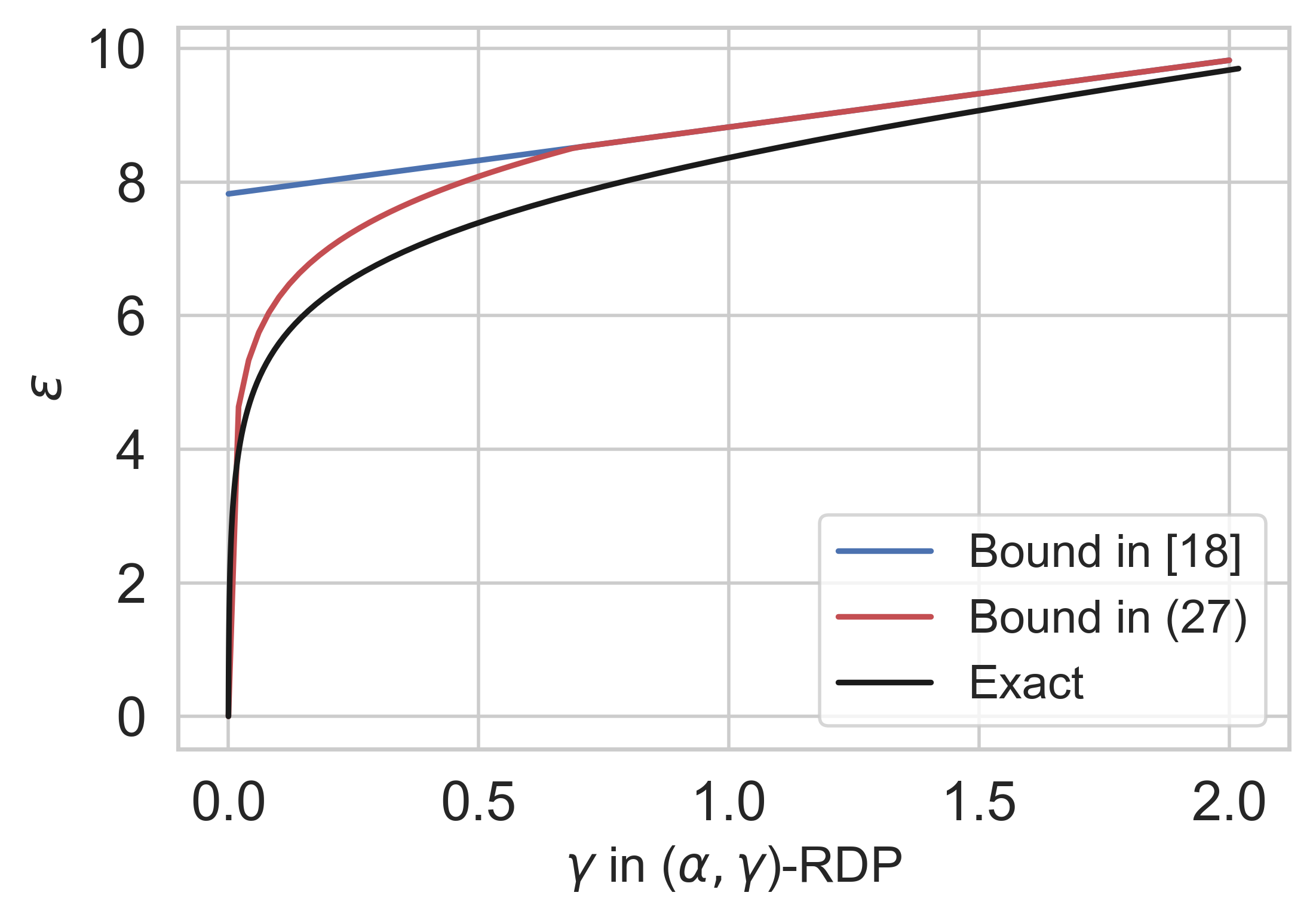}
  \label{fig:Compare_Borja}
\end{subfigure}
\caption{Left: The exact values of the map $\delta\mapsto \gamma_\alpha^\eps(\delta)$ obtained via numerically solving convex optimization problem \eqref{eq:Optimization_gamma}. The dotted curves indicate the lower bound on $\gamma_\alpha^\eps(\delta)$ according to Theorem~\ref{Thm:Lower_BOund_Gamma} for three pairs of $(\alpha, \eps)$. Right: Comparison of the exact values of the map $\eps\mapsto \gamma_\alpha^\eps(\delta)$ with the bounds obtained from \eqref{Eq:Approximate_epsilon} and \cite[Theorem 21]{Balle2019HypothesisTI} (i.e., considering only the first term in the minimization in \eqref{Eq:Approximate_epsilon}) with $\alpha =2$ and  $\delta = 10^{-4}$.}
\label{fig:Compare_Alpha}
\end{figure}

In Fig.~\ref{fig:Compare_Alpha} (left panel), we numerically solve \eqref{eq:Optimization_gamma} for three pairs of $(\alpha, \eps)$ and compare them with their corresponding bounds obtained from Theorem~\ref{Thm:Lower_BOund_Gamma}, highlighting the tightness of the above lower bound.
As indicated earlier and illustrated in this figure, the lower bound on $\gamma_\alpha^\eps(\cdot)$ in Theorem~\ref{Thm:Lower_BOund_Gamma} is translated into an upper bound on $\delta_\alpha^\eps(\cdot)$. In practice, it is often more appealing to design differentially private mechanisms with a hard-coded value of $\delta$ (as opposed to the fixed $\eps$). To address this practical need, we convert the lower bound in  Theorem~\ref{Thm:Lower_BOund_Gamma} to an upper bound on $\eps_\alpha^\delta(\cdot)$.
\begin{lem}\label{Lemma_epsilon_Approximate}
	For $\alpha>1$ and  $\gamma\geq 0$, we have 
	\eqn{}{\eps^\delta_\alpha(\gamma)=\left(\gamma +\log(1-\delta)\right)_+, ~~~\textrm{if}~~~\alpha\delta\geq 1,}
	and if $0<\alpha\delta< 1$
	\begin{align}
	    \eps^\delta_\alpha(\gamma)&\leq \frac{1}{\alpha-1}\min\Big\{\Big((\alpha-1)\gamma-\log\frac{\delta}{\zeta_\alpha}\Big)_+, \nonumber\\
	   & \qquad \qquad \qquad \qquad \log\Big(\frac{e^{(\alpha-1)\gamma}-1}{\alpha\delta}+1\Big)
	    \Big\},\label{Eq:Approximate_epsilon}
	\end{align}
	where $\zeta_\alpha$ was defined in Theorem~\ref{Thm:Lower_BOund_Gamma}.
	Moreover, $\eps_\alpha^\delta(0) = 0$.
\end{lem}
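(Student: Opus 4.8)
The plan is to exploit the fact that, by construction, $\eps^\delta_\alpha(\gamma)$ is the (generalized) functional inverse of $\gamma^\eps_\alpha(\delta)$, and then to substitute the lower bounds on $\gamma^\eps_\alpha(\delta)$ furnished by Theorem~\ref{Thm:Lower_BOund_Gamma}. First I would make the inverse correspondence precise. From the discussion following \eqref{ConvexSet1}, $\calR_\alpha$ is a closed convex subset of $\Reals^2$ whose upper boundary is the graph of $\delta^\eps_\alpha(\cdot)$, so every point of $\calR_\alpha$ lies on or below it; combined with \eqref{DP_HS}, \eqref{RDP-chi-alpha}, monotonicity of $\delta^\eps_\alpha(\cdot)$, and the fact that $\gamma^\eps_\alpha(\cdot)$ is the inverse of $\delta^\eps_\alpha(\cdot)$, this gives: whenever $\gamma\le\gamma^\eps_\alpha(\delta)$, every $\calM\in\mathbb M_\alpha(\gamma)$ satisfies $\sE_{e^\eps}(\calM_d\|\calM_{d'})\le\delta^\eps_\alpha(\gamma)\le\delta$ for all $d\sim d'$, i.e.\ is $(\eps,\delta)$-DP. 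Since $\sE_{e^\eps}$ is non-increasing in $\eps$, the map $\eps\mapsto\gamma^\eps_\alpha(\delta)$ is non-decreasing, so from \eqref{FundamentalEps} any lower bound $\gamma^\eps_\alpha(\delta)\ge L(\alpha,\eps,\delta)$ yields $\eps^\delta_\alpha(\gamma)\le\inf\{\eps\ge0:L(\alpha,\eps,\delta)\ge\gamma\}$, and when several such bounds are available we take the minimum of the resulting right-hand sides.

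With this reduction, the two regimes are handled separately. If $\alpha\delta\ge1$, Theorem~\ref{Thm:Lower_BOund_Gamma} gives the \emph{exact} identity $\gamma^\eps_\alpha(\delta)=\eps-\log(1-\delta)$, which is affine and increasing in $\eps$; inverting it over $\eps\ge0$ yields $\eps^\delta_\alpha(\gamma)=(\gamma+\log(1-\delta))_+$. If $0<\alpha\delta<1$, Theorem~\ref{Thm:Lower_BOund_Gamma} gives $\gamma^\eps_\alpha(\delta)\ge\max\{g(\alpha,\eps,\delta),f(\alpha,\eps,\delta)\}$. For the $g$-branch, $g(\alpha,\cdot,\delta)$ is affine with unit slope, so $g(\alpha,\eps,\delta)\ge\gamma\iff\eps\ge\gamma+\tfrac1{\alpha-1}\log\tfrac{\zeta_\alpha}{\delta}$, and intersecting with $\eps\ge0$ recovers the first term of the minimum in \eqref{Eq:Approximate_epsilon}. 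For the $f$-branch I would not invert $f(\alpha,\cdot,\delta)$ in closed form; instead I would exhibit the explicit candidate $\eps^\star:=\tfrac1{\alpha-1}\log\!\big(\tfrac{e^{(\alpha-1)\gamma}-1}{\alpha\delta}+1\big)$, observe that $\eps^\star\ge0$ for every $\gamma\ge0$, and verify the single inequality $f(\alpha,\eps^\star,\delta)\ge\gamma$. That inequality alone places $\eps^\star$ in $\{\eps\ge0:\gamma^\eps_\alpha(\delta)\ge\gamma\}$, hence $\eps^\delta_\alpha(\gamma)\le\eps^\star$, the second term of the minimum; the boundary case $\eps^\delta_\alpha(0)=0$ then follows since both candidate expressions vanish at $\gamma=0$ (or directly, since $\mathbb M_\alpha(0)$ consists only of mechanisms with $\calM_d=\calM_{d'}$, which are $(0,0)$-DP).

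The one genuinely computational step is the verification of $f(\alpha,\eps^\star,\delta)\ge\gamma$. Writing $v:=e^{\eps^\star}\ge1$, so that $\alpha\delta\,v^{\alpha-1}=e^{(\alpha-1)\gamma}-1+\alpha\delta$, a direct manipulation of the definition of $f$ reduces this to
\begin{equation*}
\phi(v):=v^{\alpha-1}(v-\alpha\delta)(1-\delta)^\alpha-(1-\alpha\delta)(v-\delta)^\alpha\;\ge\;0,\qquad v\ge1.
\end{equation*}
Since $\phi(1)=0$, it suffices to show $\phi$ is non-decreasing on $[1,\infty)$; computing $\phi'$, this reduces to $(1-\delta)^\alpha v^{\alpha-2}\big(v-(\alpha-1)\delta\big)\ge(1-\alpha\delta)(v-\delta)^{\alpha-1}$, which has a strict gap at $v=1$ (the ratio of the two sides there equals $1+\tfrac{(\alpha-1)\delta^2}{1-\alpha\delta}>1$) and propagates to all $v\ge1$ by an elementary monotonicity argument on this ratio (equivalently, by comparing logarithmic derivatives). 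I expect this nested chain of elementary-but-fiddly inequalities in $(\alpha,\delta,v)$ to be the main obstacle; everything else is bookkeeping built on Theorem~\ref{Thm:Lower_BOund_Gamma} and the curve/inverse-curve correspondence.
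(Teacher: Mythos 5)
Your proposal is correct in substance and, on the $f$-branch, takes a genuinely different route from the paper. The paper never evaluates $f$ at a candidate $\eps$: since $\eps\mapsto f(\alpha,\eps,\delta)$ is hard to invert, it first shows that $f_1(\alpha,\eps,\delta)$ (the argument of the logarithm in $f$) is convex in $\delta$ whenever $\delta\le f_2(\alpha,\eps)\coloneqq\frac{e^{\eps}(e^\eps+1)}{2e^\eps+\alpha(e^\eps-1)}$, proves the auxiliary fact $f_2(\alpha,\eps)>1/\alpha$ so that convexity holds throughout the regime $\alpha\delta<1$, replaces $f_1$ by its tangent at $\delta=0$, and then inverts the resulting simpler lower bound $\gamma^\eps_\alpha(\delta)\ge\eps+\frac{1}{\alpha-1}\log\big(e^{-(\alpha-1)\eps}(1-\alpha\delta)+\alpha\delta\big)$ exactly in $\eps$; that inversion is precisely the second term of \eqref{Eq:Approximate_epsilon}. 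You instead exhibit $\eps^\star$ and check $f(\alpha,\eps^\star,\delta)\ge\gamma$ directly; with $v=e^{\eps^\star}$ this is equivalent to your $\phi(v)\ge 0$, i.e.\ $v^{\alpha-1}(v-\alpha\delta)(1-\delta)^\alpha\ge(1-\alpha\delta)(v-\delta)^\alpha$, and I verified this reduction is algebraically correct. Your route dispenses with the convexity-in-$\delta$ analysis and the $f_2>1/\alpha$ verification, at the cost of proving $\phi\ge0$ by hand; the $\alpha\delta\ge1$ case, the $g$-branch, and $\eps^\delta_\alpha(0)=0$ coincide with the paper's argument, and your use of the curve/inverse correspondence to turn lower bounds on $\gamma^\eps_\alpha(\delta)$ into upper bounds on $\eps^\delta_\alpha(\gamma)$ is at the same level of rigor as the paper's own ``solve for $\eps$'' step.

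One step of your verification is, however, not right as stated. Having reduced $\phi'\ge0$ to $(1-\delta)^\alpha v^{\alpha-2}\big(v-(\alpha-1)\delta\big)\ge(1-\alpha\delta)(v-\delta)^{\alpha-1}$, you claim the strict gap at $v=1$ propagates to all $v\ge1$ by monotonicity of the ratio of the two sides. The ratio is indeed monotone, but the direction depends on $\alpha$: substituting $u=\delta/v$, the inequality becomes $(1-\delta)^\alpha\big(1-(\alpha-1)u\big)\ge(1-\alpha\delta)(1-u)^{\alpha-1}$ for $u\in(0,\delta]$, and the logarithmic derivative in $u$ of the two sides' ratio is $(\alpha-1)\big(\frac{1}{1-u}-\frac{1}{1-(\alpha-1)u}\big)$, positive for $1<\alpha<2$ and negative for $\alpha>2$. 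Hence for $1<\alpha<2$ the ratio is \emph{decreasing} in $v$ (e.g.\ $\alpha=1.5$, $\delta=0.1$), and the value at $v=1$ alone does not control $v>1$; you also need the endpoint $u\to0$ (i.e.\ $v\to\infty$), where the inequality reads $(1-\delta)^\alpha\ge1-\alpha\delta$ and holds by Bernoulli. With both endpoints — your $v=1$ computation and Bernoulli at $u=0$ — together with this one-signed derivative, $\phi'\ge0$ follows for every $\alpha>1$, so the slip is easily patched and your conclusion stands.
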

This lemma is obtained by solving equality  \eqref{Eq:Gamma_LB1} and inequality \eqref{Eq:Gamma_LB2} for $\eps$. Unlike  $g(\alpha, \eps, \delta)$, the map $\eps\mapsto f(\alpha, \eps, \delta)$ seems complicated to invert. To get around this difficulty, we use the first-order approximation of $f(\alpha, \eps, \delta)$ around $\delta = 0$ to invert the inequality \eqref{Eq:Gamma_LB2}. 
The details are relegated to Appendix~\ref{Appendix:Lemma_epsilon_Approximate}. 
It is worth mentioning that Balle et al. \cite[Theorem 21]{Balle2019HypothesisTI} has recently proved $\eps^\delta_\alpha(\gamma)\leq \gamma - \frac{1}{\alpha-1}\log\frac{\delta}{\zeta_\alpha}$ via a fundamentally different approach. Their bound corresponds to the first term in \eqref{Eq:Approximate_epsilon}, and thus weaker than Lemma~\ref{Lemma_epsilon_Approximate}. To emphasize on the advantage of \eqref{Eq:Approximate_epsilon} over \cite[Theorem 21]{Balle2019HypothesisTI}, we plot these two bounds in Fig.~\ref{fig:Compare_Alpha} (right panel) for $\alpha = 2$ and $\delta = 10^{-4}$. As observed in this figure, considering only the first term in \eqref{Eq:Approximate_epsilon} would lead to non-trivial loss in $\eps$ especially when $\gamma$ is sufficiently small. This observation is analytically justified by the fact that the first term in \eqref{Eq:Approximate_epsilon} does not tend to zero as $\gamma\to 0$ for reasonable values of $\alpha$ and $\delta$ whereas the second term does for any $\delta>0$ and $\alpha>1$.

\begin{remark}\label{remark_Zero_Eps}
As an important special case, this lemma demonstrates that an $(\alpha, \gamma)$-RDP mechanism provides $(0, \delta)$-DP guarantee if $\gamma < \log(\frac{\alpha}{\alpha-1})$ and $\delta\in \big[\zeta_\alpha e^{(\alpha-1)\gamma},\,\frac{1}{\alpha}\big]$.  See Appendix~\ref{Appendix_Remark-eps=0} for the detailed derivation. Notice that this is stronger than what would be obtained from Theorem~\ref{Theorem:RDP_Delta} from which $(0, \delta)$-DP cannot be achieved for  $\gamma>0$. 
\end{remark}

\section{Improved Moments Accountant and Gaussian Mechanisms}\label{Sec:Gaussian}
Moments accountant (MA) was recently proposed by Abadi et al. \cite{Abadi_MomentAccountant} as a method to bypass advanced composition theorems \cite{Vadhan_Composition,Kairouz_Composition}. Given a mechanism $\calM$, the $T$-fold adaptive homogeneous composition $\calM^{(T)}$ is a mechanism that consists of $T$ copies of $\calM$, i.e., $(\calM^1, \dots, \calM^T)$ such that the input of $\calM^i$ may depend on the outputs of $\calM^1, \dots, \calM^{i-1}$.  
Determining the privacy parameters of $\calM^{(T)}$ in terms of those of $\calM$ is an important problem in practice and thus has been subject of an extensive body of research, see e.g., \cite{Vadhan_Composition, Kairouz_Composition, Abadi_MomentAccountant, Balle_AnalyticMomentAccountant}.

Advanced composition theorems \cite{Vadhan_Composition,Kairouz_Composition} are well-known results that provide the DP parameters of $\calM^{(T)}$ for general mechanisms. However, they can be loose and do not take into account the particular noise distribution under consideration (e.g., Gaussian noise). MA was shown to significantly improve upon advanced composition theorems in specific applications such as SGD. The cornerstone of MA is the linear composability of RDP:  If $\calM^1, \dots, \calM^T$ are each $(\alpha, \gamma)$-RDP, then it is shown in  \cite[Theorem 2]{Abadi_MomentAccountant} that $\calM^{(T)}$ is $(\alpha, \gamma T)$-RDP. This result is then translated into DP privacy parameters via Theorem~\ref{Theorem:RDP_Delta}. In general, we assume this holds for \textit{all} $\alpha>1$ and hence   one can obtain the \textit{best} privacy parameters by optimizing over $\alpha$. That is, $\calM^{(T)}$ is $(\eps, \delta)$-DP for any $\eps\geq 0$ and 
\begin{equation}\label{eq:Delta_MA}
    \delta = \inf_{\alpha>1}e^{-(\alpha-1)(\eps- \gamma(\alpha)T)},
\end{equation}
where $\gamma (\alpha) \coloneqq \max_{d\sim d'}D_\alpha(\calM_d\|\calM_{d'})$  is the RDP parameter of the constituent mechanism 
$\calM$ and the dependence on $\alpha$ is made clear. Equivalently, $\calM^{(T)}$ is $(\eps, \delta)$-DP for $\delta\in (0,1)$ and  \begin{equation}\label{eq:eps_MA}
    \eps = \inf_{\alpha>1}\gamma(\alpha)  T - \frac{1}{\alpha-1}\log\delta.
\end{equation}
Since $\alpha\mapsto (\alpha-1)D_{\alpha}(P\|Q)$ is convex \cite[Corollary 2]{van_Erven} for any pair of probability measures $P$ and $Q$, the above minimization is a log-convex problem, and hence, can be solved within an arbitrary accuracy. Furthermore, we show in Section~\ref{Sec:Gaussian} that this minimization has a simple form for Gaussian mechanisms and can be solved analytically.
For the rest of this section, we assume $\calM$ is a Gaussian mechanism and exploit  Lemma~\ref{Lemma_epsilon_Approximate} to derive tighter privacy parameters than \eqref{eq:eps_MA}. 
\begin{figure}
    \centering
     \includegraphics[scale=0.55]{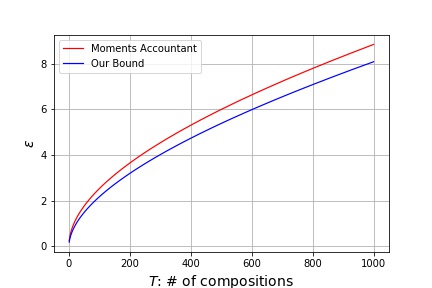}
    \caption{The privacy parameter $\eps$ of the $T$-fold homogeneous composition of Gaussian mechanism each with $\sigma = 20$ according to MA (cf. \eqref{eq:eps_MA2}) and our bound in Lemma~\ref{Lemma:Bound_on_Eps_MA}. We assume $\delta = 10^{-5}$.}
    \label{fig:Composition}
\end{figure}
\subsection{Composition Results for Gaussian Mechanisms}
 Let $f:\mathbbD\to \Reals^n$ be a query function and $\calM$ be a Gaussian mechanism with variance $\sigma^2$; more specifically,  $\mathbbX = \Reals^n$ and $\calM_d = \calN(f(d), \sigma^2\mathrm{I}_n)$ for each $d\in \mathbbD$. For simplicity, we assume that $f$ has unit $L_2$-sensitivity, i.e., $\sup_{d\sim d'}\|f(d)-f(d')\|_2=1$. Since 
 \begin{equation}
     \sup_{d\sim d'}D_\alpha(\calM_d\|\calM_{d'}) = \frac{\alpha}{2\sigma^2}\sup_{d\sim d'}\|f(d)-f(d')\|_2 =\frac{\alpha}{2\sigma^2},
 \end{equation} it follows that $\calM$ is $(\alpha, \gamma(\alpha))$-RDP  for all $\alpha>1$ where $\gamma(\alpha) = \rho\alpha$ and $\rho=\frac{1}{2\sigma^2}$.
In light of the linear composability of RDP, we obtain that  $\calM^{(T)}$, the $T$-fold adaptive composition of $\calM$, is $(\alpha,\gamma(\alpha) T)$-RDP.  Hence, we deduce from \eqref{eq:eps_MA} that $\calM^{(T)}$ is $(\eps, \delta)$-DP for any $\delta\in (0,1)$ and
\begin{equation}\label{eq:eps_MA2}
    \eps=\inf_{\alpha>1}\gamma(\alpha) T - \frac{1}{\alpha-1}\log\delta
     = \rho T + \sqrt{4\rho T\log\frac{1}{\delta}}.
\end{equation}
We next use the machinery developed in the previous section to obtain a tighter bound for the privacy parameter of $\calM^{(T)}$ than \eqref{eq:eps_MA2}. To do so, define 
\begin{equation}\label{def:eps_T}
    \eps^\delta(\rho, T)\coloneqq \inf_{\alpha>1}\eps_\alpha^{\delta}(\rho\alpha T).
\end{equation}
Invoking Lemma~\ref{Lemma_epsilon_Approximate}, we can obtain an upper bound for $\eps^\delta(\rho, T)$.  
\begin{lem}\label{Lemma:Bound_on_Eps_MA}
    The $T$-fold adaptive homogeneous composition of the Gaussian mechanism with variance $\sigma^2$ is $(\eps^\delta(\rho, T), \delta)$-DP with  $\delta\in (0,1)$ and 
    \begin{equation}
	\label{eq:epsilon_gBd}
	\eps^\delta(\rho, T)\leq \min \Big\{\eps_0(\rho, T), ~\eps_1(\rho, T),~ \Big(\frac{\rho T}{\delta} + \log(1-\delta)\Big)_+\Big\},
	\end{equation} 
    where $\rho= \frac{1}{2\sigma^2}$ and 
    \eqn{eps0}{\eps_0(\rho, T)\coloneqq \inf_{\alpha\in(1,\frac{1}{\delta}]}  \left(\rho\alpha T-\frac{1}{\alpha-1}\log\frac{\delta}{\zeta_\alpha}\right)_+,}
    \eqn{eps1}{\eps_1(\rho, T)\coloneqq  \inf_{\alpha\in(1,\frac{1}{\delta}]} \, \frac{1}{\alpha-1}\log\Big(1+\frac{e^{\rho\alpha(\alpha-1)T}-1}{\alpha\delta}\Big),}
    and $\zeta_\alpha$ is as defined in Theorem~\ref{Thm:Lower_BOund_Gamma}.
\end{lem}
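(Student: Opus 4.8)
The plan is to obtain Lemma~\ref{Lemma:Bound_on_Eps_MA} as a direct consequence of Lemma~\ref{Lemma_epsilon_Approximate} once the R\'enyi profile of $\calM^{(T)}$ is in hand. First I would record that, as computed in the display preceding the lemma, the unit-sensitivity Gaussian mechanism $\calM$ with variance $\sigma^2$ obeys $\sup_{d\sim d'}D_\alpha(\calM_d\|\calM_{d'})=\rho\alpha$ with $\rho=\tfrac{1}{2\sigma^2}$, so $\calM$ is $(\alpha,\rho\alpha)$-RDP for every $\alpha>1$. Invoking the linear composability of RDP (\cite[Theorem 2]{Abadi_MomentAccountant}), the $T$-fold adaptive composition $\calM^{(T)}$ is $(\alpha,\rho\alpha T)$-RDP for every $\alpha>1$. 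By the definition \eqref{FundamentalEps} of $\eps_\alpha^\delta(\cdot)$ --- together with the fact that $\lambda\mapsto\sE_\lambda(P\|Q)$ is non-increasing and continuous for fixed $P,Q$, so the infimum in \eqref{FundamentalEps} is attained --- $\calM^{(T)}$ is $(\eps_\alpha^\delta(\rho\alpha T),\delta)$-DP for each $\alpha>1$, hence $(\eps^\delta(\rho,T),\delta)$-DP with $\eps^\delta(\rho,T)=\inf_{\alpha>1}\eps_\alpha^\delta(\rho\alpha T)$ as in \eqref{def:eps_T}. It then remains only to upper bound this infimum by the three quantities in \eqref{eq:epsilon_gBd}.

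Next I would split the range $\{\alpha>1\}$ at the threshold $\alpha=1/\delta$ separating the two regimes of Lemma~\ref{Lemma_epsilon_Approximate}. For $\alpha\geq 1/\delta$ (i.e.\ $\alpha\delta\geq 1$) that lemma gives $\eps_\alpha^\delta(\rho\alpha T)=\bigl(\rho\alpha T+\log(1-\delta)\bigr)_+$, which is non-decreasing in $\alpha$; hence $\inf_{\alpha\geq 1/\delta}\eps_\alpha^\delta(\rho\alpha T)=\bigl(\tfrac{\rho T}{\delta}+\log(1-\delta)\bigr)_+$, attained at $\alpha=1/\delta$. Since $\eps^\delta(\rho,T)$ is an infimum over all $\alpha>1$, restricting to this sub-range can only increase it, which yields the third term of \eqref{eq:epsilon_gBd}. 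For $1<\alpha\leq 1/\delta$ (i.e.\ $0<\alpha\delta\leq 1$) I would feed $\gamma=\rho\alpha T$ into the two-term bound of Lemma~\ref{Lemma_epsilon_Approximate}. Using $\tfrac{1}{\alpha-1}(x)_+=\bigl(\tfrac{x}{\alpha-1}\bigr)_+$ for $\alpha>1$, the first minimand becomes $\bigl(\rho\alpha T-\tfrac{1}{\alpha-1}\log\tfrac{\delta}{\zeta_\alpha}\bigr)_+$ and the second becomes $\tfrac{1}{\alpha-1}\log\bigl(1+\tfrac{e^{\rho\alpha(\alpha-1)T}-1}{\alpha\delta}\bigr)$; taking the infimum of the first over $\alpha\in(1,\tfrac1\delta]$ produces exactly $\eps_0(\rho,T)$ of \eqref{eps0}, and of the second produces $\eps_1(\rho,T)$ of \eqref{eps1}. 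By the same ``restrict-the-infimum'' observation each of the three expressions upper bounds $\eps^\delta(\rho,T)$, and combining them gives \eqref{eq:epsilon_gBd}.

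Since the argument is essentially a substitution into Lemma~\ref{Lemma_epsilon_Approximate}, I do not expect a genuine obstacle; the points that need care are (i) the boundary $\alpha=1/\delta$, where one should verify that the two case-formulas of Lemma~\ref{Lemma_epsilon_Approximate} agree --- using $\zeta_{1/\delta}=\delta(1-\delta)^{1/\delta-1}$ one checks the first minimand there also equals $\bigl(\tfrac{\rho T}{\delta}+\log(1-\delta)\bigr)_+$, so the endpoint may be consistently included in the ranges defining $\eps_0$ and $\eps_1$; and (ii) the passage from ``$(\eps,\delta)$-DP for every $\eps>\eps_\alpha^\delta(\rho\alpha T)$'' to ``$(\eps_\alpha^\delta(\rho\alpha T),\delta)$-DP'', which is justified by the continuity and monotonicity of $\lambda\mapsto\sE_\lambda$ noted above. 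I would also remark that the infimum defining $\eps^\delta(\rho,T)$ is over a nonempty set and is finite --- it is already bounded by the closed form \eqref{eq:eps_MA2} --- so all the infima above are well defined.
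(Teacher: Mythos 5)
Your proposal is correct and follows essentially the same route as the paper's proof: compose the per-iteration $(\alpha,\rho\alpha)$-RDP guarantee linearly to get $(\alpha,\rho\alpha T)$-RDP, substitute $\gamma=\rho\alpha T$ into the two regimes of Lemma~\ref{Lemma_epsilon_Approximate}, take the infimum over $\alpha\geq 1/\delta$ (attained at $\alpha=1/\delta$, giving the third term) and over $1<\alpha\leq 1/\delta$ (giving $\eps_0$ and $\eps_1$), and combine. Your extra checks (the boundary consistency at $\alpha=1/\delta$ and the attainment of the defining infimum in \eqref{FundamentalEps}) are sound refinements of the same argument.
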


\begin{figure}[t]
	\centering
 	\includegraphics[scale = 0.55]{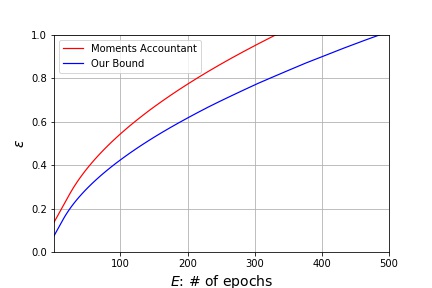}
	\caption{Privacy parameter $\eps$ of noisy SGD algorithm according to MA (cf. \eqref{eq:eps_MA2}) and our bound (cf. 	Lemma~\ref{Lemma:Bound_on_Eps_MA}) for $\delta=10^{-5}$. The parameters of the algorithm are $\sigma = 4$ and the sub-sampling rate $q = 0.001$.}
	\label{fig:SGD}
\end{figure}

The bound given in this lemma can shed light on the optimal variance of the Gaussian mechanism $\calM$ required to ensure that $\calM^{(T)}$ is $(\eps, \delta)$-DP. To put our result in perspective, we first mention two previously-known bounds.
Advanced composition theorems (see, e.g., \cite[Theorem III.3]{Vadhan_Composition}) require $\sigma^2 = \Omega(\frac{T\log(1/\delta)\log(T/\delta)}{\eps^2})$. Abadi et al. \cite[Theorem 1]{Abadi_MomentAccountant} improved this result by showing that $\sigma^2$ suffices to be linear in $T$; more precisely, $\sigma^2 = \Omega(\frac{T\log(1/\delta)}{\eps^2})$. To have a better comparison with our final result, we write this result more explicitly. Plugging $\gamma(\alpha) = \frac{\alpha}{2\sigma^2}$ into \eqref{eq:eps_MA} (or \eqref{eq:Delta_MA}), we can write 
\begin{align}
    \frac{T}{2\sigma^2}&\leq \sup_{\alpha>1}\frac{\eps}{\alpha}+ \frac{1}{\alpha(\alpha-1)}\log\delta\\
    &=\eps - 2\log\delta -2 \sqrt{(\eps-\log\delta)\log\frac{1}{\delta}},
\end{align}
and hence assuming $\delta$ is sufficiently small, we obtain 
\begin{equation}\label{eq:variance_MA}
    \sigma^2\geq \frac{2T}{\eps^2}\log\frac{1}{\delta} + \frac{T}{\eps} + O\left(\frac{1}{\log\delta^{-1}}\right).
\end{equation}
We are now in order to state our result.
\begin{thm}\label{Theorem_noise}
The $T$-fold adaptive homogeneous composition of a Gaussian mechanism with variance $\sigma^2$ is  $(\eps, \delta)$-DP, for $\eps>2\delta\log\frac{1}{\delta}$, if 
\begin{align*}
    \sigma^2&\geq \frac{2T}{\eps^2}\log\frac{1}{\delta} + \frac{T}{\eps} -\frac{2T}{\eps^2}\left(\log(2\log\delta^{-1})+1-\log\eps\right) \\
    & \qquad  + O\left(\frac{\log^2(\log\delta^{-1})}{\log\delta^{-1}}\right).
\end{align*}
\end{thm}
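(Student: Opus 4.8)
The plan is to apply Lemma~\ref{Lemma:Bound_on_Eps_MA}, keeping only the first term $\eps_0(\rho,T)$ of the minimum in \eqref{eq:epsilon_gBd}, and then to expand the resulting threshold on $\sigma^2$ as $\delta\to 0$ with $\eps,T$ fixed. Write $\rho=\frac{1}{2\sigma^2}$ as in the lemma. Since $\eps^\delta(\rho,T)\le \eps_0(\rho,T)$, the composed mechanism $\calM^{(T)}$ is $(\eps,\delta)$-DP whenever $\eps_0(\rho,T)\le\eps$, and because $\eps\ge 0$ it is enough to exhibit \emph{one} admissible $\alpha\in(1,1/\delta]$ with $\rho\alpha T-\frac{1}{\alpha-1}\log\frac{\delta}{\zeta_\alpha}\le\eps$. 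Solving this inequality for $\sigma^2=\frac{1}{2\rho}$, it becomes
\begin{equation*}
\sigma^2\ \ge\ \frac{\alpha(\alpha-1)T}{2\bigl[(\alpha-1)\eps+\log(\delta/\zeta_\alpha)\bigr]},
\end{equation*}
valid whenever the bracket is positive. Thus it suffices to pick a single near-optimal $\alpha$ and evaluate the right-hand side; we need not locate the exact minimizer of \eqref{eps0}.

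Guided by the moments-accountant computation that produces \eqref{eq:variance_MA}, I would take $\alpha:=1+\frac{2}{\eps}\log\frac1\delta$. This is admissible: $\alpha>1$ is clear, and $\alpha\le 1/\delta$ holds for $\delta$ small under the hypothesis $\eps>2\delta\log\frac1\delta$. For this $\alpha$, $(\alpha-1)\eps=2\log\frac1\delta$ and $\alpha=\Theta\bigl(\eps^{-1}\log\frac1\delta\bigr)$ is large. The decisive input is the asymptotics of $\zeta_\alpha=\frac1\alpha\bigl(1-\frac1\alpha\bigr)^{\alpha-1}$: because $(\alpha-1)\log(1-\frac1\alpha)=-1+O(1/\alpha)$, one has $\log\zeta_\alpha=-\log\alpha-1+O(1/\alpha)$, whence
\begin{equation*}
\log\frac{\delta}{\zeta_\alpha}=-\log\tfrac1\delta+\log\alpha+1+O(1/\alpha)=-\log\tfrac1\delta+\log\tfrac{2\log(1/\delta)}{\eps}+1+O\!\Bigl(\tfrac{1}{\log(1/\delta)}\Bigr).
\end{equation*}
Hence the bracket equals $\log\tfrac1\delta+\log\tfrac{2\log(1/\delta)}{\eps}+1+O(\cdot)>0$, while $\alpha(\alpha-1)=\frac{4}{\eps^2}\log^2\frac1\delta+\frac{2}{\eps}\log\frac1\delta$.

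Substituting these into the displayed lower bound for $\sigma^2$, pulling the factor $2\log\frac1\delta$ out of the denominator, and expanding $\frac{1}{1+u}=1-u+O(u^2)$ with $u=O\bigl(\log\log(1/\delta)/\log(1/\delta)\bigr)$, everything collapses to
\begin{equation*}
\sigma^2\ \ge\ \frac{2T}{\eps^2}\log\frac1\delta+\frac{T}{\eps}-\frac{2T}{\eps^2}\Bigl(\log\bigl(2\log\tfrac1\delta\bigr)+1-\log\eps\Bigr)+O\!\Bigl(\frac{\log^2(\log\tfrac1\delta)}{\log\tfrac1\delta}\Bigr),
\end{equation*}
which is exactly the claim.

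The main obstacle is the asymptotic bookkeeping in this last step. In particular, the constant $+1$ inside the parenthesis must be tracked carefully; it arises precisely from the $-1$ in $\log\zeta_\alpha=-\log\alpha-1+O(1/\alpha)$, i.e.\ from $(\alpha-1)\log(1-1/\alpha)\to -1$. A second point that needs care is checking that the chosen $\alpha$ stays in $(1,1/\delta]$ over the whole range $\eps>2\delta\log\frac1\delta$ (if not, one shrinks $\alpha$ slightly, which is absorbed into the error term). It is finally worth noting that rerunning the argument with the second term $\eps_1(\rho,T)$ of \eqref{eq:epsilon_gBd} in place of $\eps_0$ yields the same expression but \emph{without} the $+1$; so it is exactly the $\zeta_\alpha$-refinement carried by $\eps_0$ that produces the improvement over the moments-accountant bound \eqref{eq:variance_MA}, which has no correction beyond $\frac{2T}{\eps^2}\log\frac1\delta+\frac{T}{\eps}$.
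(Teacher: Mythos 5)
Your proposal is correct and follows essentially the same route as the paper: the paper's proof likewise keeps only the $\eps_0$ (i.e., $\zeta_\alpha$-based) part of Lemma~\ref{Lemma:Bound_on_Eps_MA}, plugs in $\alpha=\frac{2\log\delta^{-1}}{\eps}$ (your choice differs only by an immaterial additive $1$, and your admissibility caveat corresponds to the paper's hypothesis $\eps\geq 2\delta\log\delta^{-1}$), solves for $\sigma^2$, and Taylor-expands at $\delta=0$ to obtain the stated bound with error $O\bigl(\log^2(\log\delta^{-1})/\log\delta^{-1}\bigr)$. Your bookkeeping, including the $+1$ arising from $\log\zeta_\alpha=-\log\alpha-1+O(1/\alpha)$, is consistent with the paper's expansion.
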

The proof of this theorem is based on a relaxation of Theorem~\ref{Thm:Lower_BOund_Gamma} obtained by ignoring $f(\alpha, \eps, \delta)$. Considering both $f$ and $g$ will result in a stronger result at the expense of more involved analysis. Comparing with \eqref{eq:variance_MA}, Theorem~\ref{Theorem_noise} indicates that, providing $\delta$ is sufficiently small, the variance of each constituent Gaussian mechanism  can  be reduced by $\frac{2T}{\eps^2}\left(\log(2\log\delta^{-1})+1-\log\eps\right)$ compared to what would be obtained from MA.

\subsection{Illustration of Our Bounds}\label{Sec:Abadi_SGD}
 We now empirically compare Lemma~\ref{Lemma:Bound_on_Eps_MA} with the MA guarantee \eqref{eq:eps_MA2} that has been extensively used in the state-of-the-art differentially private algorithms, e.g., \cite{Balle2019mixing, Balle:Subsampling, RDP_SemiSupervised, RDP2_PosteriorSampling, Feldman2018PrivacyAB, Balle_AnalyticMomentAccountant, bhowmick2018protection, McMahan_Privacy}. We do so in two different settings: (1) vanilla $T$-fold composition of the Gaussian mechanisms with fixed variance, and (2) noisy SGD algorithm.
 \begin{description}
     \item[\textbf{Vanilla Gaussian Composition:}] Here, we wish to obtain bounds on the privacy parameter $\eps$ of $\calM^{(T)}$ where $\calM$ is a Gaussian mechanism  with $\sigma=20$. In Fig.~\ref{fig:Composition}, we compare  Lemma~\ref{Lemma:Bound_on_Eps_MA} with MA when $\delta = 10^{-5}$. According to this plot, our result enables us to achieve a smaller privacy parameter by up to $0.75$, i.e., $\max_{T\in [1000]} \eps^\delta_{\mathsf{MA}}(\rho, T)- \eps^\delta(\rho, T) = 0.75$ where $\eps^\delta_{\mathsf{MA}}(\rho, T)$ is the $\eps$ given in \eqref{eq:eps_MA2}. This privacy amplification may have important impacts on recent private deep leaning algorithms. Alternatively, one can observe that our result allows for more iteration for the same $\eps$, e.g., 100 more iterations for any $\eps$ larger than $6$.
     \item[Noisy SGD:] SGD is the standard algorithm for training many machine learning models. In order to fit a model without compromising privacy, a standard practice is to add Gaussian noise to the gradient of each mini-batch, see e.g., \cite{Abadi_MomentAccountant, Shokri:DeepLearning, Feldman2018PrivacyAB, Balle2019mixing,chaudhuri2011differentially, Bassily1, BAssily2}. The prime use of  MA  was to exploit the RDP's simple composition property in deriving the privacy parameters of the  noisy SGD algorithm \cite[Algorithm 1]{Abadi_MomentAccountant}. To have a fair comparison, we analyze this algorithm (see Algorithm~\ref{alg:noisySGD}) with the sub-sampling rate $q=0.001$ and noise parameter $\sigma = 4$ and then compute its DP parameter via \eqref{eq:eps_MA2} with $\rho=\frac{q^2}{(1-q)\sigma^2}$ (see \cite[Lemma 3]{Abadi_MomentAccountant}) and $\delta = 10^{-5}$. We then compare it in Fig.\ \ref{fig:SGD} with Lemma~\ref{Lemma:Bound_on_Eps_MA} with the same $\rho$ and $\sigma$. As demonstrated in this figure, our result allows remarkably more epochs (often over a hundred) within the same privacy budget and thus providing higher utility. 
 \end{description}
\begin{algorithm}
\caption{Noisy SGD}
\label{alg:noisySGD}
\begin{algorithmic}[1]
\STATE{{\bf Input:} Dataset $d=\{x_1, \dots, x_n\}$, loss function $\ell(\theta, x)$, initial point $\theta_0$, batch size $m$, noise variance $\sigma^2$, and clipping threshold $C$.}
\FOR{$t= 1, \dots, T$}
\STATE{Select randomly a batch $\mathsf I_t\subset [n]$ of size $m$}
\STATE{$g_t(x_i) = \nabla_\theta\ell(\theta_t, x_i),$ for $i\in \mathsf I_t$}
\STATE{$\tilde g_t(x_i) = g_t(x_i)\min\{1, \frac{C}{\|g_t(x_i)\|_2}\}$}
\STATE{$\theta_{t+1} = \theta_t - \frac{\eta_t}{m}\left[\sum_{i\in \mathsf I_t}\tilde g_t(x_i) + \sigma C Z\right]$, ~~ $Z\sim \calN(0, \mathrm{I})$}   
\ENDFOR
\STATE{{\bf Output} $\theta_T$}
\end{algorithmic}
\end{algorithm}

Since Lemma~\ref{Lemma_epsilon_Approximate} is shown to improve on the composition results of MA, it is reasonable to  construct the \textit{improved} MA: First use the linear composability of RDP to take into account the composition and then use Lemma~\ref{Lemma_epsilon_Approximate} to convert the resulting RDP guarantee to $(\eps, \delta)$-DP. We next show that improved MA might lead to tighter guarantee than hypothesis test privacy.

\subsection{Comparison with $f$-DP} 
As mentioned earlier, $f$-DP (cf. Definition~\ref{Def:f_DP}) leads to stronger DP guarantee than what is obtained by MA  for noisy SGD algorithms. 
More precisely, Bu et al. \cite[Theorem 2]{GaussianDP_deep} showed that if one applies composition results of $f$-DP (i.e., \cite[Theorem 3.2]{GaussianDP}) to noisy SGD algorithms and then converts it to DP (via \cite[Proposition 3.12]{GaussianDP}), then the resulting $\eps$ is asymptotically smaller than \eqref{eq:eps_MA2} for  any $\delta\in (0,1)$ provided that the sub-sampling rate $q$ is scaled as $\frac{1}{\sqrt{T}}$ with $T$ being the number of iteration. A natural question raised here is whether this result still holds if we replace MA with the improved MA.

In Fig.~\ref{fig:MA_GDP}, we consider noisy SGD algorithm with Gaussian noise with $\sigma = 0.6$ and sub-sampling rate $q = 0.003$ (similar to  \cite[Fig. 2]{GaussianDP_deep}) and compare Lemma~\ref{Lemma:Bound_on_Eps_MA} with \cite[Theorem 2]{GaussianDP_deep}. As clearly illustrated by this figure, the improved MA may yield tighter privacy guarantees than what $f$-DP promises.
\begin{figure}[h]
	\centering
\includegraphics[scale = 0.5]{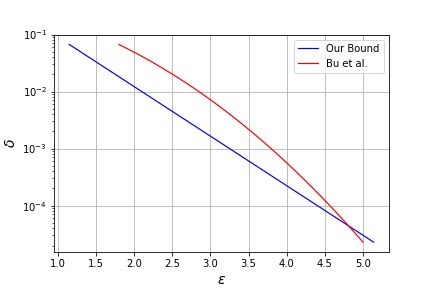}
	\caption{Comparison of parameters of $\eps$ and $\delta$ in noisy SGD algorithm obtained from Lemma~\ref{Lemma:Bound_on_Eps_MA} and  \cite[Theorem 2]{GaussianDP_deep}. The parameters of the algorithm are as follows: $q = 0.003$, epoch $E = 30$ (hence  $T = \frac{E}{q} = 10000$), and $\sigma = 0.6$. }
	\label{fig:MA_GDP}
\end{figure}

\section{Hypothesis Testing Privacy}\label{Section:BHT}
In this section, we investigate the relationship between RDP and hypothesis test privacy, that is, we focus on Question Two in the introduction.  Let $X$ be the output of a mechanism $\calM$. For any pair of neighboring dataset $d\sim d'$, we consider the hypothesis test (repeated from the introduction for convenience)
\begin{align}\label{Hypothesis2}
    H_0:~ X \sim \calM_d~~~\text{vs.}~~~
    H_1:~ X \sim \calM_{d'}.
\end{align}
The fundamental efficiency of a randomized test between $H_0$ and $H_1$ is delineated by a \textit{decision rule}, a random transformation $P_{Z|X}:\mathbbX \to \mathcal P(\{0,1\})$ where $1$ indicates that $H_0$ is rejected. Type I and type II error probabilities corresponding to the decision rule $P_{Z|X}$ are given by $\int P_{Z|X}(1|x)\calM_d(\text{d}x)$ and $\int P_{Z|X}(0|x)\calM_{d'}(\text{d}x)$, respectively. To capture the optimal tradeoff between type I and type II error probabilities, it is customary to define \textit{tradeoff function} $\beta^{dd'}_\calM:[0,1]\to [0,1]$ given by
\begin{equation}\label{Def:beta_BHT2}
    \beta^{dd'}_\calM(\tau) \coloneqq  \inf~\int P_{Z|X}(0|x)\calM_{d'}(\text{d}x)
\end{equation}
where the infimum is taken over all decision rules $P_{Z|X}$ such that $\int P_{Z|X}(1|x)\calM_d(\text{d}x)\leq \tau$.

Note that we can always assume, without loss of generality, that $\tau + \beta^{dd'}_\calM(\tau)\leq 1$, since for any decision rule one can take its negation. The line $\tau + \beta^{dd'}_\calM(\tau) = 1$ indicates the complete indistinguishability between $d$ and $d'$ on the basis of a mechanism's output. It follows from the definition that the map $\tau \mapsto \beta_\calM^{dd'}(\tau)$ is non-increasing and convex. Recall that the mechanism $\calM$ is said to be $f$-DP for a convex and non-increasing function $f$ that is is majorized by $\inf_{d\sim d'}\beta^{dd'}_\calM$, that is if $f(\tau)\leq \inf_{d\sim d'}\beta^{dd'}_\calM(\tau)$ for any $\tau\in [0,1]$. Hence, the problem of determining the relationship between RDP and $f$-DP reduces to characterizing the set  $(\tau, \beta)$ such that $\beta\geq \inf_{d\sim d'}\beta_\calM^{dd'}(\tau)$ for all mechanisms $\calM$ with a certain level of RDP guarantee.  To this goal, we  define the \textit{privacy region} of mechanism $\calM$ as
$$\calC_\calM\coloneqq \bigcup_{d\sim d'}\{(\tau, \beta)\in [0,1]^2:~\beta_\calM^{dd'}(\tau)\leq\beta\leq 1-\tau\}.$$
It was shown by \cite{Wasserman, Kairouz_Composition} that a mechanism $\calM$ is $(\eps, \delta)$-DP if and only if 
\begin{equation}\label{def:C_eps_delta}
\calC_\calM\subseteq
    \calC(\eps, \delta)\coloneqq \{(\tau, \beta)\in [0, 1]^2: \tau + e^\eps \beta\geq 1-\delta, \beta + e^\eps \tau\geq \bar\delta\}.
\end{equation}
\begin{figure}[t]
	\centering
 	\includegraphics[scale = 0.45]{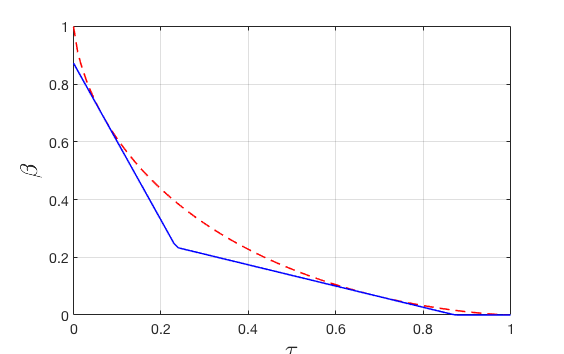}
	\caption{Two outer bounds for the Gaussian mechanism with $\sigma^2=1$: The red curve is the map $\tau \mapsto \Phi\left(\Phi^{-1}(\bar \tau)-1/\sigma\right)$ and the blue curve specifies the region $\calC(\eps, \delta_\eps)$ for $\eps=1$ and $\delta_\eps$ given in \eqref{Eq:Delta_gaussian}.}
	\label{fig:Gaussian}
\end{figure}

\begin{remark}\label{Remark:duality}
Recall from the definition of $\sE_\lambda$-divergence \eqref{Defi_HS_Divergence} that, for any pair of distributions $(P,Q)$ and positive $\lambda$, we have $\sE_\lambda(P\|Q) = P(\frac{\text{d}P}{\text{d}Q}\geq \lambda)-\lambda Q(\frac{\text{d}P}{\text{d}Q}\geq \lambda)$.
Since according to Neyman-Pearson lemma $\beta_\calM^{dd'}(\tau) = \calM_{d'}(\log\frac{\text{d}\calM_d}{\text{d}\calM_{d'}}\geq \eps)$ where $\tau = \calM_{d}(\log\frac{\text{d}\calM_d}{\text{d}\calM_{d'}}\leq\eps)$, it follows that the line $\beta = e^{-\eps}(1-\tau - \delta_\eps)$, with $\delta_\eps \coloneqq  \min_{d\sim d'}\sE_{e^\eps}(\calM_d\|\calM_{d'})$, supports $\calM$ from below. Swapping $d$ and $d'$, we deduce that the line $\beta = 1-\delta_\eps - e^{\eps}\tau $ is another supporting line of $\calC_\calM$ with slope $e^{\eps}$. Due to the convexity of $\calC_\calM$, the collection of all supporting lines losslessly constructs $\calC_\calM$; thus,  $\bigcap_{\eps\geq 0}\calC(\eps, \delta_\eps)  = \calC_\calM$.  In other words, the collection of $\{(\eps, \delta_\eps)\}_{\eps\geq 0}$ and the mapping $\tau\mapsto \inf_{d\sim d'}\beta^{dd'}_\calM(\tau)$ capture the same privacy guarantee. This provides a new lens to explore, delineate and interpret privacy guarantee achieved by differential privacy. This new perspective has recently been adopted by Dong et al. \cite{GaussianDP}.
To illustrate this observation, consider the Gaussian mechanism. It is easy to see that for Gaussian mechanisms (assuming unit $L_2$-sensitivity)
\begin{equation}\label{Eq:Delta_gaussian}
    \delta_\eps=  \Phi\Big(-\eps\sigma+\frac{1}{2\sigma}\Big)-e^\eps\Phi\Big(-\eps\sigma-\frac{1}{2\sigma}\Big),
\end{equation} 
where $\Phi$ is the standard normal CDF. On the other hand, for a Gaussian mechanism $\calM$ with variance $\sigma^2$, the Neyman-Pearson lemma implies that the tradeoff function $\inf_{d\sim d'}\beta^{dd'}_\calM(\tau) = G_{\frac{1}{\sigma}}(\tau)$, where   \begin{equation}\label{Eq:Gaussian_beta}
     G_\mu(\tau) = \Phi\Big(\Phi^{-1}(1-\tau)-\mu\Big),
\end{equation}
and $\Phi^{-1}$ is the inverse of $\Phi$. It is worth mentioning that $G_\mu(\tau)$ in fact corresponds to the smallest type II error probability of testing $\calN(0, 1)$ against $\calN(\mu, 1)$ with type I error probability being $\tau$. 
In Fig.~\ref{fig:Gaussian}, we identify the region $\calC_\calM$ by its lower boundary (red curve) given by the above tradeoff function and its upper boundary $\beta = 1-\tau$. The blue curve is the lower boundary of $\calC(\eps, \delta_\eps)$ for $\eps = 1$. 
\end{remark}

While the DP constraint can be operationally interpreted via \eqref{def:C_eps_delta}, it is not clear how to obtain a similar interpretation for RDP constraint. 
Nevertheless, we wish to obtain some implications of a mechanism's RDP constraints on its privacy regions. We begin by giving an explicit formula for the RDP guarantee of a mechanism in terms of the derivative of the map $\tau\mapsto \beta_\calM^{dd'}(\tau)$ for $d\sim d'$. 
\begin{prop}\label{Prop:Polyankiy_RDP}
Given $\alpha>1$, a mechanism $\calM$ is $(\alpha, \gamma)$-RDP for 
\begin{equation}
    \gamma = \sup_{d\sim d'}\frac{1}{\alpha-1}\log\left(1-\beta_\calM^{dd'}(0) + \int_0^1|\Gamma_{dd'}(\tau)|^{1-\alpha}\textnormal{d}\tau\right),
\end{equation}
where $\Gamma_{dd'}(\tau) = \frac{\textnormal{d}}{\textnormal{d}\tau}\beta^{dd'}_\calM(\tau)$.
\end{prop}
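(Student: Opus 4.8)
The plan is to prove, for every ordered neighboring pair $d\sim d'$, the identity
\[
    e^{(\alpha-1)D_\alpha(\calM_d\|\calM_{d'})}=\int_0^1\big|\Gamma_{dd'}(\tau)\big|^{1-\alpha}\,\mathrm{d}\tau ,
\]
after which the displayed $\gamma$ — obtained by adding the nonnegative quantity $1-\beta^{dd'}_\calM(0)$ inside the logarithm — is an upper bound on $D_\alpha(\calM_d\|\calM_{d'})$, and taking a supremum over $d\sim d'$ yields the claimed RDP guarantee. Fix such a pair, write $P\coloneqq\calM_d$, $Q\coloneqq\calM_{d'}$, and let $L\coloneqq\log\frac{\mathrm{d}P}{\mathrm{d}Q}$ be the privacy loss variable, so that $e^{(\alpha-1)D_\alpha(P\|Q)}=\mathbb{E}_Q\big[e^{\alpha L}\big]=\mathbb{E}_P\big[e^{(\alpha-1)L}\big]$. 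If $P\not\ll Q$, then $D_\alpha(P\|Q)=+\infty$, and I would observe that in this case $\beta^{dd'}_\calM$ is identically $0$ on an interval $[1-m,1]$ with $m=P(L=+\infty)>0$, so $\Gamma_{dd'}$ vanishes there and $\int_0^1|\Gamma_{dd'}(\tau)|^{1-\alpha}\,\mathrm{d}\tau=+\infty$ as well; hence it suffices to treat the case $P\ll Q$.

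Next I would use the Neyman--Pearson description recorded in Remark~\ref{Remark:duality}: the optimal tradeoff function is the lower envelope of likelihood-ratio tests, so for a threshold $e^\eps$ one has type-I error $\tau(\eps)=P(L\le\eps)$ and type-II error $\beta^{dd'}_\calM(\tau(\eps))=Q(L\ge\eps)$. Let $\mu_P,\mu_Q$ denote the laws of $L$ under $P$ and $Q$; since $\frac{\mathrm{d}P}{\mathrm{d}Q}=e^L$ we have $\mathrm{d}\mu_P(\eps)=e^{\eps}\,\mathrm{d}\mu_Q(\eps)$. Moving along the curve $\eps\mapsto\big(\tau(\eps),\beta^{dd'}_\calM(\tau(\eps))\big)$ gives $\mathrm{d}\tau=\mathrm{d}\mu_P$ and $-\mathrm{d}\beta=\mathrm{d}\mu_Q$, hence $\Gamma_{dd'}(\tau)=\frac{\mathrm{d}\beta}{\mathrm{d}\tau}=-e^{-\eps(\tau)}$, i.e.\ $|\Gamma_{dd'}(\tau)|=e^{-\eps(\tau)}$, where $\eps(\tau)$ is the threshold attached to the type-I level $\tau$. (When $\mu_P$, equivalently $\mu_Q$, has an atom at some $\eps_0$, the tradeoff curve carries a straight segment of slope $-e^{-\eps_0}$ on which $\Gamma_{dd'}$ equals the constant $-e^{-\eps_0}$, so the relation holds for a.e.\ $\tau\in(0,1)$.)

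The last step is to compute $\int_0^1|\Gamma_{dd'}(\tau)|^{1-\alpha}\,\mathrm{d}\tau=\int_0^1 e^{(\alpha-1)\eps(\tau)}\,\mathrm{d}\tau$. Since $P\ll Q$, $L$ is finite $P$-a.s., so $\tau(\eps)=\mu_P\big((-\infty,\eps]\big)$ increases from $0$ to $1$ as $\eps$ runs over $\mathbb{R}$; substituting $\tau=\tau(\eps)$ converts the integral into $\int e^{(\alpha-1)\eps}\,\mathrm{d}\mu_P(\eps)=\mathbb{E}_P\big[e^{(\alpha-1)L}\big]=\mathbb{E}_Q\big[e^{\alpha L}\big]=e^{(\alpha-1)D_\alpha(P\|Q)}$, which is the desired identity. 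Since $1-\beta^{dd'}_\calM(0)\ge0$ and the logarithm is increasing, the displayed $\gamma$ dominates $\sup_{d\sim d'}D_\alpha(\calM_d\|\calM_{d'})$, so $\calM$ is $(\alpha,\gamma)$-RDP.

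The delicate part is the measure-theoretic bookkeeping at the atoms of $L$ and at the degenerate ends of the tradeoff curve: justifying the probability-integral-transform step $\int_0^1 h(\eps(\tau))\,\mathrm{d}\tau=\int h\,\mathrm{d}\mu_P$ when $\mu_P$ has atoms (so $\beta^{dd'}_\calM$ is only piecewise differentiable and $\Gamma_{dd'}$ is a step function on its linear pieces), and checking that the likelihood-ratio parametrization sweeps out exactly $\tau\in[0,1]$ when $P\ll Q$ while leaving a flat segment — the source of the $+\infty$ — when it does not. One should also note that an atom of $\mu_Q$ at $-\infty$, i.e.\ $Q(\frac{\mathrm{d}P}{\mathrm{d}Q}=0)>0$, contributes nothing to $\mathbb{E}_Q[e^{\alpha L}]$ but pushes $\beta^{dd'}_\calM(0)$ strictly below $1$; in that case the identity still holds with the integral alone, so the extra term $1-\beta^{dd'}_\calM(0)$ is a harmless over-estimate that vanishes whenever $\calM_d$ and $\calM_{d'}$ are mutually absolutely continuous (as for Gaussian mechanisms). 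I would carry out all of this by working throughout with the pushforward measures $\mu_P,\mu_Q$ and the identity $\mathrm{d}\mu_P=e^{\eps}\,\mathrm{d}\mu_Q$, so as never to differentiate $\beta^{dd'}_\calM$ pointwise.
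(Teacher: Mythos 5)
Your proposal is correct, and its computational core---parametrizing the Neyman--Pearson curve by the likelihood-ratio threshold, identifying $|\Gamma_{dd'}(\tau)|=e^{-\eps(\tau)}$ from $\mathrm{d}\mu_P=e^{\eps}\mathrm{d}\mu_Q$, and changing variables $\tau\mapsto\eps$ to recover a moment of the privacy loss---is the same change of variables the paper performs. The packaging differs, though, in an instructive way. The paper first proves a general identity for every $f$-divergence, namely \eqref{Eq:f_DP_beta}, by combining the change of measure $D_f(P\|Q)=\int f\big(\tfrac{\text{d}P}{\text{d}Q}\big)\tfrac{\text{d}Q}{\text{d}P}\text{d}P$ with the derivative relation \eqref{Derivative_tau_beta} (itself obtained through an integral representation of $\sE_\lambda$), and then specializes to $f(t)=\frac{t^\alpha-1}{\alpha-1}$, so that the $1-\beta^{dd'}_\calM(0)$ term emerges as part of a claimed exact identity for $D_\alpha$. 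You instead compute $\mathbb{E}_P\big[e^{(\alpha-1)L}\big]$ directly via the pushforward measures and the quantile transform, obtaining the clean identity $e^{(\alpha-1)D_\alpha(\calM_d\|\calM_{d'})}=\int_0^1|\Gamma_{dd'}(\tau)|^{1-\alpha}\text{d}\tau$ when $\calM_d\ll\calM_{d'}$, and you treat $1-\beta^{dd'}_\calM(0)\geq 0$ as nonnegative slack. Your reading is in fact the more careful one: the paper's opening change of measure silently discards the contribution $f(0)\,Q\big(\tfrac{\text{d}P}{\text{d}Q}=0\big)$, so \eqref{Eq:f_DP_beta} is exact only when the two output distributions are mutually absolutely continuous; when $\calM_{d'}\not\ll\calM_d$ (e.g.\ $P=\delta_1$, $Q=(1-q_0)\delta_1+q_0\delta_0$, where $e^{(\alpha-1)D_\alpha}=(1-q_0)^{1-\alpha}$ while the paper's expression gives $q_0+(1-q_0)^{1-\alpha}$) the paper's right-hand side exceeds the truth by exactly $1-\beta^{dd'}_\calM(0)$, whereas your identity is tight. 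Either way the displayed $\gamma$ upper-bounds $\sup_{d\sim d'}D_\alpha(\calM_d\|\calM_{d'})$, so the proposition's RDP conclusion stands under both derivations; your version additionally pins down exactly when the formula is tight, and your explicit treatment of $\calM_d\not\ll\calM_{d'}$ (the flat piece of the tradeoff curve forcing both sides to $+\infty$) covers a case the paper leaves implicit. The measure-theoretic bookkeeping you flag (atoms of the likelihood ratio giving linear segments, and the generalized-inverse form of the probability integral transform) is standard and no less rigorous than the paper's own Stieltjes-style manipulation.
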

The proof of this result relies on a general fact:  all $f$-divergences between $\calM_d$ and $\calM_{d'}$ can be explicitly expressed in terms of the derivative of $\beta^{dd'}_{\calM}$. This was mentioned, without a proof, in \cite[Eq.  (2.79)]{Polyanskiy_thesis} in a completely different context and was recently proved in \cite[Proposition B.4]{GaussianDP}. 
We give a more direct proof in Appendix~\ref{Appendix:Polyankiy_RDP}.

Proposition~\ref{Prop:Polyankiy_RDP} provides an explicit RDP guarantee for a mechanism with a given hypothesis test privacy constraint. The other direction seems more practical: Given an $(\alpha, \gamma)$-RDP mechanism, what can we say about its privacy region $\calC_\calM$? 
There are two approaches to address this question. First, one can use the machinery developed in Section~\ref{Sec:Conversion} to relate $(\alpha, \gamma)$-RDP constraint to $(\eps, \delta^\eps_\alpha(\gamma))$-DP and then declare $\calC(\eps, \delta^\eps_\alpha(\gamma))$
as an outer bound for the privacy region for any $\eps\geq 0$. Alternatively, one can use information theoretic results (such as data processing inequality) to \textit{directly} relate R\'enyi divergence to  type I and type II error probabilities in hypothesis testing \eqref{Hypothesis2} (see, e.g., \cite{Arimoto_Polyankiy}). In the following, we delineate these two approaches.

Since all $(\alpha, \gamma)$-RDP mechanisms are $(\eps, \delta^\eps_\alpha(\gamma))$-DP, we immediately obtain the following result from \eqref{def:C_eps_delta}.   

\begin{lem}\label{lemma:region_C_Eps}
Let $\calM$ be an $(\alpha, \gamma)$-RDP mechanism.  Then, we have
\begin{equation}
    \calC_\calM \subseteq \bigcap_{\eps\geq 0}\calC(\eps, \delta^\eps_\alpha(\gamma)).
\end{equation} 
\end{lem}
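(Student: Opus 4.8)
The statement is essentially a packaging of facts already assembled in Sections~\ref{Sec:Preliminary} and \ref{Sec:Conversion}, so the plan is short. First I would recall the operational characterization \eqref{def:C_eps_delta} due to \cite{Wasserman, Kairouz_Composition}: a mechanism $\calN$ is $(\eps,\delta)$-DP if and only if $\calC_{\calN}\subseteq \calC(\eps,\delta)$. Thus it suffices to show that the given $(\alpha,\gamma)$-RDP mechanism $\calM$ is $(\eps,\delta_\alpha^\eps(\gamma))$-DP \emph{simultaneously} for every $\eps\geq 0$; intersecting the resulting family of inclusions $\calC_\calM\subseteq\calC(\eps,\delta_\alpha^\eps(\gamma))$ over $\eps\geq 0$ then yields the claim.

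To establish that $\calM$ is $(\eps,\delta_\alpha^\eps(\gamma))$-DP for each fixed $\eps$, I would invoke the hockey-stick reformulation \eqref{DP_HS} together with the closed form \eqref{Fundamen_Smallest_Delta}. Since $\calM\in\mathbb M_\alpha(\gamma)$, the identity in \eqref{Fundamen_Smallest_Delta} gives $\sup_{d\sim d'}\sE_{e^\eps}(\calM_d\|\calM_{d'})\leq \sup_{\calN\in\mathbb M_\alpha(\gamma)}\sup_{d\sim d'}\sE_{e^\eps}(\calN_d\|\calN_{d'})=\delta_\alpha^\eps(\gamma)$, and by \eqref{DP_HS} this is exactly the statement that $\calM$ is $(\eps,\delta_\alpha^\eps(\gamma))$-DP. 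This also resolves the only genuine subtlety, namely whether the infimum defining $\delta_\alpha^\eps(\gamma)$ in \eqref{FundamentalGamma}–\eqref{Fundamen_Smallest_Delta} is attained: the supremum representation shows the bound holds \emph{at} $\delta=\delta_\alpha^\eps(\gamma)$, not merely for $\delta>\delta_\alpha^\eps(\gamma)$.

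Combining the two ingredients, for every $\eps\geq 0$ we have $\calC_\calM\subseteq\calC(\eps,\delta_\alpha^\eps(\gamma))$, hence $\calC_\calM\subseteq\bigcap_{\eps\geq 0}\calC(\eps,\delta_\alpha^\eps(\gamma))$, which is the desired conclusion. There is no real obstacle here; the only point requiring a word of care is the attainment of the infimum, which is immediate from the equivalence between $(\eps,\delta)$-DP and the bound $\sup_{d\sim d'}\sE_{e^\eps}(\calM_d\|\calM_{d'})\leq\delta$ noted in \eqref{DP_HS}. (One may optionally remark that the inclusion can be strict, since the bound $\delta_\alpha^\eps(\gamma)$ is worst-case over $\mathbb M_\alpha(\gamma)$ rather than tailored to the particular $\calM$, motivating the alternative information-theoretic approach pursued in the remainder of the section.)
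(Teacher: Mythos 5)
Your proposal is correct and follows exactly the paper's route: the paper obtains the lemma immediately from the fact that every $(\alpha,\gamma)$-RDP mechanism is $(\eps,\delta_\alpha^\eps(\gamma))$-DP for each $\eps\geq 0$, combined with the characterization \eqref{def:C_eps_delta}, and then intersects over $\eps$. Your additional remark on the attainment of the infimum via the supremum representation \eqref{Fundamen_Smallest_Delta} is a sound clarification of a point the paper leaves implicit.
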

Note that since $\eps\mapsto \delta^\eps_\alpha(\gamma)$ characterizes the DP parameters of the worst mechanism in $\mathbb M_\alpha(\gamma)$, it follows that the privacy regions of \textit{all} $(\alpha, \gamma)$-RDP mechanisms are contained in $\bigcap_{\eps\geq 0} \calC(\eps, \delta^\eps_\alpha(\gamma))$, or equivalently, $\bigcup_{\calM\in \mathbb M_\alpha(\gamma)} \calC_\calM \subseteq  \bigcap_{\eps\geq 0} \calC(\eps, \delta^\eps_\alpha(\gamma)).$

Instead of dealing with the infinite collection of $(\eps, \delta_\alpha^\eps(\gamma))$
and taking the intersection of $\calC(\eps, \delta_\alpha^\eps(\gamma))$, we can alternatively focus on the tradeoff function (cf.\ Remark~\ref{Remark:duality}). 
That is, we wish to study the privacy regions of RDP mechanisms by \textit{directly} computing bounds on the tradeoff function rather than converting RDP into $(\eps, \delta)$-DP. Adopting this viewpoint,  we establish two outer bounds for the privacy region of an $(\alpha, \gamma)$-RDP mechanism in the following lemma.
\begin{figure}[t]
		\begin{subfigure}[t]{0.3\textwidth}
		\centering
		\includegraphics[scale = 0.16]{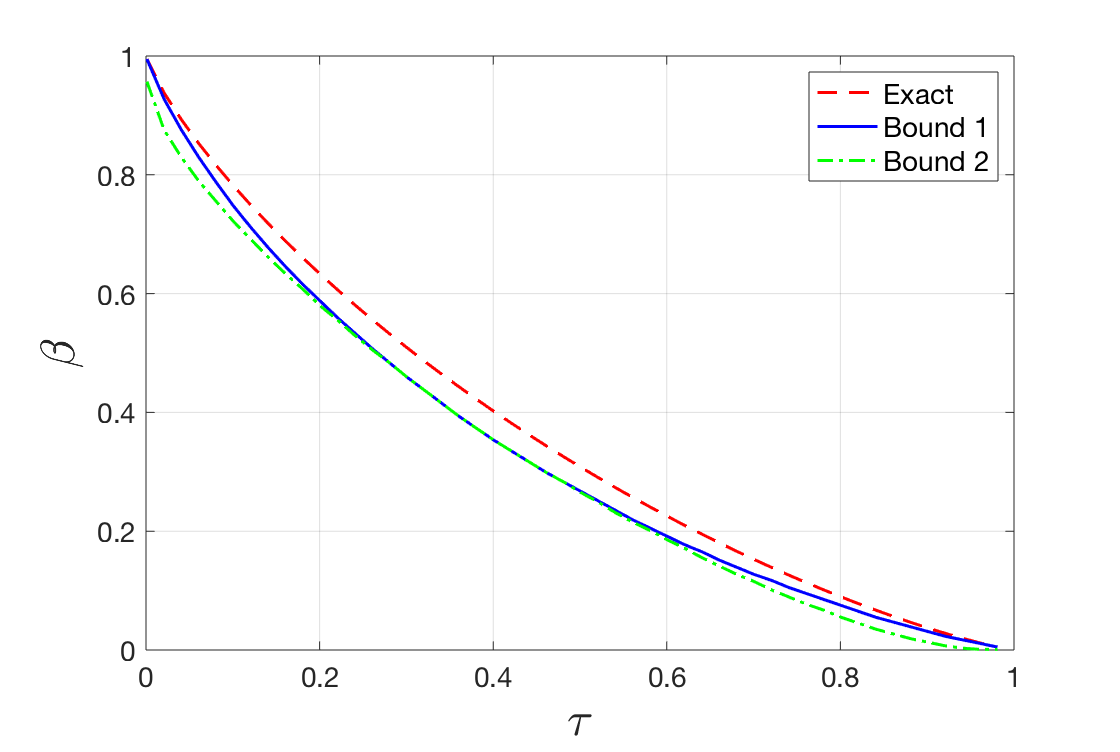}
		\caption{$\sigma = 2, T=1$}
		\end{subfigure}
		\qquad
		\begin{subfigure}[t]{0.3\textwidth}
		\centering
		\includegraphics[scale = 0.16]{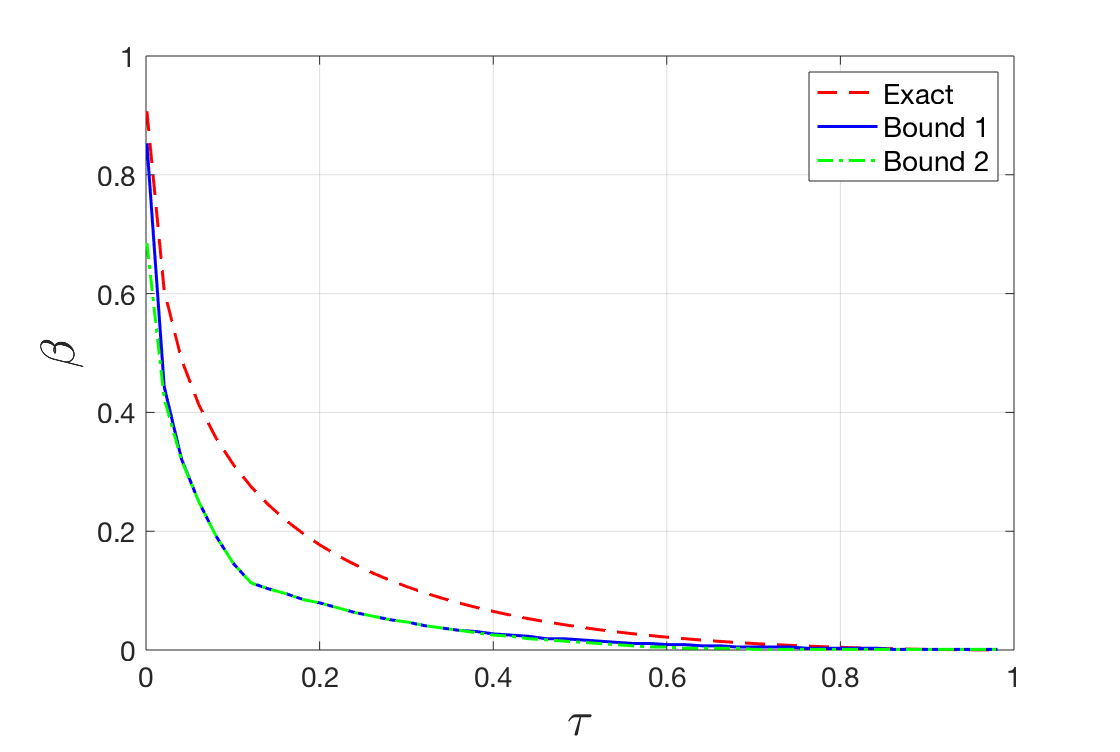}
		\caption{$\sigma = 4, T=50$}
		\end{subfigure}
		\qquad 
        \begin{subfigure}[t]{0.3\textwidth}
		\centering
		\includegraphics[scale = 0.16]{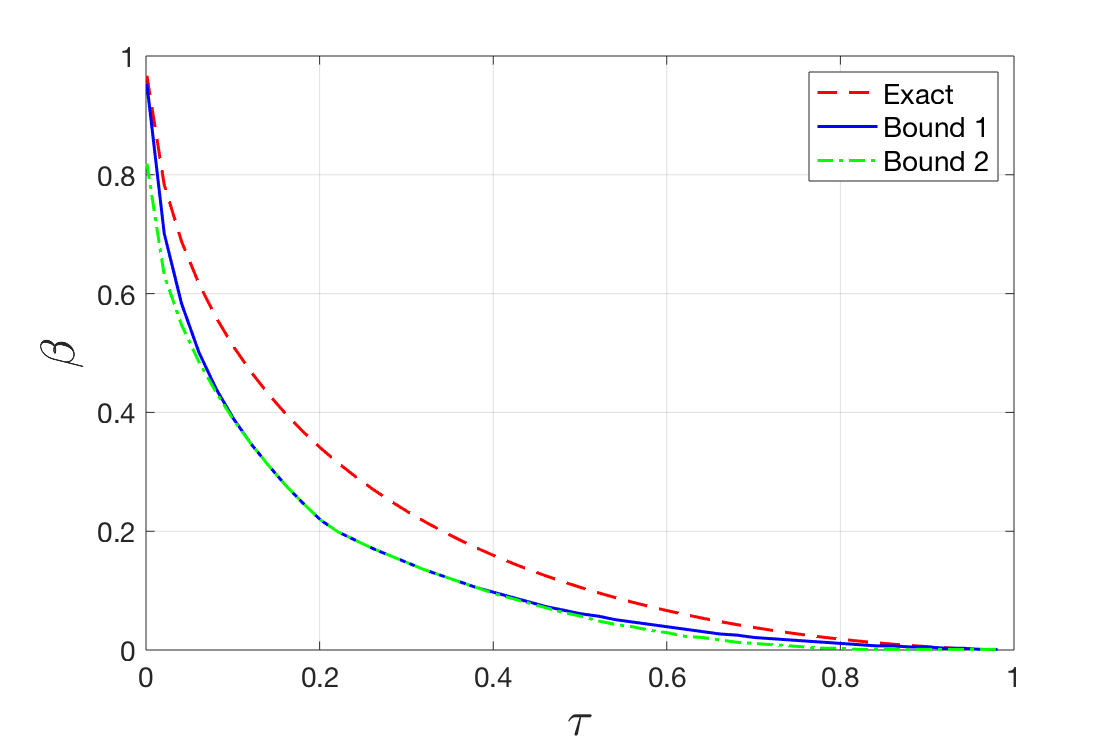}
		\caption{$\sigma = 8, T=100$}
		\end{subfigure}
	\caption{The outer bounds for the privacy region of the $T$-fold homogeneous Gaussian mechanism. The regions marked as  Bound 1 and Bound 2 correspond to \eqref{Bound1_Corollary} and \eqref{Bound2_Corollary}, respectively and the region marked as Exact corresponds to \eqref{eq:PriRegion_GDP}. Recall that the privacy ``regions'' are to be interpreted as the region between depicted curves and the diagonal line $\tau + \beta =1$.}
	\label{fig:RDP_Region2}
\end{figure}
\begin{lem}\label{Lemma:RDP_Region_DAlpha}
Let $\calM$ be an $(\alpha, \gamma)$-RDP mechanism. Then, the privacy region of $\calM$ satisfies  
\begin{align}
    \calC_\calM&\subseteq \{(\tau, \beta)\in (0, 1)^2: d_\alpha(\bar\tau\|\beta)\leq \gamma, d_\alpha(\bar\beta\|\tau)\leq \gamma\}\label{Subset1}\\
   &\subseteq \{(\tau, \beta)\in (0, 1)^2: d(\bar \tau\|\beta)\leq \gamma, d(\bar\beta\|\tau)\leq \gamma\},\label{Subset2}
\end{align}
where $d(a\|b) = a\log\frac{a}{b}+\bar a\log\frac{\bar a}{\bar b}$ and $d_\alpha(a\|b)\coloneqq \frac{1}{\alpha-1}\log\left(a^\alpha b^{1-\alpha}+\bar a^\alpha \bar b^{1-\alpha}\right)$ for $a, b\in (0,1)$.
\end{lem}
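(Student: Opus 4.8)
The plan is to combine the Neyman–Pearson characterization of the tradeoff function with the data‑processing inequality (DPI) for Rényi divergence, after the simple observation that $d_\alpha(a\|b)$ is exactly the order‑$\alpha$ Rényi divergence $D_\alpha(\mathrm{Bern}(a)\|\mathrm{Bern}(b))$ between the two laws on $\{0,1\}$ putting mass $a$, resp.\ $b$, on the outcome $1$. Once \eqref{Subset1} is in hand, the inclusion \eqref{Subset2} is essentially free: $D_\alpha$ is non‑decreasing in $\alpha$ and $\lim_{\alpha\downarrow 1}d_\alpha(a\|b)=d(a\|b)$ is the binary KL divergence, so any bound $d_\alpha(a\|b)\le\gamma$ with $\alpha>1$ upgrades to $d(a\|b)\le\gamma$.

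For \eqref{Subset1}, fix $(\tau,\beta)\in\calC_\calM$ with $\tau,\beta\in(0,1)$. By definition there are neighbors $d\sim d'$ with $\beta^{dd'}_\calM(\tau)\le\beta$, and by the Neyman–Pearson lemma there is a (possibly randomized) decision rule $P_{Z|X}$ whose type I error under $\calM_d$ equals $\tau$ and whose type II error under $\calM_{d'}$ equals $\beta':=\beta^{dd'}_\calM(\tau)\le\beta$. Pushing $\calM_d$ and $\calM_{d'}$ through the channel $P_{Z|X}$ produces $\mathrm{Bern}(\tau)$ and $\mathrm{Bern}(\bar\beta')$, so DPI for Rényi divergence together with the RDP hypothesis gives
\[
 d_\alpha(\bar\tau\,\|\,\beta')=D_\alpha\!\big(\mathrm{Bern}(\tau)\,\|\,\mathrm{Bern}(\bar\beta')\big)\le D_\alpha(\calM_d\|\calM_{d'})\le\gamma .
\]
Since $(\tau,\beta)\in\calC_\calM$ forces $\tau+\beta\le 1$, we have $\beta'\le\beta\le\bar\tau$, and a one‑line sign computation on $a^\alpha b^{1-\alpha}+\bar a^\alpha\bar b^{1-\alpha}$ shows that on $\{b\le a\}$ the map $b\mapsto d_\alpha(a\|b)$ is non‑increasing; hence $d_\alpha(\bar\tau\|\beta)\le d_\alpha(\bar\tau\|\beta')\le\gamma$. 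Running the identical argument on the ordered pair $(d',d)$ — for which RDP also yields $D_\alpha(\calM_{d'}\|\calM_d)\le\gamma$ — with the same test, DPI gives $d_\alpha(\bar\beta'\|\tau)\le\gamma$; now $\bar\beta'\ge\bar\beta\ge\tau$ and $a\mapsto d_\alpha(a\|\tau)$ is non‑decreasing on $\{a\ge\tau\}$, so $d_\alpha(\bar\beta\|\tau)\le d_\alpha(\bar\beta'\|\tau)\le\gamma$. This establishes \eqref{Subset1}, and \eqref{Subset2} follows from the $\alpha$‑monotonicity and the KL limit noted above.

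What remains is mostly bookkeeping: verifying the two monotonicity facts for $d_\alpha$ on the relevant half‑square (both by signing $\partial_b$ and $\partial_a$ of $a^\alpha b^{1-\alpha}+\bar a^\alpha\bar b^{1-\alpha}$, which flip sign precisely across $a=b$), recording $\lim_{\alpha\downarrow 1}d_\alpha(a\|b)=d(a\|b)$ and monotonicity of $\alpha\mapsto d_\alpha(a\|b)$, and disposing of the degenerate case $\tau+\beta=1$ (then $\bar\tau=\beta$ and the claimed constraint reads $0\le\gamma$). The one genuinely delicate point — and the thing I would be most careful about — is checking that after data processing we sit in the regime $a\ge b$ where the monotonicity of $d_\alpha$ has the sign we need; this is exactly what the convention $\tau+\beta^{dd'}_\calM(\tau)\le1$ (noted just before the statement) and $\beta'+\tau\le\beta+\tau\le1$ guarantee. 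I expect no hard inequality to arise, only this regime/sign check.
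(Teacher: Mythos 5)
Your proof is correct and follows essentially the same route as the paper: apply the data-processing inequality for R\'enyi divergence to the optimal Neyman--Pearson test (reducing to binary distributions, so that $d_\alpha$ is the binary R\'enyi divergence), use monotonicity of $d_\alpha$ in its arguments to pass from $\beta^{dd'}_\calM(\tau)$ to any $\beta$ in the privacy region, and obtain \eqref{Subset2} from monotonicity of $D_\alpha$ in $\alpha$. Your extra care with the sign/regime checks ($\beta'\le\beta\le\bar\tau$) just makes explicit what the paper states in one line.
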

\begin{proof}
Let $P_{Z|X}$ be an optimal randomized test mapping the mechanism's output $X$ to a binary variable $Z$ corresponding to $H_0$ and $H_1$, i.e., $\int P_{Z|X}(0|x)\calM_d(\text{d}x) = 1-\tau$ and $\int P_{Z|X}(0|x)\calM_{d'}(\text{d}x) = \beta^{dd'}_\calM(\tau)$. (The existence of such an optimal randomized test is guaranteed by Neyman-Pearson lemma.) Due to the data processing inequality, we have 
\begin{align}
    D_\alpha(\calM_d\|\calM_{d'})&\geq D_\alpha(\mathsf{Bernoulli}(\tau)\|\mathsf{Bernoulli}(1-\beta^{dd'}_\calM(\tau)))\nonumber\\
    & = d_\alpha(1-\tau\|\beta^{dd'}_\calM(\tau)),\label{Eq:DPI}
\end{align}
This in turn implies that  for all $d\sim d'$ 
\begin{equation}
    \max\{d_\alpha(1-\tau\|\beta^{dd'}_\calM(\tau)), d_\alpha(\beta^{dd'}_\calM(\tau)\|1-\tau)\}\leq \gamma,
\end{equation}
which in turn implies \eqref{Subset1} by noticing that $a\mapsto d_\alpha(a\|b)$ is decreasing for $a<b$ and similarly $b\mapsto d_\alpha(a\|b)$ is decreasing for $b<a$. Since $\alpha\mapsto D_\alpha(P\|Q)$ is non-decreasing \cite[Theorem 3]{van_Erven}, the inclusion \eqref{Subset2} follows  immediately.
\end{proof}
It is worth mentioning that $d_\alpha(a\|b)$ is  closely related to \cite[Definition 9]{Balle2019HypothesisTI}.  
Note that although the set in \eqref{Subset2} strictly contains the one in \eqref{Subset1}, it enables us to derive a simple outer bound for the privacy region of mechanisms when optimizing over $\alpha$. This is formalized in the following result which is an immediate corollary of Lemma~\ref{Lemma:RDP_Region_DAlpha}. 
\begin{cor}\label{Corollay_Gaussian_Region}
If mechanism $\calM$ is $(\alpha, \gamma(\alpha))$-RDP for all $\alpha>1$. Then its privacy region satisfies  
\begin{align}
\calC_\calM &\subseteq\bigcap_{\alpha>1}\{(\tau, \beta): d_\alpha(\bar\tau\|\beta)\leq \gamma(\alpha),d_\alpha(\bar\beta\|\tau)\leq \gamma(\alpha)\}\label{Bound11_Corollary}\\
&\subset\bigcap_{\alpha>1}\{(\tau, \beta): d(\bar\tau\|\beta)\leq \gamma(\alpha),d(\bar\beta\|\tau)\leq \gamma(\alpha) \}.\label{Bound22_Corollary}
\end{align}
\end{cor}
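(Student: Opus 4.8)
The plan is to apply Lemma~\ref{Lemma:RDP_Region_DAlpha} separately for each order $\alpha>1$ and then intersect the resulting inclusions. Concretely, fix any $\alpha>1$. By hypothesis $\calM$ is $(\alpha,\gamma(\alpha))$-RDP, so Lemma~\ref{Lemma:RDP_Region_DAlpha} applied with $\gamma=\gamma(\alpha)$ gives
\[
\calC_\calM\subseteq\{(\tau,\beta)\in(0,1)^2:\ d_\alpha(\bar\tau\|\beta)\leq\gamma(\alpha),\ d_\alpha(\bar\beta\|\tau)\leq\gamma(\alpha)\}.
\]
Since the left-hand side $\calC_\calM$ is the same object for every $\alpha$ while the right-hand side depends on $\alpha$, this containment holds simultaneously over all $\alpha>1$, and intersecting over $\alpha>1$ yields exactly \eqref{Bound11_Corollary}.

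For the second inclusion \eqref{Bound22_Corollary} I would reuse the nested pair \eqref{Subset1}$\subseteq$\eqref{Subset2} already established inside Lemma~\ref{Lemma:RDP_Region_DAlpha}: for each fixed $\alpha>1$,
\[
\{(\tau,\beta):\ d_\alpha(\bar\tau\|\beta)\leq\gamma(\alpha),\ d_\alpha(\bar\beta\|\tau)\leq\gamma(\alpha)\}\subseteq\{(\tau,\beta):\ d(\bar\tau\|\beta)\leq\gamma(\alpha),\ d(\bar\beta\|\tau)\leq\gamma(\alpha)\},
\]
which is simply the pointwise inequality $d_\alpha(a\|b)\geq d(a\|b)$ for $\alpha>1$, coming from $\alpha\mapsto D_\alpha$ being non-decreasing with $D_1$ the KL divergence. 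Intersecting both sides over $\alpha>1$ preserves the inclusion and produces \eqref{Bound22_Corollary}.

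The whole argument is essentially quantifier bookkeeping, so there is no real obstacle. The main (and essentially only) point that warrants a line of care is the \emph{strictness} of the second inclusion (the symbol $\subset$ rather than $\subseteq$): for a fixed $\alpha$ the inclusion $\{d_\alpha\leq\gamma\}\subsetneq\{d\leq\gamma\}$ is strict because $d_\alpha(a\|b)>d(a\|b)$ whenever $a\neq b$, and to push this through the intersection I would exhibit a single pair $(\tau,\beta)$ satisfying all the KL constraints $d(\bar\tau\|\beta)\leq\gamma(\alpha)$, $d(\bar\beta\|\tau)\leq\gamma(\alpha)$ yet violating the Rényi constraint $d_{\alpha}(\bar\tau\|\beta)\leq\gamma(\alpha)$ for some $\alpha$ — e.g. a point sitting on the boundary of one KL-region but strictly outside the corresponding Rényi region — which is routine once the profile $\gamma(\cdot)$ is held fixed. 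If one is content with the weaker statement $\subseteq$ in \eqref{Bound22_Corollary}, even this step is unnecessary.
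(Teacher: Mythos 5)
Your proof is correct and matches the paper's treatment exactly: the paper presents this corollary as an immediate consequence of Lemma~\ref{Lemma:RDP_Region_DAlpha}, obtained by applying the lemma at each order $\alpha$ and intersecting, with the second inclusion following from the monotonicity of $\alpha\mapsto D_\alpha$ (equivalently $d_\alpha\geq d$). Your extra remark about justifying the strictness of the $\subset$ in \eqref{Bound22_Corollary} is a fair point of care that the paper itself does not address.
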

To demonstrate the accuracy of Corollary~\ref{Corollay_Gaussian_Region}, we consider Gaussian mechanisms for the remainder of this section. Recall that the Gaussian mechanism with variance $\sigma^2$ is $(\alpha, \gamma)$-RDP for $\gamma = \rho \alpha$ with $\rho = \frac{1}{2\sigma^2}$. Recall that the $T$-fold composition of such mechanism is $(\alpha, \rho \alpha T)$-RDP, implying that $\calM^{(T)}$ is a Gaussian mechanism with variance $\frac{\sigma^2}{T}$. Hence, according to \eqref{Eq:Gaussian_beta}, we have 
\begin{equation}\label{eq:Def_G_rho}
    \inf_{d\sim d'}\beta^{dd'}_{\calM^{(T)}}(\tau) = G_{\sqrt{2\rho T}}(\tau) 
\end{equation}
This, in turn, implies that $\calC_{\calM^{(T)}}$ the privacy region of $\calM^{(T)}$ is given by 
\begin{equation}\label{eq:PriRegion_GDP}
\calC_{\calM^{(T)}} =\left\{(\tau, \beta)\in (0, 1)^2: G_{\sqrt{2\rho T}}(\tau)\leq \beta\leq 1-\tau\right\}.
\end{equation}
Specializing Corollary~\ref{Corollay_Gaussian_Region} to $\calM^{(T)}$, we can express outer bounds given in \eqref{Bound11_Corollary} and \eqref{Bound22_Corollary} as  
 \begin{align}
     \calC_{\calM^{(T)}} &\subseteq\bigcap_{\alpha>1}\{(\tau, \beta): d_\alpha(\bar\tau\|\beta)\leq \rho\alpha T, d_\alpha(\bar\beta\|\tau)\leq \rho\alpha T \}\label{Bound1_Corollary}\\
&\subset\left\{(\tau, \beta)\in [0, 1]^2: d(\bar\tau\|\beta)\leq \rho T, d(\bar\beta\|\tau)\leq \rho T \right\}.\label{Bound2_Corollary}
 \end{align}

In Fig.~\ref{fig:RDP_Region2}, we compare these outer bounds with the exact privacy region given in \eqref{eq:PriRegion_GDP}. Note that the region \eqref{Bound22_Corollary}, while being weaker than the region in \eqref{Bound11_Corollary}, can be explicitly characterized for Gaussian mechanisms. 
\begin{figure}[t]
	\centering
	\begin{subfigure}[t]{0.3\textwidth}
		\centering
		\includegraphics[scale = 0.16]{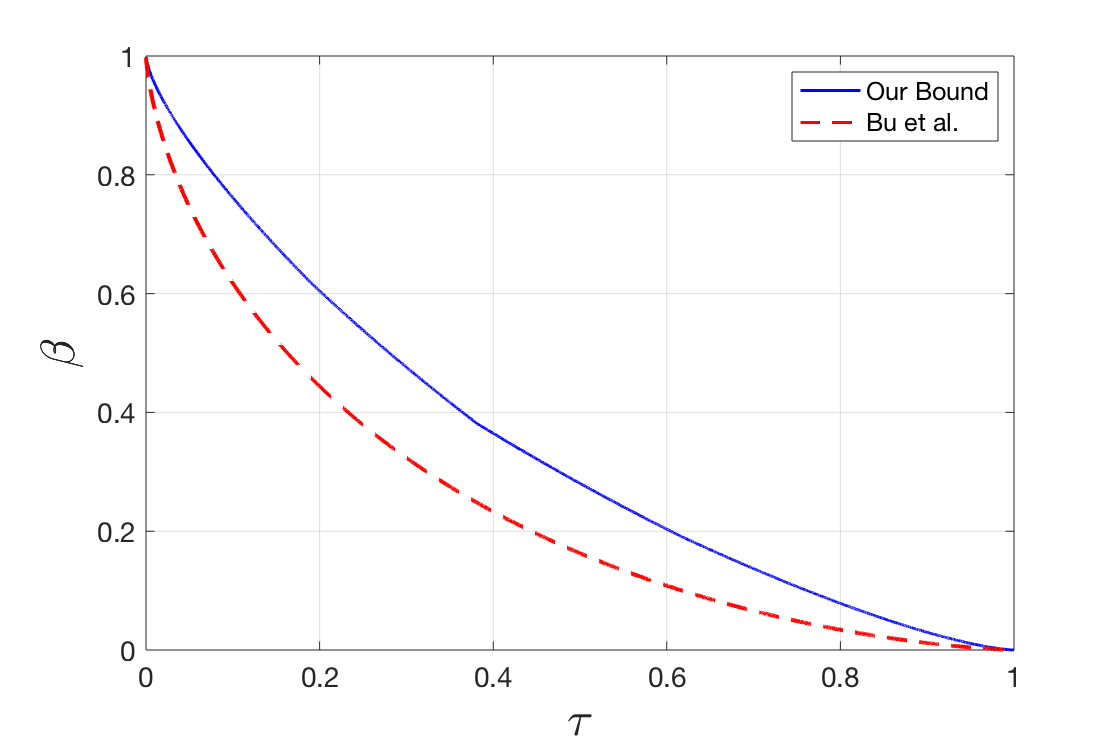}
		\caption{$\sigma = 0.6$, $Tq=15$}
	\end{subfigure}
	\qquad 
	\begin{subfigure}[t]{0.3\textwidth}
		\centering
		\includegraphics[scale = 0.16]{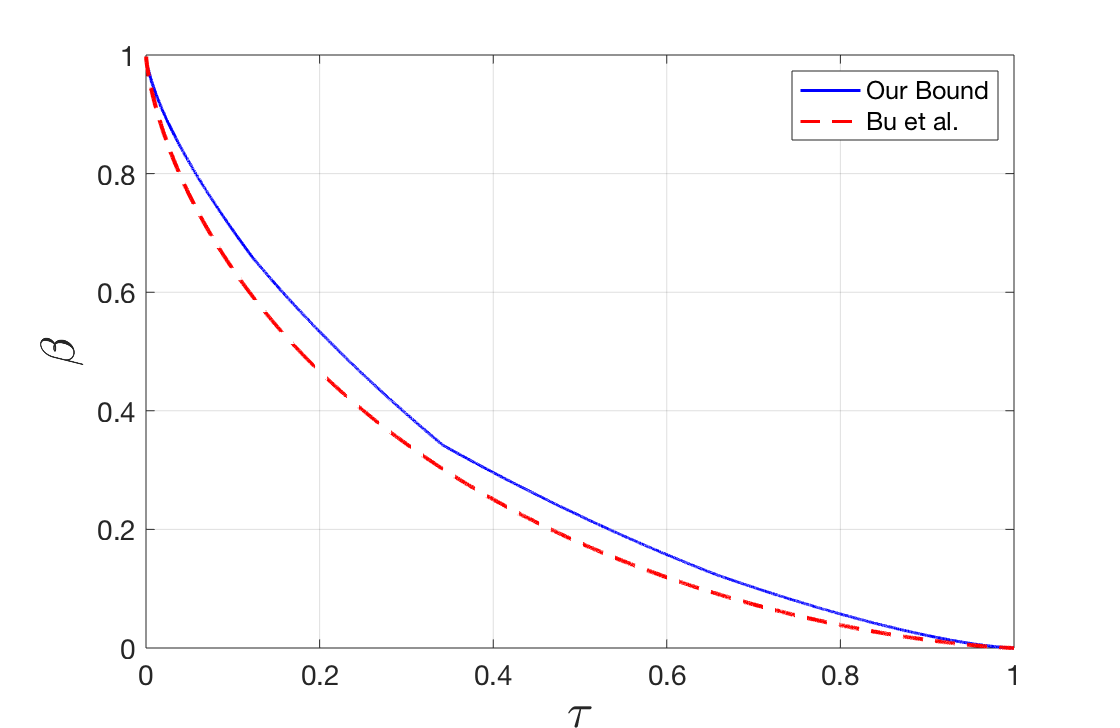}
		\caption{$\sigma = 0.7$, $Tq=30$}
	\end{subfigure}
\qquad
	\begin{subfigure}[t]{0.3\textwidth}
		\centering
		\includegraphics[scale = 0.16]{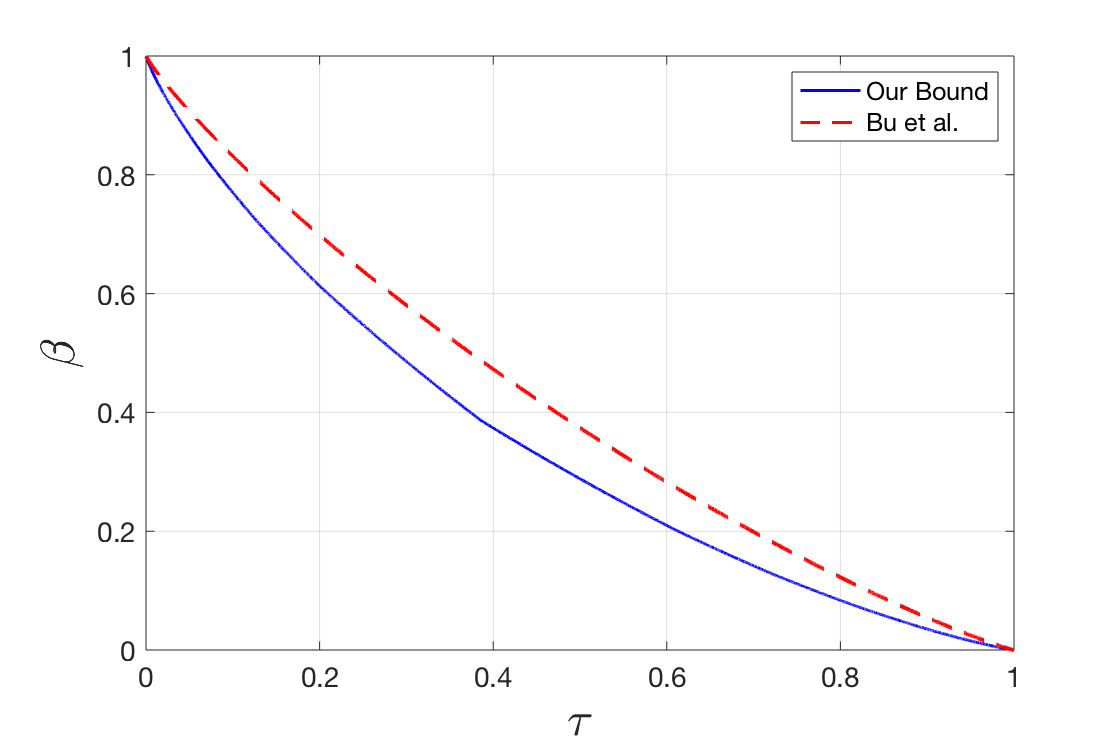}
		\caption{$\sigma = 1.3$, $Tq=30$}
	\end{subfigure}
	\caption{The outer bounds for the privacy region of SGD algorithm according to our RDP-based bound \eqref{eq:PriRegion_SGD2} (blue solid curve) and  $f$-DP  \cite{GaussianDP_deep} (red dashed curve) with the subsampling rate $q=256/60000$. As the blue curve lies above the red curve for $\sigma \leq 0.7$, our bound yields tighter privacy region.	Since the intersection in \eqref{eq:PriRegion_SGD2} is over only integer $\alpha$, the blue curve may not be smooth for large $T$.}
	\label{fig:RDP_Tradeoff_SGD_sigmaT}
\end{figure}
For a more realistic application, we apply Corollary~\ref{Corollay_Gaussian_Region} to noisy SGD algorithm (i.e., Algorithm \eqref{alg:noisySGD}). 
This algorithm can be thought of as a $T$-fold composition of Gaussian mechanism with an additional feature of subsampling (line 3 in Algorithm~\ref{alg:noisySGD}) with rate $q = \frac{m}{n}$. As before, we invoke \cite[Lemma 3]{Abadi_MomentAccountant} to obtain that each  iteration of this algorithm is approximately $(\alpha, \alpha \rho_q)$-RDP where $\rho_q = \frac{q^2}{(1-q)\sigma^2}$ for positive integer $\alpha\leq 1+\sigma^2\log\frac{1}{q\sigma}$ and $q<\frac{1}{16\sigma}$. Thus, after $T$ iterations the algorithm is $(\alpha, \alpha \rho_q T)$-RDP. Corollary~\ref{Corollay_Gaussian_Region} therefore gives 
\begin{equation}\label{eq:PriRegion_SGD2}
    \calC_{\mathsf{SGD}}(T)\subseteq \bigcap_{\alpha\in \calA}\left\{(\tau, \beta): d_\alpha(\bar\tau\|\beta)\leq \alpha\rho_q T, d_\alpha(\bar\beta\|\tau)\leq \alpha\rho_q T\right\},
\end{equation}
where $\calA$ is the set of admissible $\alpha$ indicated above. On the other hand, subsampling and composition results of $f$-DP (\cite[Theorem 4.2]{GaussianDP} and \cite[Theorem 3.2]{GaussianDP}, respectively) can be exploited to approximate (asymptotically in $T$) the tradeoff function for the  Algorithm~\ref{alg:noisySGD} and thus to construct an outer bound for the privacy region \cite{GaussianDP_deep}: 
\begin{equation}\label{eq:PriRegion_SGD}
 \calC_{\mathsf{SGD}}(T)\subseteq \left\{(\tau, \beta)\in (0, 1)^2: G_{\mu}(\tau)\leq \beta\leq 1-\tau\right\},
\end{equation}
where $\mu = q\sqrt{T\big(e^{1/\sigma^2}-1\big)}$ and 
$G_\mu(\cdot)$ was defined in \eqref{Eq:Gaussian_beta}.  In Fig. \ref{fig:RDP_Tradeoff_SGD_sigmaT}, we illustrate this bound together with  \eqref{eq:PriRegion_SGD2} for different number of iterations and $\sigma$. 
The numerical findings indicate that there always exists a $\sigma_0$ for any sub-sampling rate $q$ such that our RDP-based outer bound \eqref{eq:PriRegion_SGD2} is tighter than $f$-DP bound \eqref{eq:PriRegion_SGD} for all $\sigma\leq \sigma_0$ irrespective of the number of iterations. For instance, $\sigma_0\approx 0.7$ in Fig. \ref{fig:RDP_Tradeoff_SGD_sigmaT}, that is, \eqref{eq:PriRegion_SGD2} is tighter than \eqref{eq:PriRegion_SGD} for all $\sigma\leq 0.7$ and any number of iterations. To better support this claim, we compute the the area of the regions on the right-hand sides of \eqref{eq:PriRegion_SGD2} and \eqref{eq:PriRegion_SGD} and report the differences in Fig. \ref{fig:RDP_Tradeoff_SGD} for different values of $\sigma$ and $T$. Positive numbers indicate that the former is a smaller region, or equivalently, the outer bound in \eqref{eq:PriRegion_SGD2} is tighter than \eqref{eq:PriRegion_SGD}; thus supporting our claim. 
\begin{figure}[t]
	\centering
		\includegraphics[scale = 0.27]{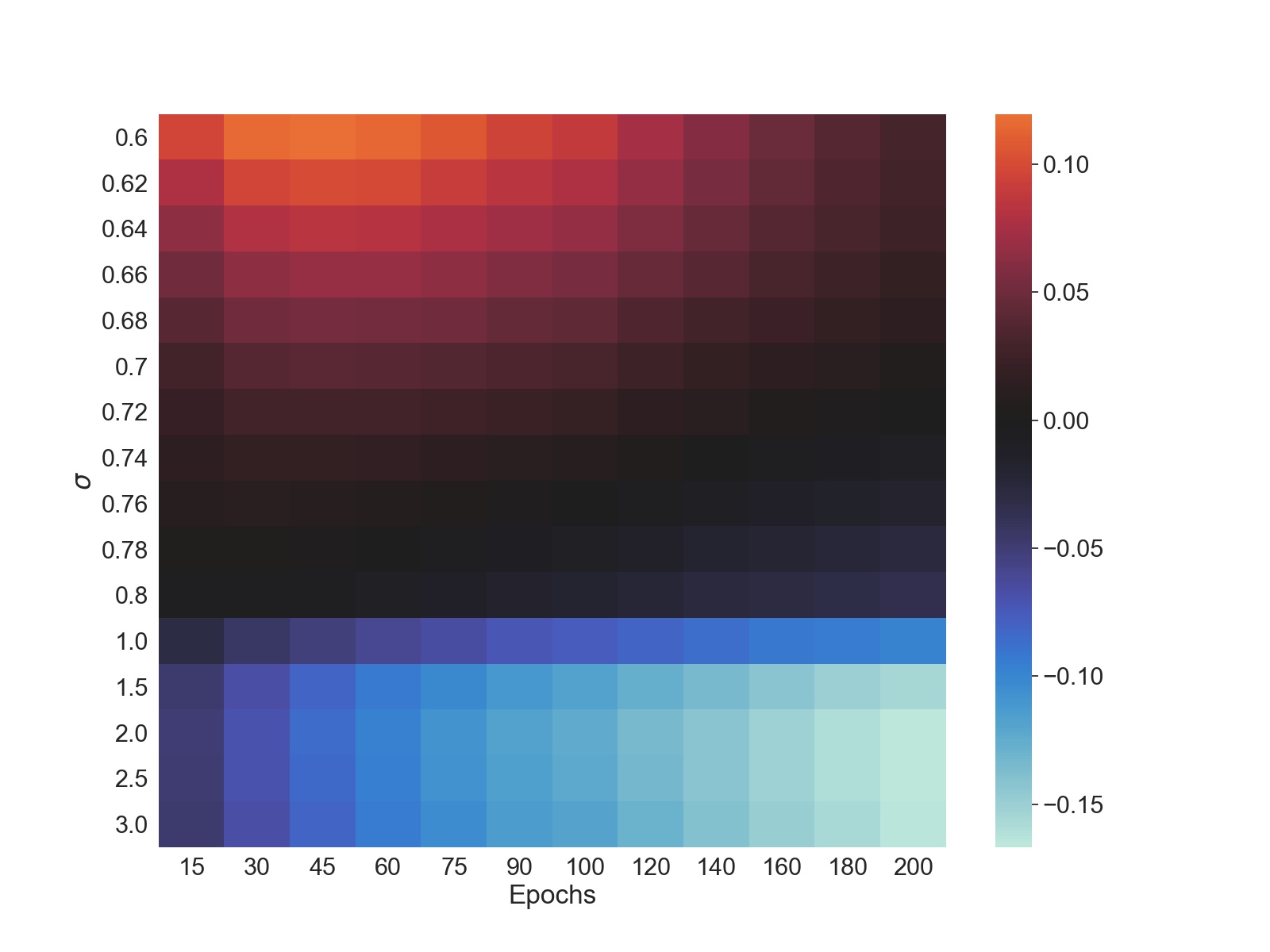}
	\caption{The difference between the area of the region in the right-hand side of \eqref{eq:PriRegion_SGD} and \eqref{eq:PriRegion_SGD2} with the subsampling rate $q=256/60000$. A positive value indicates that the outer bound in \eqref{eq:PriRegion_SGD2} is smaller than that in \eqref{eq:PriRegion_SGD}, or equivalently,  RDP leads to a tighter privacy guarantee that $f$-DP. 
	}
	\label{fig:RDP_Tradeoff_SGD}
\end{figure}

\section*{Conclusion}
In this paper, we investigated the relationship between three variants of differential privacy, namely approximate DP, R\'enyi DP, and hypothesis test DP. 
First, we established the \textit{optimal} relationship between R\'enyi DP and approximate DP that enables us to derive the optimal approximate DP parameters of a mechanism that satisfies a given level of R\'enyi DP. In order to show its practicality, we applied this result to the moments accountant framework for characterizing privacy guarantees of noisy stochastic gradient descent. When compared to the state-of-the-art, our result was shown to lead to about 100 more stochastic gradient descent iterations for training deep learning models for the same privacy budget, and thus provide better accuracy without any privacy degradation. In the second part, we analyzed  the implications of R\'enyi DP constraint in terms of the tradeoff between type I and type II error probabilities of a certain binary hypothesis test which formalizes the hypothesis test DP. More specifically, we derived an outer bound for the region of type I and type II error probabilities (also known as the \textit{privacy region}) achievable by a  mechanism that satisfies a given level of R\'enyi DP. We then used this result to characterize the privacy region of noisy stochastic gradient descent algorithm. Compared to the existing results (obtained via sub-sampling and composition results of recently proposed $f$-DP framework), our outer bound was empirically shown to be tighter for a practical range of the noise variance.

\small
\bibliography{reference}
\bibliographystyle{IEEEtran}

\appendices
\normalsize
 \section{Sufficiency of Binary Distributions for Characterizing $\mathcal R_\alpha$ }\label{Appexdix:binarycase}
 We provide a direct proof for the fact that it suffices to consider the Bernoulli distributions for characterizing $\mathcal R_\alpha$. The following argument is a natural extension of the proof of \cite[Lemma 2]{Sason_Renyi}. Let $P$ and $Q$ be two general distributions on $\mathbb X$. We wish to show that the for any $\lambda\geq 1$ and $\alpha>1$ the optimization 
 \begin{align}
     &\inf_{P, Q\in \calP(\mathbb X)} D_\alpha(P\|Q)\\
     &\qquad \text{s.t.}~ \sE_\lambda(P\|Q)\geq \delta, \nonumber
 \end{align}
 is achieved by Bernoulli distributions. 
 Let $\phi:\mathbbX\to \{1,2\}$ be defined as 
\begin{equation}
    \phi(x) = \begin{cases} 1,& \text{if}~ \frac{\text{d}P}{\text{d}Q}(x)\geq \lambda\\
2,& \text{if}~ \frac{\text{d}P}{\text{d}Q}(x)< \lambda
\end{cases}. 
\end{equation}
Also, define Bernoulli distributions $P_{\mathsf{b}}$ and $Q_{\mathsf{b}}$ on $\{1, 2\}$ as follows 
\begin{equation}
    P_{\mathsf{b}}(j) = \int_{x:\phi(x)=j}P(\text{d}x),
\end{equation}
and 
\begin{equation}
    Q_{\mathsf{b}}(j) = \int_{x:\phi(x)=j}Q(\text{d}x),
\end{equation}
for $j\in \{1, 2\}$. 
Note that in this case, we can write 
\begin{align}
\|P - \lambda Q\| &= \int_{\mathbbX}|P(\text{d}x) - \lambda Q(\text{d}x)|\\
&= \int_{\phi(x) = 1}(P(\text{d}x) - \lambda Q(\text{d}x)) \nonumber\\
&\qquad + \int_{\phi(x) = 2}(\lambda Q(\text{d}x) - P(\text{d}x))\\
&= P_{\mathsf{b}}(1) - \lambda Q_{\mathsf{b}}(1) + \lambda Q_{\mathsf{b}}(2) -  P_{\mathsf{b}}(2)\\
&= |P_{\mathsf{b}}(1) - \lambda Q_{\mathsf{b}}(1)| + |P_{\mathsf{b}}(2)-\lambda Q_{\mathsf{b}}(2)| \\
&= \|P_{\mathsf{b}} - \lambda Q_{\mathsf{b}}\|.
\end{align} 
Notice that $\sE_\lambda(P\|Q) = \frac{1}{2}\|P-\lambda Q\| +\frac{1}{2}(1-\lambda)$ and hence the above implies that 
$\sE_{\lambda}(P\|Q) = \sE_{\lambda}(P_{\mathsf{b}}\|Q_{\mathsf{b}})$. On the other hand, the data processing inequality for R\'enyi divergence implies that $D_\alpha(P\|Q)\geq D_{\alpha}(P_{\mathsf{b}}\|Q_{\mathsf{b}})$. These two observations demonstrate that the minimum of $D_\alpha(P\|Q)$ subject to  $\sE_{\lambda}(P\|Q)\geq \delta$ is achieved by Bernoulli distributions.

 \section{Proof of Theorem \ref{thm:optimization_formulation_gamma}}\label{Appendix_gamma_Optimization}
 First notice that, in light of Theorem \ref{thm:f-divergence_joint-region}, the convex set $\calR_\alpha$ defined in \eqref{ConvexSet1} is equal to the convex hull of the set $\calB_{\alpha,\eps}$ given by
 \begin{align}
 \label{eq:convex_set_RenHock_Binary}
   \calB_{\alpha,\eps}=\{(\chi^\alpha(P_{\sf b}\|Q_{\sf b}),\sEs(P_{\sf b}\|Q_{\sf b}))\big|P_{\sf b},Q_{\sf b}\in \calP(\{0,1\})\} 
 \end{align}
where $P_{\sf b}=\mathsf{Bernoulli}(p)$ and $Q_{\sf b}=\mathsf{Bernoulli}(q)$ with parameters $p,q\in(0,1)$. 
For any pair of such distributions, define $\tilde{\gamma}\coloneqq \chi^\alpha(P_{\sf b}\|Q_{\sf b}) $ and $\delta\coloneqq \sE_{e^\eps}(P_{\sf b}\|Q_{\sf b})$.  
We first show that the convex hull of $\calB_{\alpha,\eps}$ is given by 
\begin{align}
\label{eq:convex_hull_RenHock_Binary}
  \bar{\calB}_{\alpha,\eps}=\{(\tilde{\gamma}, \delta)\big| \delta\in[0,1), \tilde{\gamma}\geq \tilde{\gamma}(\delta)\} 
\end{align}
with $\tilde{\gamma}(\delta)$ given by 
\begin{align}
	\tilde{\gamma}(\delta)= &\inf_{0<p,q<1}  \chi^\alpha(P_{\sf b}\|Q_{\sf b}) \label{eq:chracterize_Binaryset}\\
	&\qquad \text{s.t.~} \sEs(P_{\sf b}\|Q_{\sf b})\geq \delta.\nonumber
\end{align}
To this goal, we need to demonstrate that for any $\lambda\in [0,1]$ and pairs of points $(\tilde{\gamma}_1, \delta_1), (\tilde{\gamma}_2, \delta_2)\in \calB_{\alpha,\eps}$, we have  $(\lambda\tilde{\gamma}_1+\bar\lambda\tilde{\gamma}_2, \lambda\delta_1+\bar\lambda\delta_2)\in \bar{\calB}_{\alpha,\eps}$, where $\bar\lambda=1-\lambda$, or equivalently  $\lambda\delta_1+\bar\lambda\delta_2\in [0,1)$ and $\lambda\tilde{\gamma}_1+\bar\lambda\tilde{\gamma}_2\geq \tilde{\gamma}(\lambda\delta_1+\bar\lambda\delta_2)$. Hence, it suffices to show that $\delta\mapsto \tilde{\gamma}(\delta)$ is convex. 


Let $p_i,q_i\in (0,1)$ with $p_i\geq q_i$ be the optimal solution of \eqref{eq:chracterize_Binaryset} for $\delta_i$, $i=1,2$, and $P_{{\sf b},i},Q_{{\sf b},i}$ be the corresponding Bernoulli distributions. For any $\lambda\in [0,1]$, we construct two Bernoulli distribution $P_{{\sf b},\lambda}$ and $Q_{{\sf b},\lambda}$ with parameters $p_\lambda=\lambda p_1+\bar\lambda p_2$ and $q_\lambda=\lambda q_1+\bar\lambda q_2$, respectively. It can be verified that 
\begin{align}
    \sEs(P_{{\sf b},\lambda}\|Q_{{\sf b},\lambda})
    =&p_\lambda-e^\eps q_\lambda\\
    = & \lambda p_1+\bar\lambda p_2 -e^\eps(\lambda q_1+\bar\lambda q_2)\\
    \geq & \lambda \delta_1 + \bar\lambda\delta_2, 
\end{align}
i.e., $(p_\lambda, q_\lambda)$ is feasible for $\lambda\delta_1+\bar\lambda\delta_2$. In addition, from the convexity of $\chi^{\alpha}$, we have that
\begin{align}
  \lambda\tilde{\gamma}(\delta_1)+\bar\lambda\tilde{\gamma}(\delta_2) =& \lambda \chi^\alpha(P_{{\sf b},1}\|Q_{{\sf b},1})+\bar\lambda \chi^\alpha(P_{{\sf b},2}\|Q_{{\sf b},2})\\
  \geq & \chi^\alpha(P_{{\sf b},\lambda}\|Q_{{\sf b},\lambda})\\
  \geq & \tilde{\gamma}(\lambda\delta_1+\bar\lambda\delta_2).
\end{align}
Therefore, the function $\tilde{\gamma}(\delta)$ is convex in $\delta$ and hence $\bar{\calB}_{\alpha,\eps}$ is the convex hull of $\calB_{\alpha,\eps}$. In light of Theorem \ref{thm:f-divergence_joint-region}, this in turn implies that $\calR_\alpha = \bar{\calB}_{\alpha,\eps}$.


The above analysis shows that   $\delta\mapsto \tilde{\gamma}(\delta)$ in fact constitutes the upper boundary of $\calB_{\alpha,\eps}$ and thus $\calR_\alpha$. 
Since $\chi(\cdot)$ is a bijection, this allows us to deduce  
\begin{align}
    \label{eq:Optimization_Gamma_DP_Binary}
	 \gamma_\alpha^\eps(\delta) =&\inf_{0<p,q<1}  \chi^{-1}\left(\chi^\alpha(P_{\sf b}\|Q_{\sf b})\right)\\
	&\qquad \text{s.t.~} \sEs(P_{\sf b}\|Q_{\sf b})\geq \delta, \nonumber
\end{align}
and hence the optimization problem 
\eqref{eq:Optimization_Gamma_DP}  can be converted to the above two-parameter optimization problem.

Expanding both $\chi^\alpha$ and $\sE_{e^\eps}$, we can explicitly write \eqref{eq:Optimization_Gamma_DP_Binary} as
\begin{align}
\label{eq:Optimization_Gamma_binary} 
\gamma^\eps_\alpha(\delta)= &\inf_{0< q< p< 1}~ \frac{1}{\alpha-1}\log \left( p^\alpha q^{1-\alpha}+\bar p^\alpha\bar q^{1-\alpha}\right)\\
&\qquad \text{s.t.~} p-qe^\eps\geq\delta, \nonumber
\end{align}
where $\delta<1$ and $\gamma<\infty$. Let $h(p,q;\alpha)$ indicate the objective function of the optimization problem in \eqref{eq:Optimization_Gamma_binary}. For any given $\alpha>1$ and $ p\in (0,1)$, the partial derivative of $h(p,q;\alpha)$ with respect to $q$ is given by 
	\begin{align}
		\frac{\partial \, h(p,q;\alpha)}{\partial q} = \frac{  p^\alpha q^{-\alpha}-(1-p)^\alpha(1-q)^{-\alpha}}{ p^\alpha q^{1-\alpha}+(1-p)^\alpha(1-q)^{1-\alpha}},
	\end{align}
which is negative for all $0<q<p<1$, and therefore, $h(p,q;\alpha)$ is decreasing in $q$. In addition, for $\eps\geq 0$ and $\delta\in [0, 1)$, the two constraints $0<q<p<1$ and	$p-qe^\eps \geq \delta$ in \eqref{eq:Optimization_Gamma_binary}
can be equivalently rewritten as 
\begin{align}
	\begin{cases}
	 \delta<p<1\\
	 0<q<\frac{p-\delta}{e^\eps}.
	\end{cases}
\end{align}
Thus, the infimum in \eqref{eq:Optimization_Gamma_binary} is attained at $q=\frac{p-\delta}{e^\eps}$, and therefore, for $\alpha>1$, $\delta\in [0, 1)$ and $\eps\geq 0$, the optimization problem in \eqref{eq:Optimization_Gamma_binary} is simplified as
\begin{align}
	e^{(\alpha-1)(\gamma^\eps_\alpha(\delta)-\eps)}= &\inf_{p\in(\delta,1)}~  p^\alpha (p-\delta)^{1-\alpha}+\bar p^\alpha(e^\eps-p+\delta)^{1-\alpha}, \label{eq:Optimization_Gamma_binary_simplified}
\end{align}
which is the desired result.
 \section{Proof of Theorem \ref{Thm:Lower_BOund_Gamma}} \label{Appendix:Thm_Gamma_LB}
Recall that the optimization problem in Theorem~\ref{thm:optimization_formulation_gamma} is equivalent to \eqref{eq:Optimization_Gamma_binary_simplified}. 
Let $h_1(p;\alpha,\delta,\eps)$ indicate the objective function in \eqref{eq:Optimization_Gamma_binary_simplified}. One can verify that for $\alpha>1, \delta\in [0, 1)$ and $\eps>0$, the mapping  $p\mapsto h_1(p;\alpha,\delta,\eps)$ is convex. Therefore, the numerical result of $\gamma^\eps_\alpha(\delta)$ can be easily obtained for any given $\alpha,\delta$ and $\eps$.

To get closed-form expressions, we explore lower bounds of \eqref{eq:Optimization_Gamma_binary_simplified} as follows.\\
\textbf{Lower bound 1:} Ignoring the second term in  $h_1(p;\alpha,\delta,\eps)$, we obtain  
\begin{align}
	\label{eq:Optimization_Gamma_binary_simplified_LB1}
	e^{(\alpha-1)(\gamma^\eps_\alpha(\delta)-\eps)}\geq &\inf_{p\in(\delta,1)}~  p^\alpha (p-\delta)^{1-\alpha}
\end{align}
 We note that the objective function in \eqref{eq:Optimization_Gamma_binary_simplified_LB1} is convex in $p$, as it can be verified that $\frac{\partial^2}{\partial p^2}
	p^\alpha (p-\delta)^{1-\alpha}$ equals 
	$$(\alpha-1)\alpha\left(p^{\frac{\alpha}{2}}(p-\delta)^{\frac{-1-\alpha}{2}}-p^{\frac{\alpha-2}{2}}(p-\delta)^{\frac{1-\alpha}{2}}\right)^2\geq 0,$$
and therefore, by setting the first derivative to be $0$,
we obtain the optimal solution for the the corresponding unconstrained problem as $p^*=\alpha\delta$. Since $\alpha>1$, it follows that the optimal solution of \eqref{eq:Optimization_Gamma_binary_simplified_LB1} is given by $p^*=\min\{\alpha\delta, 1\}$, and therefore
\begin{align}
	e^{(\alpha-1)(\gamma^\eps_\alpha(\delta)-\eps)}\geq  &  \left(\delta\alpha^\alpha (\alpha-1)^{1-\alpha}\right){\textbf{1}\{\alpha\delta< 1\}}\nonumber \\\qquad &\qquad +\left((1-\delta)^{1-\alpha}\right){\textbf{1}\{\alpha\delta\geq  1\}}
\end{align}
with equality holds if and only if $\alpha\delta\geq 1$, where $\textbf{1}\{\cdot\}$ denotes the indicator function. 
Thus, if $\alpha\delta\geq 1$, we have 
$\gamma^\eps_\alpha(\delta)=\eps-\log(1-\delta), $
and if $\alpha\delta< 1$, we have the lower bound 
\begin{align}
  \gamma^\eps_\alpha(\delta)&\geq  \eps -\frac{1}{\alpha-1}\log\left(\frac{1}{\delta\alpha}\left(1-\frac{1}{\alpha}\right)^{\alpha-1}\right)\\
  &= \eps -\frac{1}{\alpha-1}\log\frac{\zeta_\alpha}{\delta}. \label{Proof_g}
\end{align}

\noindent\textbf{Lower bound 2:} To obtain the second lower bound, we note that the function $h_1(p;\alpha,\delta,\eps)$ is convex in $\delta$. This enables us to bound $h_1(p;\alpha,\delta,\eps)$ from below by using its linear approximation at $\delta=0$. Hence we can write
\begin{align*}
	h_1(p;\alpha,\delta,\eps)
	&\geq  h_1(p;\alpha,\delta=0,\eps) + \frac{\partial h_1(p;\alpha,\delta=0,\eps)}{\partial \delta}\delta\\
	&=  p + (\alpha-1)\delta +\left(\frac{1-p}{e^\eps-p}\right)^\alpha\\
	&\qquad \qquad \qquad \qquad\qquad\cdot \left(e^\eps-p-(\alpha-1)\delta\right),
\end{align*}
with equality if and only if $\delta=0$.
Therefore, we have
\begin{align}
	\label{eq:Optimization_Gamma_binary_simplified_LB2}
	e^{(\alpha-1)(\gamma^\eps_\alpha(\delta)-\eps)}&\geq \inf_{p\in(\delta,1) }~  \left(1-\left(\frac{1-p}{e^\eps-p}\right)^\alpha\right)p\\ & \quad +\left(\frac{1-p}{e^\eps-p}\right)^\alpha\left(e^\eps-(\alpha-1)\delta\right) + (\alpha-1)\delta\nonumber.
\end{align}
Let $h_2(p;\alpha,\delta,\eps)$ indicate the objective function of \eqref{eq:Optimization_Gamma_binary_simplified_LB2}.
In the following, we prove the monotonicity of $h_2(p;\alpha,\delta,\eps)$ in $p$ for $\alpha>1$, $1> \delta\geq 0$ and $\eps\geq 0$.
Taking the first derivative of $h_2(p;\alpha,\delta,\eps)$ with respect to $p$, we have
\begin{align}
	\frac{\partial \,h_2(p;\alpha,\delta,\eps) }{\partial \, p}&=1+\left(\frac{1-p}{e^\eps-p}\right)^\alpha\nonumber\\ &\qquad~ \cdot\left(\frac{\alpha(e^\eps-1)(p+(\alpha-1)\delta-e^\eps)}{(e^\eps-p)(1-p)}-1\right)\nonumber\\
	&\eqqcolon h_3(p;\alpha,\delta,\eps)\nonumber\\
	\label{eq:Gamma_bd2_InPf-1}
	&\geq 1+\left(\frac{1-p}{e^\eps-p}\right)^\alpha
	 \left(-\frac{\alpha(e^\eps-1)}{1-p}-1\right)\\
	 & \eqqcolon h_4(p;\alpha,\eps)\nonumber\\
	 \label{eq:Gamma_bd2_InPf-2}
	 &>  h_4(p=\delta;\alpha,\eps)\\
	 &= \frac{(e^\eps-\delta)^{\alpha}-\bar\delta^\alpha-\alpha(e^\eps-1)\bar\delta^{\alpha-1}}{(e^\eps-\delta)^{\alpha}}\\
	 &\eqqcolon \frac{h_5(\delta,\alpha,\eps)}{(e^\eps-\delta)^{\alpha}} \\
	 \label{eq:Gamma_bd2_InPf-3}
	 &\geq   \frac{h_5(\delta,\alpha,\eps=0)}{(e^\eps-\delta)^{\alpha}} =  0,
\end{align}
where
\begin{itemize}
	\item the inequality in \eqref{eq:Gamma_bd2_InPf-1} follows from the fact that the function $h_3(p;\alpha,\delta,\eps)$ is increasing in $\delta$, and therefore, for $1> \delta\geq 0$, $ h_3(p;\alpha,\delta,\eps)\geq h_3(p;\alpha,\delta=0,\eps) = h_4(p;\alpha,\eps)$ 
	\item the inequality in \eqref{eq:Gamma_bd2_InPf-2} is due to the fact that the function $h_4(p;\alpha,\eps)$ is increasing in $p$ as shown below  	\begin{align*}
		\frac{\partial\, h_4(p;\alpha,\eps)}{\partial p}& = \alpha(\alpha-1)(e^\eps-1)^2\bar p^{\alpha-2}(e^\eps-p)^{-\alpha-1}>0
	\end{align*}
and therefore, for $ p\in (\delta,1)$, $h_4(p;\alpha,\eps)>h_4(p=\delta;\alpha,\eps)$.
\item the inequality in \eqref{eq:Gamma_bd2_InPf-3} is from the monotonicity of the function $h_5(\delta,\alpha,\eps)$ in $\eps$. Specifically,
\begin{align*}
	\frac{\partial \, h_5(\delta,\alpha,\eps)}{\partial\, \eps} =\alpha e^\eps\left((e^\eps-\delta)^{\alpha-1} -(1-\delta)^{\alpha-1}\right)\geq 0
\end{align*}
and thus, for $\eps\geq 0$, $h_5(\delta,\alpha,\eps)\geq h_5(\delta,\alpha,\eps=0)=0$.
\end{itemize}
Therefore, the objective function $h_2(p;\alpha,\delta,\eps)$ in \eqref{eq:Optimization_Gamma_binary_simplified_LB2} is increasing in $p$, and therefore, we have
\begin{align}
		e^{(\alpha-1)(\gamma^\eps_\alpha(\delta)-\eps)}&\geq  h_2(p=\delta;\alpha,\delta,\eps)\\
		&=\alpha\delta+ \left(\frac{1-\delta}{e^\eps-\delta}\right)^\alpha\left(e^\eps-\alpha\delta\right) 
\end{align}
with equality if and only if $\delta=0$. Thus, we have 
\begin{equation}\label{Proof_f}
    \gamma^\eps_\alpha(\delta)\geq \eps+\frac{1}{\alpha-1}\log \left(\alpha\delta+ \left(\frac{1-\delta}{e^\eps-\delta}\right)^\alpha\left(e^\eps-\alpha\delta\right) \right)
\end{equation}
where the equality holds if and only if $\delta=0$ which leads to $\gamma^\eps_\alpha(\delta=0)=0$. The lower bounds \eqref{Proof_g} and \eqref{Proof_f} give the desired result. 

\section{Proof of Lemma \ref{Lemma_epsilon_Approximate}}\label{Appendix:Lemma_epsilon_Approximate}
From the first part of the proof of   Theorem~\ref{Thm:Lower_BOund_Gamma}, we have
    \begin{align}
		\eps^\delta_\alpha(\gamma)\begin{cases}
			\leq \big(\gamma-\frac{1}{\alpha-1}\log\frac{\delta}{\zeta_\alpha}\big)_{+}
			, & {\rm if  } \, \alpha\delta\leq 1\\
			= \big(\gamma +\log(1-\delta) \big)_{+}& {\rm otherwise}.
		\end{cases}
	\end{align}

    
    Next, we obtain a closed-form upper bound on $\eps^\delta_\alpha(\gamma)$ from the function $f(\alpha,\eps,\delta)$ in Theorem \ref{Thm:Lower_BOund_Gamma}. To do so, let $f_1(\alpha,\eps,\delta)$ be the expression inside the logarithm in $f(\alpha,\eps,\delta)$, i.e.,  $f_1(\alpha,\eps,\delta)\coloneqq \left(e^{\eps }-\alpha  \delta \right) \left(\frac{\delta -1}{\delta -e^{\eps }}\right)^{\alpha }+\alpha  \delta$. The second partial derivative of $f_1(\delta,\alpha,\eps)$ with respect to $\delta$ is given by
	\begin{align*}
		(\alpha-1) \alpha \left(e^\eps-1\right)\bar\delta^\alpha \frac{\left(e^\eps \left(1-2 \delta+e^\eps\right)-\alpha \delta \left(e^\eps-1\right)\right)}{\bar\delta^2(e^\eps-\delta)^\alpha \left(\delta-e^\eps\right)^2}.
	\end{align*}
	Therefore, for $\alpha>1$, $\eps\geq 0$ and $\delta\in (0,1)$, the convexity of $f_1(\delta,\alpha,\eps)$ in $\delta$ is guaranteed by 
	\begin{align}
		\label{eq:Gamma_BD2App_condition}
		\delta-  \frac{e^{\eps}(e^\eps+1)}{2e^{\eps}+\alpha(e^{\eps}-1)}\leq 0 .
	\end{align}
	Let $f_2(\alpha,\eps)\coloneqq  \frac{e^{\eps}(e^\eps+1)}{2e^{\eps}+\alpha(e^{\eps}-1)} $, and therefore, if $\delta- f_2(\alpha,\eps)\leq 0$, we have
	\begin{align}
    \gamma^\eps_\alpha(\delta)
    &\geq f(\alpha,\eps,\delta)=\eps +\frac{1}{\alpha-1}\log\left(f_1(\alpha,\eps,\delta)\right)\\
		&\geq  \eps +\frac{1}{\alpha-1}\log\left(f_1(\alpha,\eps,\delta=0)+ \frac{\partial \, f_1(\delta=0,\alpha,\eps)}{\partial\, \delta}\delta\right)\nonumber\\
		&=\eps +\frac{1}{\alpha-1}\log\left(e^{-\eps(\alpha-1) }+\alpha\delta -\alpha\delta e^{-\eps(\alpha-1)}\right)\label{eq:eps_bound_inPf0},
	\end{align}
	with equality if and only if $\delta=0$. 
	In the following, we prove that $\delta \leq \frac{1}{\alpha}$ is a sufficient condition for $\delta- f_2(\alpha,\eps)\leq 0$ by showing that $f_2(\alpha,\eps)>1/\alpha$ for any $\alpha>1$. Taking the first partial derivative of  $f_2(\alpha,\eps)$ with respect to $\eps$, we have
	\begin{align}
		\frac{\partial \, f_2(\alpha,\eps)}{\partial\, \eps}&= \frac{e^\eps ((2+\alpha)e^{2\eps}-2\alpha e^{2\eps}-\alpha)}{(2e^\eps+\alpha(e^\eps-1))^2}\\
		&\begin{cases}
			\leq  0, & 1\leq e^\eps \leq \frac{\alpha+\sqrt{2\alpha(\alpha+1)}}{2+\alpha} \\
			> 0, & {\rm otherwise},
		\end{cases}
	\end{align}
	and therefore, 
	\begin{align}
		f_2(\alpha,\eps)-\frac{1}{\alpha}&\geq f_2\left(\alpha,\eps=\log \frac{\alpha+\sqrt{2\alpha(\alpha+1)}}{2+\alpha} \right)-\frac{1}{\alpha}\nonumber\\
		&=  \frac{2(\alpha^2+\alpha(\sqrt{2\alpha(\alpha+1)}-1)-2)}{\alpha(2+\alpha)^2}\\
		&\eqqcolon \frac{f_3(\alpha)}{\alpha(2+\alpha)^2}\\
		\label{eq:epsilon_bd2_InPf}
		 &>  \frac{f_3(1)}{\alpha(2+\alpha)^2}=0,
	\end{align}
	where the inequality in \eqref{eq:epsilon_bd2_InPf} follows from the fact that $f_3(\alpha)$ is monotonically increasing in $\alpha>1$ as shown below: 
	\begin{align}
		\frac{{\rm d} f_3(\alpha)}{{\rm d}\alpha}=\frac{\sqrt{2}\alpha(1+2\alpha)}{\sqrt{\alpha(1+\alpha)}}+2\sqrt{2\alpha(1+\alpha)}+4\alpha-2>0.
	\end{align}
	Therefore, from the inequality in \eqref{eq:eps_bound_inPf0}, we have that for $\delta\leq 1/\alpha$,
	\begin{align}
		\eps^\delta_\alpha(\gamma) \leq & \frac{1}{\alpha-1}\log\left(\frac{e^{(\alpha-1)\gamma}-1}{\alpha\delta}+1\right)
	\end{align}
	and equality holds if and only if $\gamma=0$, i.e., $\eps^\delta_\alpha(\gamma=0)=0$.

\section{Derivation of Remark~\ref{remark_Zero_Eps} }\label{Appendix_Remark-eps=0}
Note that it can be verified that $\gamma-\frac{1}{\alpha-1}\log\frac{\delta}{\zeta_\alpha}<0$ for $\delta>\zeta_\alpha e^{(\alpha-1)\gamma}$. Combined with $\alpha\delta\leq 1$, we therefore have  $\eps_\alpha^\delta(\gamma) = 0$ for   $\delta\in[\zeta_\alpha e^{(\alpha-1)\gamma},\frac{1}{\alpha}]$. To have a valid non-empty interval, we must have 
the condition $\zeta_\alpha e^{(\alpha-1)\gamma}<\frac{1}{\alpha}$ that is simplified to $1-e^{-\gamma}\leq \frac{1}{\alpha}$. 
A similar holds for the case $\alpha\delta>1$: we have $\gamma +\log(1-\delta)<0$ if $\delta>1-e^{-\gamma}$. Hence, $\eps^\delta_\alpha(\gamma)=0$ if $\delta>\max\{1-e^{-\gamma},1/\alpha\}$. 
    
\section{Proof of Lemma \ref{Lemma:Bound_on_Eps_MA}}
Recall that for the $T$-fold composition of Gaussian mechanism with variance $\sigma^2$, we have  $\gamma(\alpha)=\alpha \rho T$ where $\rho=1/{\sigma^2}$.
From Lemma \ref{Lemma_epsilon_Approximate}, we have that for $\alpha\delta\geq 1$ and $0<\delta< 1$,
\begin{align}
    \eps_\alpha^{\delta}(\rho\alpha T)=\left(\rho\alpha T +\log(1-\delta)\right)_{+}
\end{align}
and therefore, 
\begin{align}
    \eps^\delta(\rho, T) &= \inf_{\alpha>1}\eps_\alpha^{\delta}(\rho\alpha T)\\
    &\leq \inf_{\alpha\geq \frac{1}{\delta}}\left(\rho\alpha T +\log(1-\delta)\right)_{+}\\
    & = \left(\frac{\rho T}{\delta} +\log(1-\delta)\right)_{+}\label{eq:GaussianComp_InPf_alphadelta>1}.
\end{align}
In addition, from  Lemma \ref{Lemma_epsilon_Approximate}, we have that for $0<\alpha\delta< 1$,
	\begin{align*}
	    \eps^\delta_\alpha(\alpha\rho T)&\leq \min\Big\{\Big(\alpha\rho T-\frac{1}{\alpha-1}\log\frac{\delta}{\zeta_\alpha}\Big)_{+}\\ &\qquad\qquad \qquad  ,\frac{1}{\alpha-1}\log\Big(\frac{(\alpha-1)\chi(\alpha\rho T)}{\alpha\delta}+1\Big)  \Big\},
	\end{align*}
where $\chi(\alpha\rho T)=\frac{e^{\rho\alpha(\alpha-1)T}-1}{\alpha-1}$, and therefore, 
\begin{align}
    \eps^\delta(\rho, T) =& \inf_{\alpha>1}\eps_\alpha^{\delta}(\rho\alpha T)\nonumber\\
    \leq &\inf_{1<\alpha< \frac{1}{\delta}}\min\Big\{\Big(\alpha\rho T-\frac{1}{\alpha-1}\log\frac{\delta}{\zeta_\alpha}\Big)_+\label{eq:GaussianComp_InPf_alphadelta<1}\\ &\qquad \qquad~~ ,\frac{1}{\alpha-1}\log\Big(1+\frac{e^{\rho\alpha(\alpha-1)T}-1}{\alpha\delta}\Big)  \Big\}.\nonumber
\end{align}

Combining the two inequalities in \eqref{eq:GaussianComp_InPf_alphadelta>1} and \eqref{eq:GaussianComp_InPf_alphadelta<1}, we obtain the upper bound of $\eps^\delta(\rho, T)$ in Lemma \ref{Lemma:Bound_on_Eps_MA}.

\section{Proof of Theorem \ref{Theorem_noise}}
Lemma \ref{Lemma:Bound_on_Eps_MA} illustrates that the $T$-fold adaptive homogeneous composition of the Gaussian mechanism with variance $\sigma^2$ is $(\eps, \delta)$-DP where 
\begin{align} \label{Variance_1}
  \eps =  \inf_{1<\alpha\leq  \frac{1}{\delta}}\frac{\alpha T}{2\sigma^2}- \frac{1}{\alpha-1}\log\frac{\delta}{\zeta_\alpha}.
\end{align}
Assuming that $\frac{2\log \delta^{-1}}{\eps}\leq \frac{1}{\delta}$, or equivalently $\eps\geq 2\delta\log \delta^{-1}$, we can plug $\alpha = \frac{2\log \delta^{-1}}{\eps}$ in the above  expression to derive the following lower lower bound for $\sigma^2$ 
\begin{align}
    & \frac{ (\eps-2 \log\frac{1}{\delta})T\log\frac{1}{\delta}}{\eps^2 \left(\eps-\log\frac{1}{\delta}+\frac{-\eps+2 \log\frac{1}{\delta}}{\eps} \log \left(\frac{-\eps+2 \log\frac{1}{\delta}}{2 \log\frac{1}{\delta}}\right)- \log \left(\frac{2 \log\frac{1}{\delta}}{\eps}\right)\right)}\label{eq:sigma_bd_InPf2}\\
    & =  \frac{2T \log\frac{1}{\delta} }{\eps^2}+\frac{T}{\eps}-\frac{2T \left( \log \left(2 \log\frac{1}{\delta}\right)+1-\log \eps\right)}{\eps^2}+\frac{T}{2\eps^2 \log\frac{1}{\delta}}\nonumber\\
    &\cdot\Big[4 \log ^2A+(8-6 \eps) \log A+2 \eps^2-5 \eps+4\Big]+O\Big(\frac{1}{\log^2\frac{1}{\delta}}\Big)\label{eq:sigma_bd_InPf3}\\
    &=  \frac{2T}{\eps^2}\log\frac{1}{\delta} + \frac{T}{\eps} -\frac{2T}{\eps^2}\left(\log(2\log\delta^{-1})+1-\log\eps\right)\nonumber\\ 
    &\quad+ O\left(\frac{\log^2(\log\delta^{-1})}{\log\delta^{-1}}\right) \label{eq:sigma_bd_InPf4}.
\end{align}
where 
\begin{itemize}
    \item the expression in \eqref{eq:sigma_bd_InPf3} is the Taylor expansion of \eqref{eq:sigma_bd_InPf2} at $\delta=0$ and $A\coloneqq \frac{\log\frac{1}{\delta^2}}{\eps}$, 
    \item in \eqref{eq:sigma_bd_InPf3} as $\delta\to 0$, we have $\log \delta^{-1}\to \infty$, therefore, for any fixed finite $\eps$ and  $T$, the fourth term is of order 
    $O\left(\frac{\log^2(\log\delta^{-1})}{\log\delta^{-1}}\right)$ and dominates $O\left(\frac{1}{\log^2\delta^{-1}}\right)$. 
\end{itemize}
It is worth mentioning that this choice of $\alpha$ has already appeared in literature, see e.g., \cite[Discussion after Thm 35]{Feldman2018PrivacyAB}.

\section{Proof of Proposition~\ref{Prop:Polyankiy_RDP}}
\label{Appendix:Polyankiy_RDP}
Recall that $\calM_d$ and $\calM_{d'}$ are the output distributions of mechanism $\calM$ when running on two neighboring datasets $d$ and $d'$, respectively. 
For notational simplicity, let $P$ and $Q$ denote $\calM_d$ and $\calM_{d'}$, respectively and also $\beta(\tau)$ denote $\beta_\calM^{dd'}(\tau)$.
We wish to prove a more general result than Proposition~\ref{Prop:Polyankiy_RDP}: For any convex real-valued function $f$ with $f(1) = 0$, we show 
\begin{equation}\label{Eq:f_DP_beta}
    D_f(P\|Q) = \int_{0}^1|\beta'(\tau)|f\left(\frac{1}{|\beta'(\tau)|}\right)\text{d}\tau,
\end{equation}
where $\beta'(\tau) \coloneqq \frac{\text{d}}{\text{d}\tau}\beta(\tau)$. 
For a given $\lambda\geq 0$, define  
\begin{eqnarray}
\tau_\lambda\coloneqq P\Big(\frac{\text{d}P}{\text{d}Q}\leq \lambda\Big) = \int \textbf{1}\Big\{\frac{\textnormal{d}P}{\textnormal{d}Q}\leq \lambda\Big\}\text{d}P,
\end{eqnarray}
where, as before, $\textbf{1}\{\cdot\}$ denotes the indicator function.
Then, since $\tau\mapsto \beta(\tau)$ specifies the optimal tradeoff between type I and type II error probabilities of testing $P$ against $Q$, we have from Neyman-Pearson lemma that
\begin{eqnarray}
\beta(\tau_\lambda) =  Q\Big(\frac{\text{d}P}{\text{d}Q}> \lambda\Big).
\end{eqnarray}

Before we begin the proof of \eqref{Eq:f_DP_beta}, we need the following fact that will be needed later. 

\noindent\textbf{Fact.}
We have \begin{equation}\label{Derivative_tau_beta}
    \frac{\text{d}}{\text{d}\lambda}\tau_\lambda = -\lambda \frac{\text{d}}{\text{d}\lambda}\beta(\tau_\lambda).
\end{equation}
We prove this fact as follows:
\begin{align}
    \bar\tau_\lambda - \lambda \beta(\tau_\lambda)  
  &= 1-\int \left[\frac{\textnormal{d}P}{\textnormal{d}Q}\textbf{1}\Big\{\frac{\textnormal{d}P}{\textnormal{d}Q}\leq \lambda\Big\} + \lambda \textbf{1}\Big\{\frac{\textnormal{d}P}{\textnormal{d}Q}> \lambda\Big\}\right]\text{d}Q\label{LHS_E_Gamma}\\
  & = 1- \int\min\left\{\frac{\textnormal{d}P}{\textnormal{d}Q}, \lambda\right\}\textnormal{d}Q\\
  & = 1- \int_{0}^1 P\left[\frac{\text{d}Q}{\text{d}P}\geq \frac{t}{\lambda}\right]\text{d}t \label{Int_Expectation}\\
   & =\lambda\int_{\lambda}^\infty \frac{1}{t^2}P\left[\frac{\text{d}P}{\text{d}Q}> t\right]\text{d}t \label{Equivalent_E_Lambda}
\end{align}
where equality in \eqref{Int_Expectation} comes from the formula that $\mathbb E[U] = \int\Pr(U\geq t)\text{d}t$ for any non-negative random variable $U$. We can hence write 
\begin{equation}
    \frac{1-\tau_\lambda - \lambda\beta(\tau_\lambda)}{\lambda} = \int_{\lambda}^\infty \frac{1}{t^2}P\left[\frac{\text{d}P}{\text{d}Q}> t\right]\text{d}t.
\end{equation}
Taking a derivative, with respect to $\lambda$, of both sides of this identity, we obtain the desired result \eqref{Derivative_tau_beta}. It is worth noting that if we consider $\sE_\lambda$-divergence for any non-negative $\lambda$ (rather than $\lambda\geq 1$), then the left-hand side of \eqref{LHS_E_Gamma} is in fact equal to $\sE_\lambda(P\|Q)$, because it can be easily verified that 
\begin{align*}
    \sE_\lambda(P\|Q) & = \sup_{A\subset \mathbb X} P(A) - \lambda Q(A)
    = \bar\tau_\lambda - \lambda \beta(\tau_\lambda).
\end{align*}
Hence, \eqref{Equivalent_E_Lambda} gives an equivalent formula for $\sE_\lambda$-divergence for $\lambda\geq 0$. 
\begin{proof}[Proof of \eqref{Eq:f_DP_beta}]
We have
\begin{align}
    D_f(P\|Q) 
    & = \int f\left(\frac{\text{d}P}{\text{d}Q}\right)\frac{\text{d}Q}{\text{d}P}\text{d}P = \int_0^\infty f(t)\frac{1}{t}\text{d}\tau_t\\
    & = -\int_0^\infty f\Big(-\frac{\text{d}\tau_t/\text{d}t}{\text{d}\beta(\tau_t)/\text{d}t}\Big)\frac{\text{d}\beta(\tau_t)/\text{d}t}{\text{d}\tau_t/\text{d}t}\text{d}\tau_t\label{change_mesure}\\
    & = -\int_0^1 f\Big(-\frac{1}{\beta'(\tau)}\Big)\beta'(\tau)\text{d}\tau
\end{align}
where the equality in \eqref{change_mesure} follows from \eqref{Derivative_tau_beta}. The desired result follows by noticing that $\tau\mapsto \beta(\tau)$ is decreasing and hence $\beta'(\tau)\leq 0$ implying that $-\beta'(\tau)  = |\beta'(\tau)|$.
\end{proof}
Plugging $f(t)= \frac{t^\alpha-1}{\alpha-1}$ for some $\alpha>1$ into \eqref{Eq:f_DP_beta}, we obtain 
\begin{equation}
    \chi^\alpha(P\|Q) = \frac{1}{\alpha-1}\left[-\beta(0) + \int_{0}^1 |\beta'(\tau)|^{1-\alpha}\text{d}\tau  \right],
\end{equation}
implying 
\begin{align*}
    D_\alpha(P\|Q) 
    & = \frac{1}{\alpha-1}\log\Big(1-\beta(0)+\int_{0}^1 (|\beta'(\tau)|)^{1-\alpha}\text{d}\tau  \Big).
\end{align*}

\end{document}